\documentclass[envcountsect,envcountsame]{llncs}
\makeatletter

\makeatother

\newcommand{\short}[1]{}
\newcommand{\full}[1]{#1}


\usepackage{amsthm}
\usepackage{makeidx}
\usepackage{ifthen}
\usepackage{stmaryrd}
\newboolean{showproofs} 
\usepackage{wrapfig}

\usepackage[english]{babel}
\usepackage[cmex10]{amsmath}
\usepackage{amsfonts}
\usepackage{amssymb}
\usepackage{stmaryrd}
\usepackage{multirow}
\short{\usepackage{mathpartir}}
\usepackage{color}
\usepackage{graphicx}
\definecolor{dkblue}{rgb}{0,0.1,0.6}
\definecolor{dkgreen}{rgb}{0,0.35,0}
\definecolor{dkviolet}{rgb}{0.3,0,0.5}
\definecolor{dkred}{rgb}{0.5,0,0}
\usepackage{listings}
\usepackage{lstnt}



\setboolean{showproofs}{true} 

\usepackage{hyperref}
\usepackage[latin1]{inputenc}
\usepackage{enumerate}
\usepackage{newlfont}
\usepackage{bbold}
\usepackage{graphicx} 
\usepackage{microtype} 
\usepackage{placeins} 
\usepackage{tikz}
\usepackage{amsmath}
\usepackage{amssymb,latexsym}
\usepackage{mathrsfs}
\usepackage{amsthm}
\usepackage{times}
\usepackage{listings}
\usepackage[capitalize]{cleveref}
\usepackage{aliascnt}
\lstset{ %
language=C++,                
basicstyle=\footnotesize,       
numbers=left,                   
numberstyle=\footnotesize,      
stepnumber=0,                   
numbersep=5pt,                  
backgroundcolor=\color{white},  
showspaces=false,               
showstringspaces=false,         
showtabs=false,                 
frame=single,           
tabsize=2,          
captionpos=b,           
breaklines=true,        
breakatwhitespace=false,    
escapeinside={\%*}{*)}          
}
\usepackage[colorinlistoftodos, textsize=footnotesize,color=orange!70]{todonotes}
\usepackage[arrow,matrix,frame,curve,cmtip]{xy}\CompileMatrices
\usepackage{algorithm}
\usepackage{algpseudocode}

\usetikzlibrary{shapes,arrows} 
\usetikzlibrary{positioning}
\usetikzlibrary{chains}
\usetikzlibrary{automata}
\usetikzlibrary{patterns}
\usetikzlibrary{decorations.pathreplacing}
\usetikzlibrary{calc,decorations.markings,decorations.pathreplacing,fit,backgrounds,shapes.symbols,shapes.geometric}
\tikzset{
  gnode/.style={draw,shape=circle,inner sep=0,minimum height=.2cm,
    minimum width=.2cm},
  hyperedge/.style={shape=rectangle,draw,inner sep=0,minimum width=.6cm,
    minimum height=.4cm}
}
 
\tikzset{every fit/.style={shape=rectangle,inner sep=5pt}}

\tikzset{
  mono/.style={>->},  
  ontop/.style={preaction={draw,-,line width=3pt,white}},
  arlab/.style={circle,inner sep=1pt,font=\scriptsize}
}


\renewcommand{\epsilon}{\varepsilon}
\newcommand{\str}[1]{\txt{\scriptsize #1}}

\newcommand{\precsimu}{\raisebox{-3pt}{$\,\stackrel{\normalsize
      \sqsubset}{\footnotesize \sim}\,$}}
\newcommand{\beq}{b_1}
\newcommand{\bin}{b_2}
\newcommand{\HKPA}{\texttt{HKP}_{\hspace{-0.1cm}\mathcal{A}}}
\newcommand{\HKA}{\texttt{ABK}}





\usepackage{hyperref}
\usepackage{ifthen}
\usepackage{rotating}

\renewcommand{\bigsqcap}{\rotatebox[origin=c]{180}{\ensuremath{\bigsqcup}}}

\renewcommand{\phi}{\varphi}
\renewcommand{\epsilon}{\varepsilon}









\usepackage{thmtools}
\declaretheorem[
    name=Theorem,
    numberwithin=section]{thm}
\Crefname{thm}{Theorem}{Theorems}
\declaretheorem[
    name=Lemma,
    sibling=thm]{lem}
\Crefname{lem}{Lemma}{Lemmas}
\declaretheorem[
    name=Definition,
    sibling=thm]{defi}
\Crefname{defi}{Definition}{Definitions}
\declaretheorem[
    name=Corollary,
    sibling=thm]{cor}
\Crefname{cor}{Corollary}{Corollaries}
\declaretheorem[
    name=Proposition,
    sibling=thm]{prop}
\Crefname{prop}{Proposition}{Propositions}

\Crefname{algo}{Algorithm}{Algorithms}

\Crefname{rem}{Remark}{Remarks}
\declaretheorem[
    name=Example,
    sibling=thm]{ex}
\Crefname{ex}{Example}{Examples}

\usepackage{enumitem}
\newlist{thmlist}{enumerate}{1}
\setlist[thmlist]{label=(\roman{thmlisti}), ref=\thethm.(\roman{thmlisti}),noitemsep}
\newlist{lemlist}{enumerate}{1}
\setlist[lemlist]{label=(\roman{lemlisti}), ref=\thethm.(\roman{lemlisti}),noitemsep}

\Crefname{listlem}{Lemma}{Lemmas}
\Crefname{listthm}{Theorem}{Theorems}
\Crefname{thmlisti}{Theorem}{Theorems}
\Crefname{lemlisti}{Lemma}{Lemmas}
\creflabelformat{listthm}{#2\thethm.#1#3}
\creflabelformat{listlem}{#2\thethm.#1#3}

\author{Filippo Bonchi\inst{1} \and Barbara König\inst{2} \and Sebastian Küpper\inst{2}}
\institute{ENS Lyon, France\and Universität Duisburg-Essen, Germany 
}\title{Up-To Techniques for Weighted Systems\full{\\ (Extended Version)}\thanks{Research
    partially supported by DFG project BEMEGA and ANR-16-CE25-0011 REPAS.}}

\ifthenelse{\boolean{showproofs}}{}{
	\usepackage{environ}
	\NewEnviron{killcontents}{}

}

\full{\renewcommand{\todo}[1]{}}

\pagestyle{plain}

\begin{document}

\maketitle


\begin{abstract}
  We show how up-to techniques for (bi-)similarity can be used in the
  setting of weighted systems. The problems we consider are language
  equivalence, language inclusion and the threshold problem (also
  known as universality problem) for weighted automata.
  We build a bisimulation relation on the fly and work up-to
  congruence and up-to similarity. This requires to determine whether
  a pair of vectors (over a semiring) is in the congruence closure of
  a given relation of vectors. This problem is considered for rings
  and $l$-monoids, for the latter we provide a rewriting algorithm and
  show its confluence and termination.
  We then explain how to apply these up-to techniques to weighted
  automata and provide runtime results.
\end{abstract}

\section{Introduction}
\label{sec:intro}

Language equivalence of deterministic automata can be checked by means
of the bisimulation proof principle.  For non-deterministic automata,
this principle is sound but not complete: to use bisimulation, one
first has to determinize the automaton, via the so-called powerset
construction. Since the determinized automaton might be much larger
than the original non-deterministic one, several algorithms
\cite{CAV06,DoyenR10,achmv:simulation-meets-antichains,bp:checking-nfa-equiv}
have been proposed to perform the determization on the fly and to
avoid exploring a huge portion of states. Among these, the algorithm
in \cite{bp:checking-nfa-equiv} that exploits \emph{up-to techniques}
is particularly relevant for our work.

Up-to techniques have been introduced by Robin Milner in his seminal work on CCS \cite{Milner89} and, since then, they proved useful, if not essential, in numerous proofs about concurrent systems (see \cite{PS11} for a list of references). According to the standard definition a relation $R$ is a bisimulation whenever two states $x,y$ in $R$  can simulate each
other, resulting in a pair $x',y'$ that is still in $R$. An up-to
technique allows to replace the latter $R$ by a larger relation
$f(R)$ which contains more pairs and hence allows to cut off
bisimulation proofs and work with much smaller relations.

%
%
%
%

Here we focus on up-to techniques in a quantitative setting: weighted
systems, especially weighted automata over arbitrary semirings. Some
examples of up-to techniques for weighted systems already appeared
in \cite{BonchiPPR14} and \cite{rbbrps:coalgebraic-up-to}, that study up-to techniques
from the abstract perspective of coalgebras.

\medskip

Although up-to techniques for weighted systems have already received
some attentions, their relevance for algorithms to perform behavioural analysis has never
been studied properly.  This is the main aim of our paper: we give a
uniform class of algorithms exploiting up-to techniques to solve the
problems of equivalence, inclusion and universality, which, in the
weighted setting, asks whether the weight of all words is below some
given threshold. In particular we show how to implement these
techniques and we perform runtime experiments.

\medskip

The key ingredient to algorithmically exploit up-to techniques is a
procedure to decide, given $x,y,R$ as above, whether $x, y$ belongs to
$f(R)$.  For a non-deterministic automaton (NFA) with state space $S$, the
algorithm in \cite{bp:checking-nfa-equiv} uses as sub-routine a
rewriting system to check whether two sets of states
$S,S' \in \mathcal{P}(X)$ -- representing states of the determinised
automaton -- belong to $c(R)$, the congruence closure of $R$.

For NFA, the congruence closure is taken with
respect to the structure of join semi-lattices
$(\mathcal{P}(X),\cup, \emptyset)$, carried by the state space of a
determinized automaton. For weighted automata, rather than join
semi-lattices, we need to consider the congruence closure for
\emph{semimodules} (which resemble vector spaces, but are defined over
semirings instead of fields). Indeed, an analogon of the powerset
construction for weighted automata results in a sort of ``determinised
automaton'' (called in \cite{BonchiBBRS12} linear weighted automaton)
whose states 
are vectors with values in the underlying semiring.

Our first issue is to find a procedure to check whether two vectors belong to the congruence closure (with respect to semimodules) of a given relation. We
face this problem for different semirings, especially rings and
$l$-monoids. For $l$-monoids we adapt the rewriting procedure for the
non-deterministic case \cite{bp:checking-nfa-equiv} and show its confluence and
termination, which guarantees a unique normal form as a representative
for each equivalence class. Confluence holds in general and
termination can be shown for certain semirings, such as the tropical
semiring (also known as the $(\min,+)$-semiring).

Reasoning up-to congruence is sound for language equivalence, but not
for inclusion. For the latter, we need the precongruence closure that,
in the case of $l$-monoids, can be checked with a simple modification
of the rewriting procedure. Inspired by
\cite{achmv:simulation-meets-antichains}, we further combine this
technique with a certain notion of weighted \emph{similarity}, a
preorder that entails language inclusion and can be computed in
polynomial time.

We then show how to apply our up-to techniques to language equivalence
and inclusion checks for weighted automata. For some interesting
semirings, such as the tropical semiring, these problems are known to
be undecidable \cite{Krob94theequality}. But based on the inclusion
algorithm we can develop an algorithm which solves the universality
(also called threshold) problem for the tropical semiring over the
natural numbers.  This problem is known to be
$\mathsf{PSPACE}$-complete and we give detailed runtime results that
compare our up-to threshold algorithm with one previously introduced
in \cite{Alma11whatsdecidable}.

\section{Preliminaries}

In this section we recall all the algebraic structures we intend to
work with and, in particular, spaces of vectors over these structures.

A \emph{semiring} is a tuple $\mathbb S=(S,+, \cdot, 0, 1)$ where
$(S,+,0)$ is a commutative monoid, $(S,\cdot,1)$ is a monoid,
\emph{$0$ annihilates $\cdot$} (i.e., $0\cdot s_1=0 = s_1\cdot 0$) and
\emph{$\cdot$ distributes over $+$} (i.e.,
$(s_1+ s_2)\cdot s_3=s_1\cdot s_3+ s_2\cdot s_3$ and
$s_3\cdot (s_1+ s_2)=s_3\cdot s_1+ s_3\cdot s_2$ ).  A \emph{ring} is
a semiring equipped with inverses for $+$.

Let $(L,\sqsubseteq)$ be a partially ordered set.  If for all pairs of
elements $\ell_1,\ell_2\in L$ the infimum $\ell_1 \sqcap \ell_2$ and
the supremum $\ell_1 \sqcup \ell_2$ exist (wrt.\ the order
$\sqsubseteq$), it is a \emph{lattice}. If
$(\ell_1\sqcup \ell_2)\sqcap \ell_3=(\ell_1\sqcap \ell_3)\sqcup
(\ell_2\sqcap \ell_3)$
for all $\ell_1,\ell_2,\ell_3\in L$, it is called
\emph{distributive}. It is \emph{complete} if suprema and
infima of arbitrary subsets exist.  Every complete distributive
lattice is a semiring $(L,\sqcup,\sqcap,\bot,\top)$, where $\bot$,
$\top$ are the infimum and supremum of $L$.

Let $(L,\sqsubseteq)$ be a lattice and $(L,\cdot,1)$ be a monoid.  If
$\cdot$ distributes over $\sqcup$, we call $(L,\sqcup, \cdot)$ an
$l$-monoid.  Moreover, if $L$ has a $\bot$-element $0$ that
annihilates $\cdot$, we call $(L,\sqcup,\cdot)$ bounded. and it is
then a semiring $(L,\sqcup,\cdot,0,1)$
It is called \emph{completely distributive} if $(L,\sqsubseteq)$ is
complete and multiplication distributes over arbitrary suprema.
Observe that every completely distributive $l$-monoid
is 
bounded.\footnote{Completely distributive $l$-monoids are often
  referred to as \emph{unital quantales}.}  It is called
\emph{integral} 
if $\top=1$.

\begin{ex}
  \label{ex:lmonoid}
  The tropical semiring is the structure
  $\mathbb T = (\mathbb R_0^+\cup\{\infty\}, \min, +,\infty,
  0)$.\footnote{We will sometimes use $\min$ as an infix operator (i.e.,
    $a\min\ b$).}  $\mathbb T$ is a distributive $l$-monoid for the
  lattice $(\mathbb R_0^+\cup\{\infty\}, \ge)$.  

  Another example for a distributive $l$-monoid is
  $\mathbb M = ([0,1], \max, \cdot, 0,1)$, which is based on the
  lattice $([0,1],\le)$.


  The $l$-monoid $\mathbb M$ is isomorphic to $\mathbb T$ via the
  isomorphism $\phi\colon \mathbb T \to \mathbb M, x\mapsto 2^{-x}$.
\end{ex}

Hereafter, we will sometimes identify the semiring $\mathbb S$ with
the underlying set $S$. For the sake of readability, we will only
consider commutative semirings
, i.e., semirings
where multiplication is commutative.

For a semiring $\mathbb S$ and a finite set $X$, an
\emph{$\mathbb S$-vector of dimension $X$} is a mapping
$v\colon X\rightarrow \mathbb S$. The set of all such vectors is
denoted by ${\mathbb S}^X$ and is called a \emph{semimodule}.

For notational convenience, we assume that $X=\{1,2,\ldots, |X|\}$ and
we write a vector $v$ as a column vector.
For $X$ and $Y$ finite sets, an \emph{$\mathbb S$-matrix of dimension
  $X\times Y$} is a mapping $M \colon X\times Y\rightarrow\mathbb S$.
The set of all such matrices is denoted by ${\mathbb S}^{X\times Y}$.
$M[x,y]$ ($v[x]$) denotes the $(x,y)$-th entry of $M$ ($x$-th entry of $v$). Furthermore $v\cdot s$
denotes the multiplication of a vector with a scalar $s$ and $v_1+v_2$
is the componentwise addition.
  
Given a set $V$ of $\mathbb S$-vectors, a \emph{linear combination} of
vectors in $V$ is a vector $v_1\cdot s_1 + \dots + v_n\cdot s_n$,
where $v_1,\dots, v_n\in V$, $s_1,\dots,s_n\in \mathbb S$. A subset of
$\mathbb{S}^X$ that is closed under linear combinations is called a
\emph{(sub-)semimodule}.

Henceforward we will always require $l$-monoids to be completely
distributive: this ensures that we have a residuation operation
defined as follows.

\begin{defi}
  The \emph{residuation operation} for a completely distributive
  $l$-monoid $\mathbb L$ is defined for all
  $\ell_1,\ell_2\in\mathbb L$ as 
  $\ell_1\rightarrow \ell_2=\bigsqcup\{\ell\in\mathbb L\mid
  \ell_1\cdot \ell\sqsubseteq \ell_2\}$,
  also called \emph{residuum} of $\ell_1,\ell_2$.  We extend this to
	$\mathbb L$-vectors, replacing $\ell_1,\ell_2$ by
	$v_1,v_2\in {\mathbb L}^X$.
\end{defi}


\begin{ex} Recall 
  $\mathbb T$, $\mathbb M$ in \cref{ex:lmonoid}.  For
  $\ell_1,\ell_2\in\mathbb{T}$ we have
  $\ell_1\rightarrow \ell_2 = \min\{\ell\in \mathbb{R}_0^+\cup
  \{\infty\}\mid \ell_1 + \ell\geq \ell_2\} = \ell_2 \,\dot{-}
  \,\ell_1$
  (modified subtraction). For $\ell_1, \ell_2 \in \mathbb{M}$, we have
  $\ell_1\rightarrow \ell_2=\max\{\ell \in [0,1]\mid \ell_1\cdot
  \ell\leq \ell_2\}=\min\{1, \frac{\ell_2}{\ell_1}\}$.
	
  Another example where the residuation operation can be easily
  characterized is any boolean algebra $(\mathbb B, \vee,\wedge,0,1)$.
  For $\ell_1, \ell_2\in\mathbb B$ we have
  $\ell_1\rightarrow \ell_2=\neg \ell_1\vee \ell_2$.
\end{ex}

We will assume that all relevant operations of any semiring under
consideration (addition, multiplication, in the case of $l$-monoids
residuation) are computable.

\section{Congruence Closure}
\label{sec:cong-closure}

As explained in the introduction, the key ingredient for exploiting
up-to techniques in Section \ref{sec:applications} is an algorithmic
procedure to check whether two vectors belong to the congruence
closure of a given relation of vectors.


\subsection{Problem Statement}
\label{sec:cong-closure-problem}

Let $X$ be a finite set and let $\mathbb S$ be a semiring.  A
  relation $R\subseteq \mathbb{S}^X\times \mathbb{S}^X$ is a
  \emph{congruence} if it is an equivalence and \emph{closed under linear
  combinations}, that is, for each $(v_1,v'_1), (v_2,v'_2)\subseteq R$ and each
  scalar $s\in\mathbb S$, 
  $(v_1+v_2, v_1'+v_2') \in R$ and $(v_1\cdot s,v'_1 \cdot s) \in R$.
  The \emph{congruence closure} $c(R)$ of a relation $R$ over a semiring
  $\mathbb S$ is the smallest congruence
  $R'\subseteq \mathbb{S}^X\times \mathbb{S}^X$ such that
  $R\subseteq R'$.  Alternatively, two vectors $v,v'\in\mathbb{S}^X$ are in
  $c(R)$ whenever this can be derived via the rules in
  \cref{tab:proof-rules}.

\begin{table}
\vspace{-0.3cm}
\begin{center}
  \begin{tabular}{c}
    (\textsc{Rel})\quad 
    $\begin{array}{c}
      v\ R\ w\\ \hline
      v\ c(R)\ w
    \end{array}$\quad\quad
    (\textsc{Refl})\quad 
    $\begin{array}{c}
      \\ \hline
      v\ c(R)\ v
    \end{array}$\quad\quad
    (\textsc{Sym})\quad 
    $\begin{array}{c}
      v\ c(R)\ w \\ \hline
      w\ c(R)\ v
    \end{array}$ \\[0.4cm] 
    (\textsc{Trans})\quad 
    $\begin{array}{c}
       u\ c(R)\ v\quad v\ c(R)\ w\\ \hline
       u\ c(R)\ w
    \end{array}$ \quad\quad
    (\textsc{Sca})\quad 
    $\begin{array}{c}
      v\ c(R)\ w \\ \hline
      v\cdot s\ c(R)\ w\cdot s
    \end{array}$ \quad \mbox{where $s\in\mathbb{S}$} \\[0.4cm] 
    (\textsc{Plus})\quad 
    $\begin{array}{c}
      v_1\ c(R)\ v'_1 \quad v_2\ c(R)\ v'_2\\ \hline
      v_1+v_2\ c(R)\ v'_1+v'_2
    \end{array}$ 
  \end{tabular}
\end{center}
\caption{Proof rules for the congruence closure}
\label{tab:proof-rules}
\vspace{-0.7cm}
\end{table}

Given a finite 
$R\subseteq \mathbb{S}^X\times \mathbb{S}^X$ and 
$v,w\in \mathbb{S}^X$, we aim to determine if
$(v,w)\in c(R)$.  


In \cite{bp:checking-nfa-equiv}, Bonchi and Pous presented a procedure
to compute the congruence closure for the two-valued boolean semiring
$B=\{0,1\}$. The purpose of this section is to generalise the procedure
towards more general semirings, such as rings and $l$-monoids.

\subsection{Congruence Closure for Rings}
\label{sec:cong-closure-rings}


A simple case to start our analysis is the congruence closure of a
ring.  It is kind of folklore (see
e.g. \cite{stark2003behaviour,b:weighted-bisimulation}) that a
submodules\footnote{A sub-semimodule for a ring is called submodule.}
can be used to represent a congruences. In particular we write $[V]$
to denote the submodule generated by a set of vectors $V$.



\begin{prop}\label{prop:congclosurerings}
  Let $\mathbb{I}$ be a ring and $X$ be a finite set.
  Let $R\subseteq \mathbb{I}^X\times \mathbb{I}^X$ be a relation and
  let $(v,v')\in \mathbb{I}^X\times \mathbb{I}^X$ be a pair of
  vectors.  We construct a generating set for a submodule of
  $\mathbb{I}^X$ by defining $U_R = \{u-u'\mid (u,u')\in R\}$.  Then
  $(v,v')\in c(R)$ iff $v-v'\in [U_R]$.

\end{prop}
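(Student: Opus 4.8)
The plan is to prove both inclusions between $c(R)$ and the relation $\{(v,v') \mid v-v' \in [U_R]\}$. Denote by $\sim$ this latter relation, i.e.\ $v \sim v'$ iff $v-v' \in [U_R]$. The overall strategy is: first check that $\sim$ is itself a congruence containing $R$, which gives $c(R) \subseteq {\sim}$ by minimality; then show $\sim$ is contained in any congruence containing $R$ (in particular in $c(R)$), by analyzing the structure of an element of $[U_R]$.

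For the first inclusion, I would verify the three congruence conditions for $\sim$ directly, using the fact that in a ring subtraction is available and $[U_R]$ is a submodule (closed under addition, scalar multiplication, and containing $0$). Reflexivity: $v - v = 0 \in [U_R]$. Symmetry: if $v - v' \in [U_R]$ then $v' - v = -(v-v') \in [U_R]$ since submodules are closed under negation (multiply by the scalar $-1$). Transitivity: $(u-v) + (v-w) = u - w$, and $[U_R]$ is closed under addition. Closure under $+$: $(v_1 - v_1') + (v_2 - v_2') = (v_1+v_2) - (v_1'+v_2')$. Closure under scalar multiplication: $(v-v')\cdot s = v\cdot s - v'\cdot s$, which lies in $[U_R]$ by submodule closure. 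Finally $R \subseteq {\sim}$ is immediate: if $(u,u') \in R$ then $u - u' \in U_R \subseteq [U_R]$. Hence $c(R) \subseteq {\sim}$, which is the ``only if'' direction (if $(v,v') \in c(R)$ then $v - v' \in [U_R]$).

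For the converse, suppose $v - v' \in [U_R]$; I want $(v,v') \in c(R)$. Write $v - v' = \sum_{i=1}^{n} (u_i - u_i') \cdot s_i$ with $(u_i,u_i') \in R$ and $s_i \in \mathbb{I}$. Using the proof rules of \cref{tab:proof-rules}: from $(u_i,u_i') \in R$ we get $u_i \mathrel{c(R)} u_i'$ by (\textsc{Rel}), then $u_i \cdot s_i \mathrel{c(R)} u_i' \cdot s_i$ by (\textsc{Sca}), and summing these with repeated (\textsc{Plus}) yields $\sum_i u_i\cdot s_i \;c(R)\; \sum_i u_i'\cdot s_i$. Call these two vectors $w$ and $w'$; by construction $w - w' = v - v'$. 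It remains to bridge from $(w,w') \in c(R)$ to $(v,v') \in c(R)$. Since $v = w + (v - w)$ and $v' = w' + (v' - w') = w' + (v-w)$ (using $v - w = v' - w'$, which follows from $v - v' = w - w'$), we add the pair $(v-w, v-w) \in c(R)$ (by (\textsc{Refl})) to $(w,w') \in c(R)$ via (\textsc{Plus}), obtaining $(w + (v-w),\, w' + (v-w)) = (v, v') \in c(R)$. This completes the ``if'' direction. (If $n = 0$, then $v = v'$ and (\textsc{Refl}) applies directly.)

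The only mildly delicate point is the final bridging step in the converse direction: one must be careful that $v$ and $v'$ need not themselves be expressible as linear combinations of vectors from $R$ — only their difference is — so the argument has to ``translate'' the pair $(w,w')$ by the common offset $v - w$ using (\textsc{Refl}) and (\textsc{Plus}). Everything else is a routine unwinding of the submodule axioms on one side and of the proof rules of \cref{tab:proof-rules} on the other. The essential use of the ring structure (as opposed to a general semiring) is the availability of additive inverses, both to make $U_R = \{u - u'\}$ well-defined and symmetric, and to perform the offset translation.
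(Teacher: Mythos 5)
Your proof is correct and takes essentially the same route as the paper's: one direction by verifying that $\{(v,v')\mid v-v'\in[U_R]\}$ is a congruence containing $R$ (the paper's auxiliary lemma that $r(U)=\{(v,v')\mid v-v'\in U\}$ is a congruence whenever $U$ is a submodule), and the converse by the same offset-translation trick combining (\textsc{Refl}) and (\textsc{Plus}). The only difference is presentational: you unfold the linear combination $v-v'=\sum_i(u_i-u_i')\cdot s_i$ and derive $(v,v')\in c(R)$ directly from the proof rules, whereas the paper packages the same computations into the statements that $u(c(R))$ is a submodule and that congruences are fixed points of $r\circ u$.
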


This 
yields a
n algorithm for a congruence
check whenever we have a
n algorithm to solve linear equations,
e.g. for fields.
If the ring is not a field, it might still be possible to embed it
into a field.
In this case we can 
solve e.g. the language equivalence
problem (\cref{sec:langequiv}) for weighted automata in the field and
the results are also valid in the ring. Similarly, the procedure can
be used for probabilistic automata which can be seen as weighted
automata over the reals.
	




\subsection{Congruence Closure for $l$-Monoids}
\label{sec:cong-closure-l-monoids}

\subsubsection{Rewriting and Normal Forms.}


Our method 
to determine if a pair of vectors is in the
congruence closure is to employ a rewriting algorithm that rewrites both vectors to a normal form. The
se coincide iff the
vectors are related by the congruence closure.

%
%
%
%
%

\begin{defi}[Rewriting and normal forms]\label{def:RewritingNF}
  Let $\mathbb L$ be an integral $l$-monoid and let
  $R\subseteq {\mathbb L}^X\times {\mathbb L}^X$ be a finite relation.

  We define a set of rewriting rules $\mathcal R$ as follows: For each
  pair of vectors $(v,v')\in R$, we obtain two rewriting rules
  $v\mapsto v\sqcup v'$ and $v'\mapsto v\sqcup v'$.
	
  A rewriting step works as follows: given a vector $v$ and a
  rewriting rule $l\mapsto r$, we compute the residuum
  $l\rightarrow v$ and, provided
  $v\sqsubset (v\sqcup r\cdot(l\rightarrow v))$, the rewriting rule is
  applicable and $v$ rewrites to $v\sqcup r\cdot(l\rightarrow v)$
  (symbolically: $v\leadsto v\sqcup r\cdot(l\rightarrow v)$). A vector
  $v$ is in normal form wrt.\ $R$, provided there exists no rule that
  is applicable to $v$.
\end{defi}

\begin{ex}
  In order to illustrate how rewriting works, we work in $\mathbb T$,
  set $X=\{1,2\}$ (two dimensions) and take the
  relation
  $R = \{\left({\infty\choose
      0},{0\choose\infty}\right)\}\subseteq\mathbb T^2\times\mathbb
  T^2$,
  relating the two unit vectors, and the vector
  $v = {\infty\choose 3}$. This yields a rule
  $l={\infty\choose 0}\mapsto r={0\choose 0}$. We obtain $l\to v= 3$
  and hence
  $v\leadsto v\sqcup r\cdot(l\rightarrow v) = {\infty\choose 3} \min
  \left({0\choose 0}+3\right) = {3\choose 3}$.
\end{ex}

It is worth to observe that when $\mathbb L$ is the boolean semiring, the above procedure coincides with the one in \cite{bp:checking-nfa-equiv}.
The rewriting relation satisfies some simple properties:




\begin{lem}
  \label{lem:rewriting}
  \begin{lemlist}
  \item If $v\leadsto v'$ and $v\sqsubseteq w$, then $v'\sqsubseteq w$ or there exists
    $w'$ s.t. $w\leadsto w'$ and $v'\sqsubseteq w'$.
    \label{lem:rewriting-2}
  \item Whenever $v\leadsto v'$ and $w$ is any vector, there exists a vector $u$ s.t. $v\sqcup w \leadsto u \sqsupseteq v'\sqcup w$ or $v\sqcup w = v'\sqcup w$.
    \label{lem:rewriting-3}
  \end{lemlist}
\end{lem}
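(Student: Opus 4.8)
The plan is to prove both items by the same elementary ``apply-the-same-rule'' argument. A step $v\leadsto v'$ is witnessed by some rewrite rule $l\mapsto r$ with $v' = v\sqcup r\cdot(l\rightarrow v)$ and $v\sqsubset v'$; I would apply this very rule to the larger vector -- to $w$ in item (i), to $v\sqcup w$ in item (ii) -- and then case-split on whether the rule happens to be applicable there.

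First I would record the monotonicity facts needed, all immediate from the definitions. The residuation $\ell_1\rightarrow\ell_2 = \bigsqcup\{\ell\mid\ell_1\cdot\ell\sqsubseteq\ell_2\}$ is monotone in its second argument (enlarging $\ell_2$ enlarges the set whose supremum is taken), and this lifts componentwise to the residuum of vectors; scalar multiplication $v\mapsto v\cdot s$ and binary join are monotone in all arguments. I would also note the trivial but crucial fact that $v\sqsubseteq v\sqcup r\cdot(l\rightarrow v)$ always holds, so that ``the rule $l\mapsto r$ is not applicable to $v$'' is equivalent to $r\cdot(l\rightarrow v)\sqsubseteq v$.

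For item (i): writing $s = l\rightarrow v$ and $t = l\rightarrow w$, monotonicity of $\rightarrow$ in the second argument (using $v\sqsubseteq w$) gives $s\sqsubseteq t$, hence $r\cdot s\sqsubseteq r\cdot t$ and $v' = v\sqcup r\cdot s\sqsubseteq w\sqcup r\cdot t$. If $r\cdot t\sqsubseteq w$ the rule is not applicable to $w$ and we obtain $v'\sqsubseteq w$ (first disjunct); otherwise $w\sqsubset w\sqcup r\cdot t$, the rule fires, and $w\leadsto w' := w\sqcup r\cdot t\sqsupseteq v'$ (second disjunct). Item (ii) is the same argument with $w$ replaced by $v\sqcup w$: putting $s' = l\rightarrow(v\sqcup w)\sqsupseteq s$ we get $(v\sqcup w)\sqcup r\cdot s'\sqsupseteq (v\sqcup r\cdot s)\sqcup w = v'\sqcup w$; if the rule is not applicable to $v\sqcup w$ then $v\sqcup w = (v\sqcup w)\sqcup r\cdot s'\sqsupseteq v'\sqcup w\sqsupseteq v\sqcup w$, forcing $v\sqcup w = v'\sqcup w$; otherwise $v\sqcup w\leadsto u := (v\sqcup w)\sqcup r\cdot s'\sqsupseteq v'\sqcup w$.

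I do not anticipate a real obstacle; the only thing to be careful about is the bookkeeping with the applicability condition $v\sqsubset v\sqcup r\cdot(l\rightarrow v)$ and with the componentwise reading of $l\rightarrow v$ for vectors. I expect these two ``commuting diagram'' properties to be precisely the ingredients used afterwards to establish confluence of $\leadsto$ (and, under a termination hypothesis, the existence of unique normal forms).
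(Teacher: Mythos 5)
Your proposal is correct and follows essentially the same route as the paper: apply the same rule $l\mapsto r$ to the larger vector ($w$, resp.\ $v\sqcup w$), use monotonicity of residuation in its second argument together with monotonicity of $\cdot$ and $\sqcup$, and case-split on applicability. The only (harmless) difference is in item (ii), where the paper invokes the inequality $l\to(v\sqcup w)\sqsupseteq(l\to v)\sqcup(l\to w)$ while you use the weaker $l\to(v\sqcup w)\sqsupseteq l\to v$, which suffices.
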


We now have to prove the following three statements: (i) Our technique
is sound, i.e. whenever two vectors have the same normal form wrt.\
$R$, they are in $c(R)$. (ii) Our technique is complete, i.e. whenever
two vectors are in $c(R)$, they have the same normal form wrt.\ $R$.
(iii) Our algorithm to compute normal forms terminates.

We will show (i) and prove that (ii) follows from (iii). Afterwards we
will discuss sufficient conditions and examples where (iii) holds.

\begin{thm}
  Whenever there exists a vector $v$, such that two vectors
  $v_1$,$v_2$ both rewrite to $\overline v$, i.e., $v_1\leadsto^* \overline v$,
  $v_2\leadsto^* \overline v$, then $(v_1,v_2)\in c(R)$.
  \label{theorem-correctness}
\end{thm}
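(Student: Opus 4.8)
The plan is to prove that a single rewriting step relates the vector to its rewrite target inside $c(R)$, and then close under transitivity and symmetry. Concretely, the main claim to establish first is: whenever $v \leadsto v'$, then $(v, v') \in c(R)$. Once this is in hand, $v_1 \leadsto^* \overline{v}$ gives $(v_1, \overline{v}) \in c(R)$ by induction on the length of the rewriting sequence, using transitivity (\textsc{Trans}); similarly $(v_2, \overline{v}) \in c(R)$; and then symmetry (\textsc{Sym}) on the second together with (\textsc{Trans}) yields $(v_1, v_2) \in c(R)$, which is the statement.

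So the heart of the argument is the single-step case. Suppose $v \leadsto v'$ via a rule $l \mapsto r$ arising from a pair $(v_a, v_b) \in R$, where $\{l, r\} = \{v_a, v_b\}$ up to which of the two rules $v_a \mapsto v_a \sqcup v_b$ or $v_b \mapsto v_a \sqcup v_b$ we used; in either case $r = v_a \sqcup v_b$ and $l \in \{v_a, v_b\}$. By definition $v' = v \sqcup r \cdot (l \rightarrow v)$. The key observations are: first, $(v_a, v_b) \in R \subseteq c(R)$, so by (\textsc{Sym}) and (\textsc{Plus}) with (\textsc{Refl}) we get $v_a \, c(R) \, (v_a \sqcup v_b)$ and $v_b \, c(R)\, (v_a \sqcup v_b)$, hence $l \, c(R) \, r$. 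Scaling by the scalar $s := l \rightarrow v$ (rule \textsc{Sca}) gives $l \cdot s \, c(R) \, r \cdot s$. Now I want to add $v$ to both sides; by (\textsc{Refl}) $v\, c(R)\, v$, and (\textsc{Plus}) yields $(v \sqcup l\cdot s) \, c(R) \, (v \sqcup r \cdot s) = v'$ — here I am using that the semimodule addition is $\sqcup$. It remains to see that $v \sqcup l \cdot s = v$, i.e. $l \cdot s \sqsubseteq v$, i.e. $l \cdot (l \rightarrow v) \sqsubseteq v$: this is exactly the defining property of the residuation operation (the supremum defining $l \rightarrow v$ is taken over those $\ell$ with $l \cdot \ell \sqsubseteq v$, and by complete distributivity the supremum itself satisfies the inequality). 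Therefore $v \, c(R) \, v'$, completing the single-step case.

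The step I expect to require the most care is the residuation identity $l \cdot (l \rightarrow v) \sqsubseteq v$ at the vector level: the residuation in \cref{def:RewritingNF} is extended componentwise to vectors, and one must check that complete distributivity of the $l$-monoid guarantees $\ell_1 \cdot (\ell_1 \rightarrow \ell_2) \sqsubseteq \ell_2$ for scalars (so that the defining set has a maximum, not merely a supremum), and then that $r \cdot s$ is computed with the same scalar $s$ in every component. A second point to be careful about is bookkeeping of the two rules generated per pair of $R$ and the use of symmetry to handle both $l = v_a$ and $l = v_b$ uniformly; this is routine but should be stated. Everything else is a straightforward induction using the proof rules of \cref{tab:proof-rules}.
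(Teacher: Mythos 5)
Your proposal is correct and follows essentially the same route as the paper's proof: reduce to the single-step case, derive $l\ c(R)\ r$ from the rules (\textsc{Rel}), (\textsc{Refl}), (\textsc{Plus}) (and (\textsc{Sym}) when needed), apply (\textsc{Sca}) with the scalar $l\rightarrow v$ and (\textsc{Plus}) with $v\ c(R)\ v$, and conclude using $l\cdot(l\rightarrow v)\sqsubseteq v$ (which the paper records as \cref{lem:monotonicity-8}). You merely spell out more explicitly the closing steps via (\textsc{Trans}) and (\textsc{Sym}), which the paper leaves implicit.
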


\begin{proof}
  We will show that if $v$ rewrites to $v'$ via a rule $l\mapsto r$,
  then $(v,v')\in c(R)$. 

  Since $l\mapsto r$ is a rewriting rule we have that $l = w$,
  $r = w\sqcup w'$ for $(w,w')\in R$ or $(w',w)\in R$. In both cases
  $w = w\sqcup w\ c(R)\ w\sqcup w'$ due to the definition of
  congruence closure, using rules (\textsc{Plus}), (\textsc{Rel}) and
  (\textsc{Refl}), as well as (\textsc{Sym}) in case $(w',w)\in
  R$. Hence $l\ c(R)\ r$.
  This implies that $l\cdot (l\to v)\ c(R)\ r\cdot (l\to v)$
  (\textsc{Sca}) and furthermore
  $v\sqcup l\cdot (l\to v)\ c(R)\ v\sqcup r\cdot (l\to v)$
  (\textsc{Plus}). Since $l\cdot (l\to v) \sqsubseteq v$
		we have
  $v\sqcup l\cdot (l\to v) = v$ and hence $v\ c(R)\ v'$.
\end{proof}

This concludes the proof of soundness, we will go on proving
completeness.

\begin{lem}
  Assume we have a rewriting system that always terminates.
  Then the local Church-Rosser property holds. That is whenever
  $v\leadsto v_1$ and $v\leadsto v_2$, there exists a vector $v'$
  such that $v_1\leadsto^* v'$ and $v_2\leadsto^* v'$.
  \label{lem:ChurchRosser}
\end{lem}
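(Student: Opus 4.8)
The plan is to avoid a direct diagram chase (delicate here: unlike the two-valued boolean case of \cite{bp:checking-nfa-equiv}, applying a rule to a larger vector may strictly overshoot) and instead to work through a canonical \emph{closure}. Say a vector $w$ is \emph{closed} if no rule of $\mathcal R$ applies to it, which by \cref{def:RewritingNF} means $r\cdot(l\to w)\sqsubseteq w$ for every rule $l\mapsto r$; thus the closed vectors are exactly the normal forms. The first—and key—step is to observe that closed vectors are stable under arbitrary infima: if the $w_i$ are closed and $w=\bigsqcap_i w_i$, then $w\sqsubseteq w_i$ gives $l\to w\sqsubseteq l\to w_i$ by monotonicity of residuation in its second argument, hence $r\cdot(l\to w)\sqsubseteq r\cdot(l\to w_i)\sqsubseteq w_i$ for all $i$, so $r\cdot(l\to w)\sqsubseteq\bigsqcap_i w_i=w$. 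Since the top vector of $\mathbb L^X$ is trivially closed, for every $v$ the \emph{closure} $\hat v:=\bigsqcap\{w\mid v\sqsubseteq w,\ w\text{ closed}\}$ is well defined ($\mathbb L^X$ is complete), is closed, and is the $\sqsubseteq$-least closed vector above $v$; in particular $\hat{\,\cdot\,}$ is monotone.

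Next I would record two facts. (i) If $u\sqsubseteq\hat v$ and $u\leadsto u'$, then $u'\sqsubseteq\hat v$: indeed $u'=u\sqcup r\cdot(l\to u)$ with $u\sqsubseteq\hat v$, and $r\cdot(l\to u)\sqsubseteq r\cdot(l\to\hat v)\sqsubseteq\hat v$ since $\hat v$ is closed; by induction this gives $v'\sqsubseteq\hat v$ for every reduct $v'$ of $v$. (ii) A normal form $n$ is closed, so if $v\leadsto^* n$ then $n\sqsubseteq\hat v$ by (i), while $v\sqsubseteq n$ (rewriting is increasing) together with $n$ closed forces $\hat v\sqsubseteq n$; hence $n=\hat v$. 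Now invoke termination: every maximal reduction from $v$ is finite and ends in a normal form, so (ii) shows $v\leadsto^*\hat v$ for every vector $v$.

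The conclusion is then immediate: from $v\leadsto v_1$ and $v\leadsto v_2$ we get $v_1,v_2\sqsubseteq\hat v$ by (i); combining $v\sqsubseteq v_i$ with monotonicity of $\hat{\,\cdot\,}$ and with ``$v_i\sqsubseteq\hat v$, $\hat v$ closed'' yields $\hat{v_1}=\hat v=\hat{v_2}$; hence $v_1\leadsto^*\hat{v_1}=\hat v$ and $v_2\leadsto^*\hat{v_2}=\hat v$, so $v'=\hat v$ witnesses the claim (this argument in fact establishes full confluence, the stated local version being a special case). I expect the real difficulty to be conceptual—recognising that the closure operator $\hat{\,\cdot\,}$ is the right invariant rather than trying to close the peak step by step. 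The one genuinely structural input is that ``closed'' is preserved under infima, which is precisely where complete distributivity of $\mathbb L$ (existence of residua, and of arbitrary infima in $\mathbb L^X$) is needed; everything else is routine monotonicity of $\cdot$, $\sqcup$ and $\to$.
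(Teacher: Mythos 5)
Your argument is correct, and it takes a genuinely different route from the paper's. The paper proves the lemma by a direct diagram chase: it takes a maximal reduction $v_1=v_1^a\leadsto\dots\leadsto v_n^a\not\leadsto$ and uses \cref{lem:rewriting-2} (if $v\leadsto v'$ and $v\sqsubseteq w$, then $w$ either already dominates $v'$ or can take a step that does) to build a matching sequence out of $v_2$ that ends above $v_n^a$, and then squeezes to get equality with the normal form. You instead identify the right invariant globally: the normal forms are exactly the vectors $w$ with $r\cdot(l\to w)\sqsubseteq w$ for all rules, this class is closed under arbitrary infima (using completeness of $\mathbb L$ and monotonicity of $\ell\mapsto(l\to\ell)$, cf.\ Lemma~\ref{lem:monotonicity}), so every $v$ has a least normal form $\hat v$ above it; your facts (i) and (ii) then pin down every reachable normal form as $\hat v$, and termination supplies reachability. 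Each approach buys something: the paper's chase is more elementary and reuses machinery (\cref{lem:rewriting-2}) already needed elsewhere, whereas your closure-operator argument yields full confluence and uniqueness of normal forms directly, without routing through Newman's lemma \cite{dj:rewrite} as the paper does after establishing only the local property; it also isolates exactly where complete distributivity enters. One small caveat: your opening remark that a diagram chase is "delicate" here slightly overstates the contrast, since the paper's chase does go through — the overshooting you worry about is precisely what the disjunction in \cref{lem:rewriting-2} absorbs.
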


If a rewriting system terminates and the local Church-Rosser property
holds, the system is automatically confluent \cite{dj:rewrite}. In
this case, every vector $v$ is as associated with a unique normal
form, written $\Downarrow_{\mathcal{R}}\! v$ or simply
$\Downarrow\! v$ where
$v\leadsto^* \Downarrow\! v \not\leadsto$.

Furthermore, due to \cref{lem:rewriting-2} we know that $\Downarrow$
is monotone, i.e., $v\sqsubseteq v'$ implies
$\Downarrow\! v \sqsubseteq \Downarrow\! v'$.  This also implies
$\Downarrow\! (v\sqcup v') \sqsupseteq (\Downarrow\! v)\sqcup
(\Downarrow\! v')$.

\begin{lem}
  \label{lem:prepcorrectness}
  For all $v\in\mathbb L^X, \ell\in\mathbb L$ we have that if
  $v \leadsto v'$, then $v\cdot \ell \leadsto v''$ for some
  $v'' \sqsupseteq v'\cdot \ell$ or $v\cdot\ell = v'\cdot \ell$.
  In particular, if rewriting terminates, we have $(\Downarrow\!
  v)\cdot \ell \sqsubseteq \,\Downarrow\!(v\cdot \ell)$.
\end{lem}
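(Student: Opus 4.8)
The plan is to prove the one--step statement directly from the definition of a rewriting step, and then lift it to the claim about normal forms by induction on the length of a rewriting sequence.

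For the single step, I would take $v\leadsto v'$ via a rule $l\mapsto r$, so that $v' = v\sqcup r\cdot(l\to v)$ with $v\sqsubset v'$, and exploit the compatibility of residuation with scalar multiplication: from $l\cdot(l\to v)\sqsubseteq v$ one multiplies by $\ell$ and rearranges (using monotonicity, associativity and commutativity of $\cdot$) to get $l\cdot\big((l\to v)\cdot\ell\big)\sqsubseteq v\cdot\ell$, hence $(l\to v)\cdot\ell\sqsubseteq l\to(v\cdot\ell)$ by the adjunction defining the residuum. Distributing $\cdot\ell$ over $\sqcup$ then gives
\[
 v'\cdot\ell = v\cdot\ell \;\sqcup\; r\cdot(l\to v)\cdot\ell \;\sqsubseteq\; v\cdot\ell \;\sqcup\; r\cdot\big(l\to(v\cdot\ell)\big).
\]
Now I would split into two cases. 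If $v\cdot\ell\sqsubset v\cdot\ell\sqcup r\cdot(l\to(v\cdot\ell))$, the rule $l\mapsto r$ is applicable to $v\cdot\ell$, so $v\cdot\ell\leadsto v'':=v\cdot\ell\sqcup r\cdot(l\to(v\cdot\ell))$, and the displayed inequality gives $v''\sqsupseteq v'\cdot\ell$ --- the first alternative. Otherwise $v\cdot\ell = v\cdot\ell\sqcup r\cdot(l\to(v\cdot\ell))$, so the display yields $v'\cdot\ell\sqsubseteq v\cdot\ell$; since $v\sqsubseteq v'$ and $\cdot$ is monotone we also have $v\cdot\ell\sqsubseteq v'\cdot\ell$, hence $v\cdot\ell = v'\cdot\ell$ --- the second alternative.

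For the ``in particular'' part, I would assume rewriting terminates, so that normal forms are unique and $\Downarrow$ is monotone (\cref{lem:ChurchRosser} and the remark after it), and induct on the length of a sequence $v\leadsto^*\Downarrow\!v$. The base case (length $0$) is immediate since every rewriting step is strictly increasing, so $(\Downarrow\!v)\cdot\ell = v\cdot\ell\sqsubseteq\Downarrow\!(v\cdot\ell)$. For the step, write $v\leadsto v_1\leadsto^*\Downarrow\!v$ with a shorter tail; then $\Downarrow\!v_1=\Downarrow\!v$, the induction hypothesis gives $(\Downarrow\!v)\cdot\ell\sqsubseteq\Downarrow\!(v_1\cdot\ell)$, and the single--step statement applied to $v\leadsto v_1$ gives either $v\cdot\ell=v_1\cdot\ell$ --- in which case $\Downarrow\!(v\cdot\ell)=\Downarrow\!(v_1\cdot\ell)\sqsupseteq(\Downarrow\!v)\cdot\ell$ --- or $v\cdot\ell\leadsto v''\sqsupseteq v_1\cdot\ell$ --- in which case uniqueness of normal forms and monotonicity of $\Downarrow$ give $\Downarrow\!(v\cdot\ell)=\Downarrow\!(v'')\sqsupseteq\Downarrow\!(v_1\cdot\ell)\sqsupseteq(\Downarrow\!v)\cdot\ell$.

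I expect the only genuinely non-routine ingredient to be the residuation inequality $(l\to v)\cdot\ell\sqsubseteq l\to(v\cdot\ell)$, which is where complete distributivity of the $l$-monoid (giving the Galois adjunction $l\cdot(-)\dashv(l\to-)$, in its vector form) and commutativity of multiplication are really used; the remaining steps are routine manipulations with $\sqcup$, $\cdot$, the applicability condition of a rewriting step, and monotonicity of $\Downarrow$.
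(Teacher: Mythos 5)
Your proposal is correct and follows essentially the same route as the paper: the one-step claim is obtained exactly as in the paper's proof, via the vector form of the residuation inequality $(l\to v)\cdot\ell\sqsubseteq l\to(v\cdot\ell)$ (the adaptation of \cref{lem:monotonicity-4}) together with distributivity of $\cdot\,\ell$ over $\sqcup$ and the applicability/equality case split. For the normal-form statement the paper explicitly builds an auxiliary chain $w_i\sqsupseteq v_i\cdot\ell$ using \cref{lem:rewriting-2}, whereas you induct on the length of the sequence using monotonicity of $\Downarrow$ and uniqueness of normal forms; these are the same argument in slightly different packaging.
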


Now we have all the necessary ingredients to show that the technique
is complete, provided the computation of a normal form terminates.

\begin{thm}
  Assume that rewriting terminates.  If $v \ c(R)\ v'$ then
  $\Downarrow\! v=\Downarrow\! v'$.
  \label{thm:RewritingCorrectness}
\end{thm}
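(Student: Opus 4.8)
The statement to prove is that if $v \mathrel{c(R)} v'$ then $\Downarrow\! v = \Downarrow\! v'$, assuming rewriting terminates. Since $c(R)$ is defined inductively by the proof rules in \cref{tab:proof-rules}, the natural strategy is induction on the derivation of $v \mathrel{c(R)} v'$, checking that each rule preserves the invariant ``equal normal forms.'' The equivalence-relation rules (\textsc{Rel}), (\textsc{Refl}), (\textsc{Sym}), (\textsc{Trans}) are either trivial or immediate from the fact that equality of normal forms is itself an equivalence; the real content is in the base case (\textsc{Rel}) and in the two algebraic closure rules (\textsc{Sca}) and (\textsc{Plus}).

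\textbf{Base case.} For (\textsc{Rel}), suppose $(v,v') \in R$. Then $\mathcal R$ contains the rules $v \mapsto v \sqcup v'$ and $v' \mapsto v \sqcup v'$. I would argue that $\Downarrow\! v = \Downarrow\!(v \sqcup v')$: on one hand, monotonicity of $\Downarrow$ (noted after \cref{lem:ChurchRosser}, via \cref{lem:rewriting-2}) gives $\Downarrow\! v \sqsubseteq \Downarrow\!(v \sqcup v')$; on the other hand, applying the rule $v \mapsto v\sqcup v'$ to $v$ itself (with $l = v$, $r = v\sqcup v'$, so $l \to v = 1$ by integrality, hence $v \leadsto v \sqcup (v\sqcup v')\cdot 1 = v \sqcup v'$ unless already equal) shows $v \leadsto^* v \sqcup v'$, so by confluence $\Downarrow\! v = \Downarrow\!(v\sqcup v')$. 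Symmetrically $\Downarrow\! v' = \Downarrow\!(v \sqcup v')$, and the two are equal.

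\textbf{The closure rules.} For (\textsc{Sca}): assume inductively $\Downarrow\! v = \Downarrow\! v'$ and we must show $\Downarrow\!(v\cdot s) = \Downarrow\!(v'\cdot s)$. \cref{lem:prepcorrectness} gives $(\Downarrow\! v)\cdot s \sqsubseteq \Downarrow\!(v\cdot s)$, whence $\Downarrow((\Downarrow\! v)\cdot s) \sqsubseteq \Downarrow\!(v\cdot s)$ by monotonicity and idempotence of $\Downarrow$; but also $\Downarrow\!(v\cdot s) = \Downarrow(\ldots) $ where one pushes the other inequality — more carefully, since $v \leadsto^* \Downarrow\! v$, \cref{lem:prepcorrectness} iterated along this rewriting sequence yields $v\cdot s \leadsto^* w \sqsupseteq (\Downarrow\! v)\cdot s$ for some $w$, so $\Downarrow\!(v\cdot s) = \Downarrow\! w \sqsupseteq \Downarrow((\Downarrow\! v)\cdot s)$, and combined with the previous inequality $\Downarrow\!(v\cdot s) = \Downarrow((\Downarrow\! v)\cdot s)$. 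Since $\Downarrow\! v = \Downarrow\! v'$ this last expression is symmetric in $v, v'$, giving the claim. For (\textsc{Plus}): assume $\Downarrow\! v_1 = \Downarrow\! v_1'$ and $\Downarrow\! v_2 = \Downarrow\! v_2'$; by the analogous argument using \cref{lem:rewriting-3} (iterated along a rewriting sequence $v_1 \leadsto^* \Downarrow\! v_1$) one gets $v_1 \sqcup v_2 \leadsto^* w \sqsupseteq (\Downarrow\! v_1) \sqcup v_2$, then rewrite the second component similarly, concluding $\Downarrow\!(v_1 \sqcup v_2) = \Downarrow\!((\Downarrow\! v_1)\sqcup(\Downarrow\! v_2))$, which is symmetric in the primed/unprimed versions.

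\textbf{Main obstacle.} The delicate point is establishing, for both (\textsc{Sca}) and (\textsc{Plus}), the identity $\Downarrow\!(v \cdot s) = \Downarrow\!((\Downarrow\! v)\cdot s)$ (resp.\ $\Downarrow\!(v\sqcup w) = \Downarrow\!((\Downarrow\! v)\sqcup w)$). The $\sqsupseteq$ direction follows from monotonicity, but the $\sqsubseteq$ direction requires lifting \cref{lem:prepcorrectness} (resp.\ \cref{lem:rewriting-3}) from a single rewriting step to the whole terminating sequence $v \leadsto^* \Downarrow\! v$ — an induction on the length of that sequence, where at each step one must carefully track that the ``or $=$'' alternative in those lemmas does not break the chain, and invoke confluence to identify the common normal form. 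This bookkeeping, together with the repeated appeal to confluence (\cref{lem:ChurchRosser} plus termination), is where all the care goes; everything else is routine.
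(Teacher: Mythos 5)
Your proposal is correct, and it is the same overall strategy as the paper (rule induction on the derivation of $v\ c(R)\ v'$, using \cref{lem:prepcorrectness}, monotonicity of $\Downarrow$, and confluence), but you carry a different inductive invariant, and that changes where the work lands. You push the symmetric statement $\Downarrow\! v = \Downarrow\! v'$ through the induction, which makes (\textsc{Sym}) and (\textsc{Trans}) trivial but forces you, in the (\textsc{Sca}) and (\textsc{Plus}) cases, to establish the ``sandwich'' identities $\Downarrow\!(v\cdot s) = \Downarrow\!((\Downarrow\! v)\cdot s)$ and $\Downarrow\!(v_1\sqcup v_2) = \Downarrow\!((\Downarrow\! v_1)\sqcup(\Downarrow\! v_2))$ — exactly the ``main obstacle'' you identify. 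These do hold, and more cheaply than your sketch suggests: one direction is just monotonicity and idempotence of $\Downarrow$ applied to $v \sqsubseteq\, \Downarrow\! v$, and the other is precisely the second statement of \cref{lem:prepcorrectness} (resp.\ the remark after \cref{lem:ChurchRosser} that $\Downarrow\!(v\sqcup v') \sqsupseteq (\Downarrow\! v)\sqcup(\Downarrow\! v')$), so the step-by-step iteration you describe is already packaged in lemmas you may cite. The paper instead proves the asymmetric pair $\Downarrow\! v \sqsupseteq v'$ and $\Downarrow\! v' \sqsupseteq v$ by rule induction and recovers $\Downarrow\! v = \Downarrow\! v'$ at the very end from monotonicity and idempotence of $\Downarrow$; with that invariant, (\textsc{Sca}) and (\textsc{Plus}) collapse to one-line inequality chains such as $v\cdot \ell \sqsubseteq (\Downarrow\! v)\cdot\ell \sqsubseteq (\Downarrow\! v')\cdot\ell \sqsubseteq\, \Downarrow\!(v'\cdot\ell)$, at the modest cost of a two-line argument for (\textsc{Trans}). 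So your route buys trivial equivalence-rule cases at the price of heavier algebraic cases; the paper's buys the reverse, and is shorter overall.
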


\subsubsection{Termination.}

One technique to prove termination is given in \cref{cor:terminationgeneral}: it suffices to show that the supremum of all
the elements reachable via $\leadsto$ is included in the congruence class. First we need the following result.

\begin{prop}
  If $v \ c(R)\ \overline v$, then $v\leadsto^* v'$ where
  $v'\sqsupseteq v\sqcup\overline v$. 
  \label{prop:rewritingtogreater}
\end{prop}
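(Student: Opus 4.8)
The plan is to prove, by induction on the derivation of $(v,\overline v)\in c(R)$ through the rules of \cref{tab:proof-rules}, the following \emph{symmetric} strengthening of the claim: if $(v,\overline v)\in c(R)$ then there exist vectors $v',\overline v'$ with $v\leadsto^* v'\sqsupseteq v\sqcup\overline v$ \emph{and} $\overline v\leadsto^* \overline v'\sqsupseteq v\sqcup\overline v$. The symmetric form is what makes the induction go through -- in particular it renders the (\textsc{Sym}) case trivial -- and the proposition is its first half.

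Before the induction I would record three auxiliary observations. (a) Every rewriting step is increasing: $v\leadsto v\sqcup r\cdot(l\rightarrow v)\sqsupseteq v$, hence $v\leadsto^* u$ implies $v\sqsubseteq u$. (b) The one-step lemmas \cref{lem:rewriting-2,lem:rewriting-3,lem:prepcorrectness} have multi-step counterparts: if $v\leadsto^* v'$ and $v\sqsubseteq w$ then $w\leadsto^* w'$ for some $w'\sqsupseteq v'$; and if $v\leadsto^* v'$ then, for any vector $w$ (resp.\ any scalar $\ell$), $v\sqcup w\leadsto^* u\sqsupseteq v'\sqcup w$ (resp.\ $v\cdot\ell\leadsto^* u\sqsupseteq v'\cdot\ell$). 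Each follows by induction on the length of $v\leadsto^* v'$, applying the one-step lemma to the first step and then using the multi-step monotonicity statement to re-align the tail of the run (one chases the shorter run from a vector lying above its current value). (c) In an integral $l$-monoid, $v\rightarrow v=\bigsqcup\{\ell\mid v\cdot\ell\sqsubseteq v\}=1$, since $1=\top$ and $v\cdot 1=v\sqsubseteq v$.

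For the base cases, (\textsc{Refl}) holds with the empty sequence since $v\sqcup v=v$, and for (\textsc{Rel}) with $(v,\overline v)\in R$ the rule $l=v\mapsto r=v\sqcup\overline v$ gives $v\sqcup r\cdot(l\rightarrow v)=v\sqcup\overline v$ by (c), so either $v\leadsto v\sqcup\overline v$ (if $v\sqsubset v\sqcup\overline v$) or the empty sequence already reaches $v=v\sqcup\overline v$ (if $\overline v\sqsubseteq v$); the companion rule $\overline v\mapsto v\sqcup\overline v$ handles $\overline v$ symmetrically. For the inductive step, (\textsc{Sym}) is immediate; (\textsc{Sca}) follows from the multi-step form of \cref{lem:prepcorrectness} together with distributivity of $\cdot$ over $\sqcup$, since $v\leadsto^* v'\sqsupseteq v\sqcup w$ yields $v\cdot s\leadsto^* u\sqsupseteq v'\cdot s\sqsupseteq v\cdot s\sqcup w\cdot s$; (\textsc{Plus}) follows by first lifting $v_1\leadsto^* a_1\sqsupseteq v_1\sqcup w_1$ along $v_2$ via the multi-step \cref{lem:rewriting-3} to $v_1\sqcup v_2\leadsto^* b_1\sqsupseteq a_1\sqcup v_2$, then continuing from $b_1$ (which dominates $v_2$) by multi-step monotonicity along $v_2\leadsto^* a_2\sqsupseteq v_2\sqcup w_2$ to some $b_2\sqsupseteq a_2$, so that by (a) $b_2\sqsupseteq b_1\sqcup a_2\sqsupseteq(v_1\sqcup v_2)\sqcup(w_1\sqcup w_2)$, with the $w_1\sqcup w_2$ side symmetric; (\textsc{Trans}) follows by taking $u\leadsto^* u_1\sqsupseteq u\sqcup v$ and $v\leadsto^* v_1\sqsupseteq v\sqcup w$, and since $u_1\sqsupseteq v$ continuing via multi-step monotonicity to $u_1\leadsto^* u_2\sqsupseteq v_1$, whence $u_2\sqsupseteq u_1\sqcup v_1\sqsupseteq u\sqcup w$, again with the other direction symmetric.

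I expect the main obstacle to be the bookkeeping in the (\textsc{Trans}) and (\textsc{Plus}) cases, where two a priori unrelated rewriting runs must be spliced into a single one; the multi-step monotonicity extension of \cref{lem:rewriting-2} is precisely the tool that does this, as it lets a rewriting run be continued from any vector lying above the endpoint of another run. The only place the specific algebra of $l$-monoids enters is fact (c) in the (\textsc{Rel}) base case, and there integrality makes the residuum $v\rightarrow v$ collapse to $1$ immediately.
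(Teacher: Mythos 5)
Your proposal is correct and follows essentially the same route as the paper's proof: a rule induction over the derivation rules of $c(R)$ establishing the symmetric form of the claim, with the (\textsc{Rel}) case resting on $v\rightarrow v=1$, the (\textsc{Sca}) case on \cref{lem:prepcorrectness}, and the (\textsc{Trans})/(\textsc{Plus}) cases on \cref{lem:rewriting-2,lem:rewriting-3}. The only (welcome) difference is that you make explicit the multi-step liftings of these one-step lemmas, which the paper invokes implicitly, and you splice the two runs in (\textsc{Trans}) directly from $u_1\sqsupseteq v$ rather than detouring through $v_1\sqcup v_2$ as the paper does; both variants are sound.
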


Now take $\overline v = \bigsqcup \{\hat{v} \mid v \leadsto^* \hat{v}
\}$. By the above proposition if $v \ c(R)\ \overline v$, then $v' =
\overline v $ and $v \leadsto^* \overline v $. Since $\leadsto$ is
irreflexive, $\overline v \not \leadsto$. If we assume that rule
application is fair, we can guarantee that $\overline v$ is eventually
reached in every rewriting sequence.

\begin{cor}
  \label{cor:terminationgeneral}
If $v \ c(R)\ \bigsqcup \{\hat{v} \mid v \leadsto^* \hat{v} \}$, then
the rewriting algorithm terminates, assuming that every rule that
remains applicable is eventually applied.
\end{cor}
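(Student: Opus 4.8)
The plan is to identify $\overline v:=\bigsqcup\{\hat v\mid v\leadsto^*\hat v\}$ (which exists since $\mathbb L$ is complete) as the unique normal form a fair run can end in, and then invoke fairness to argue that this normal form is actually reached. Two facts come essentially for free, as already sketched in the paragraph preceding the statement. First, by \cref{prop:rewritingtogreater} and the hypothesis $v\ c(R)\ \overline v$ there is $v'$ with $v\leadsto^* v'$ and $v'\sqsupseteq v\sqcup\overline v=\overline v$; since $v'$ is reachable from $v$ we also have $v'\sqsubseteq\overline v$, so $v'=\overline v$ and thus $v\leadsto^*\overline v$. Second, $\overline v$ is in normal form: any step $\overline v\leadsto w$ would yield $\overline v\sqsubset w$, whereas reachability of $w$ forces $w\sqsubseteq\overline v$, a contradiction. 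It therefore suffices to show that every fair rewriting sequence starting in $v$ reaches $\overline v$, after which it necessarily halts.

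The key auxiliary statement is that \emph{every} reachable $w$ (i.e.\ $v\leadsto^* w$) satisfies $w\leadsto^*\overline v$. Indeed, by soundness (\cref{theorem-correctness}) $(v,w)\in c(R)$, hence $(w,\overline v)\in c(R)$, and \cref{prop:rewritingtogreater} applied to $w$ yields $w\leadsto^* w'$ with $w'\sqsupseteq w\sqcup\overline v=\overline v$; reachability again forces $w'=\overline v$. Consequently a fair sequence that halts does so at a normal form reachable from $v$, which by this lemma equals $\overline v$. So only infinite fair sequences remain to be excluded.

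For that, assign to each reachable $w$ the length $d(w)$ of a shortest $\leadsto$-path from $w$ to $\overline v$; it is finite by the previous paragraph and equals $0$ exactly when $w=\overline v$. A short induction on $d(w)$ using \cref{lem:rewriting-2} shows that $d$ is monotone: $w\sqsubseteq w'$ with both reachable implies $d(w')\le d(w)$. Now suppose $v=v_0\leadsto v_1\leadsto\cdots$ is an infinite fair sequence. Applying \cref{lem:rewriting-2} to the first step of a shortest path out of $v_i$, together with $v_i\sqsubseteq v_{i+1}$ and monotonicity of $d$, gives $d(v_{i+1})\le d(v_i)$; hence $d(v_i)$ is eventually a constant $d$, and $d\ge1$ since $\overline v$ is never reached. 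Once in this regime, the first alternative of \cref{lem:rewriting-2} is excluded (it would force a strict decrease of $d$), so its second alternative holds at every step and supplies, at each vector from some index on, an \emph{applicable} rule whose application drops the distance below $d$. Monotonicity of residuation (immediate from its definition) then shows that such a ``good'' rule, once applicable, stays applicable at all later vectors --- were it to become inapplicable the distance would already have dropped below $d$, contradicting stability. Thus it remains applicable, so by fairness it is eventually applied, lowering the distance below $d$ --- a contradiction. Hence no infinite fair sequence exists and the rewriting algorithm terminates.

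The main obstacle is this last step. The subtlety is that applicability of a rewriting rule is \emph{not} monotone in $\sqsubseteq$, so one cannot naively claim that the rule witnessing progress stays applicable; what rescues the argument is the observation that, in the stable regime, losing applicability of such a rule would itself force a strict decrease of $d$, which stability forbids. Everything before that --- $v\leadsto^*\overline v$, $\overline v$ being a normal form, and reachability of $\overline v$ from every reachable vector --- is bookkeeping on top of \cref{prop:rewritingtogreater}, \cref{theorem-correctness} and \cref{lem:rewriting-2}.
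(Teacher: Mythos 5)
Your proof is correct. The first half (that $\overline v=\bigsqcup\{\hat v\mid v\leadsto^*\hat v\}$ is reachable from $v$ via \cref{prop:rewritingtogreater} and is a normal form by irreflexivity of $\leadsto$) is exactly the paper's opening move. Where you diverge is in the fairness argument. The paper fixes the single witnessing sequence $v\leadsto^*\overline v$ and argues, quite informally, that delaying any rule application from that sequence is harmless because the rule either becomes subsumed or remains applicable with an equal or larger multiplicand. You instead prove a stronger auxiliary fact --- every reachable $w$ satisfies $w\leadsto^*\overline v$, obtained by routing through soundness (\cref{theorem-correctness}) to get $(w,\overline v)\in c(R)$ and then reapplying \cref{prop:rewritingtogreater} --- and use it to define a finite distance $d(w)$ to $\overline v$ on all reachable vectors. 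The termination argument then becomes a well-foundedness argument on $d$: monotonicity of $d$ via \cref{lem:rewriting-2}, eventual stability of $d$ along an infinite run, and the observation that in the stable regime a distance-decreasing rule stays applicable (because losing applicability would itself force the decrease), so fairness applies it and yields a contradiction. This buys you a fully rigorous handling of the one genuinely delicate point --- that applicability is not monotone in $\sqsubseteq$ --- which the paper's proof glosses over; the cost is the extra detour through soundness and the confluence-to-$\overline v$ lemma, neither of which the paper's more direct simulation of the witnessing sequence needs. Both arguments ultimately rest on the same engine, namely \cref{lem:rewriting-2}.
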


%
%
%
%

\paragraph{Termination for Specific $l$-Monoids.}

We now study the $l$-monoid
$\mathbb{M} = ([0,1],\max,\cdot, 0,1)$ from \cref{ex:lmonoid} and show
that the rewriting algorithm terminates for this $l$-monoid. For the
proof we mainly use the pigeon-hole principle and exploit the
total ordering of the underlying lattice. Since $\mathbb{M}$ is
isomorphic to $\mathbb{T}$, we obtain termination for the tropical
semiring as a corollary.

\begin{thm}
  The rewriting algorithm terminates for the $l$-monoids
  $\mathbb{M}$ and $\mathbb{T}$.\label{thm:rewritingterminates01maxtimes}
\end{thm}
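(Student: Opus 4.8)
\emph{Reduction to $\mathbb M$.}
Since $\phi\colon\mathbb T\to\mathbb M$, $x\mapsto 2^{-x}$ (\cref{ex:lmonoid}), is an isomorphism of $l$-monoids, it preserves $\sqcup$ and residuation, hence the rewriting relation of \cref{def:RewritingNF}, and maps rewriting sequences over $\mathbb T$ bijectively to rewriting sequences over $\mathbb M$. So it suffices to prove termination for $\mathbb M=([0,1],\max,\cdot,0,1)$, which I do by contradiction: assume some sequence $w_0\leadsto w_1\leadsto\cdots$ is infinite.

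\emph{Setting up a pigeonhole.}
In $\mathbb M$ the order $\sqsubseteq$ is the pointwise order on $[0,1]^X$, $\sqcup$ is coordinatewise $\max$, all intermediate vectors stay in $[0,1]^X$, and $v\leadsto v'$ forces $v\le v'$ with $v[y]<v'[y]$ for some $y$. Hence each coordinate sequence $(w_m[x])_m$ is non-decreasing and bounded by $1$, so it converges to some $\bar w[x]$. I will use: (1) if a rule $l\mapsto r$ is applied infinitely often then at those steps its residuum $l\rightarrow w_m<1$ --- once it equals $1$ one has $w_{m+1}=w_m\sqcup r\sqsupseteq r$ and monotonicity of residuation keeps the rule inapplicable forever --- so the residuum $\min_x\min\{1,w_m[x]/l[x]\}$ is realized at some coordinate $x$ with $l[x]>0$, and firing the rule sends $w_m[x]$ to $(r[x]/l[x])\cdot w_m[x]\ge w_m[x]$ (using $r=l\sqcup l'\ge l$); (2) a coordinate increased only finitely often reaches its limit at a finite step, so past all these (finitely many) steps every step strictly increases some coordinate from the nonempty set $I$ of coordinates increased infinitely often.

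\emph{The easy case, and the crux.}
From (1): if some rule applied infinitely often has its residuum realized at a coordinate $x$ that the rule itself strictly increases (so $r[x]>l[x]$), then $w_m[x]$ is multiplied by a fixed factor $>1$ infinitely often, contradicting $w_m[x]\le 1$. The real work is the opposite situation, where the residuum-realizing coordinate is left unchanged, so an $I$-coordinate can only be pushed up by \emph{other} rules. Iterating (1)--(2) and pigeonholing over the finitely many rules and coordinates yields, for every $z\in I$, a recurring rule and an $I$-coordinate driving $z$: a finite directed graph on $I$ with every vertex of in-degree $\ge 1$, hence a directed cycle $x_1\to\cdots\to x_p\to x_1$ along which each firing of the edge $x_i\to x_{i+1}$ at a step $t$ satisfies $w_{t+1}[x_{i+1}]=\kappa_i\,w_t[x_i]$ with fixed $\kappa_i=r_{\rho_i}[x_{i+1}]/l_{\rho_i}[x_i]>0$ and $w_t[x_{i+1}]<w_{t+1}[x_{i+1}]$. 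The plan is to traverse the cycle repeatedly, chaining these relations with monotonicity of the coordinate sequences, to show that $w_m[x_1]$ cannot converge, contradicting $w_m[x_1]\le 1$. Transporting the conclusion back along $\phi$ handles $\mathbb T$.

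\emph{Main obstacle.}
The crux is this cycle step. Passing to limits, the cycle relations degenerate to equalities $\bar w[x_{i+1}]=\kappa_i\bar w[x_i]$ with $\prod_i\kappa_i=1$, so no contradiction survives a naive limiting argument; one must argue at finite times, using the \emph{strictness} of each increase together with monotonicity and the bound $1$, e.g.\ by exhibiting a multiplicative gain bounded away from $1$ over suitably chosen traversals. Making this precise is the technical heart, and it is exactly here that the total ordering of $[0,1]$ --- so that $\sqcup$ is a maximum and ``the'' residuum-realizing coordinate is well defined --- and the upper bound $1$ are indispensable.
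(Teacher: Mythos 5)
Your overall strategy coincides with the paper's: reduce $\mathbb T$ to $\mathbb M$ via the order-reversing isomorphism $x\mapsto 2^{-x}$, assume an infinite rewriting sequence in $[0,1]^X$, observe that each step multiplies the residuum-realizing coordinate's value into some coordinate by a fixed ratio $r[\overline{\imath}]/l[\overline{\jmath}]$ drawn from a finite set, and try to accumulate a multiplicative gain that contradicts the bound $1$. However, the step you yourself flag as the ``technical heart'' is a genuine gap, and your cycle construction cannot close it as formulated: the per-edge relations $w_{t+1}[x_{i+1}]=\kappa_i\,w_t[x_i]$ together with strictness of each increase only yield, in the limit, $\overline w[x_{i+1}]=\kappa_i\,\overline w[x_i]$ and $\prod_i\kappa_i=1$, and the strict increments may shrink geometrically while every coordinate converges. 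Nothing in the cycle-local bookkeeping prevents this, so no contradiction is reachable from that data alone.

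The missing idea --- and the way the paper resolves it --- is to keep the \emph{entire} multiplicative history of an entry rather than only one edge of it: every value $v_i[j]$ is written as a product of ratios of rule entries applied to a single ``base'' entry of an earlier vector. With $N=|X|$, within any window of $N(N-1)+1$ consecutive steps the pigeonhole principle produces one coordinate $j$ and two times at which $v[j]$ is based on the \emph{same} base entry but has strictly increased; the quotient of the two values is then a product of at most $N(N-1)+1$ factors of the form $r[\overline{\imath}]/l[\overline{\jmath}]$ and at most that many of the form $l[\overline{\jmath}]/r[\overline{\imath}]$. Since $\mathcal R$ is finite, there are only finitely many such bounded-length products, so every one of them that exceeds $1$ is at least some fixed $\delta>1$ \emph{independent of the window}. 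This uniform $\delta$ is exactly what your argument lacks: chopping the infinite run into consecutive windows, some coordinate gains a factor $\ge\delta$ infinitely often and hence exceeds $1$. If you want to salvage your plan, replace the cycle graph by this ``same base entry, bounded product of ratios'' pigeonhole; merely invoking strictness and monotonicity around the cycle will not suffice.
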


These results provide an effective procedure for checking congruence
closure over the tropical semiring.  We will mainly apply them to
weighted automata, but expect that they can be useful to solve other
problems. \short{For instance, in \cite{bkk:up-to-weighted-arxiv}, we
  show an interesting connection to the shortest path problem.}
\full{For instance, in Appendix~\ref{sec:dijkstra}, we show an
  interesting connection to the shortest path problem.}

\paragraph{Termination for Lattices.}

We next turn to lattices and give a sufficient condition for
termination on lattices. Obviously, rewriting terminates for lattices
for which the ascending chain condition holds (i.e., every ascending
chain eventually becomes stationary), but one can go beyond that.

In this section,  we assume a completely distributive
lattice $\mathbb L$ and a boolean algebra $\mathbb B$ such that the
orders of $\mathbb L$ and $\mathbb B$, as well as the infima
coincides. Suprema need not coincide.  
Thus, whenever there is ambiguity, we will add the index $\mathbb{B}$
or $\mathbb{L}$ to the operator. For the negation of a given
$x\in\mathbb B$, we write $\neg x$.
One way to obtain such a boolean algebra -- in particular one where
the suprema coincide as well -- is via Funayama's theorem, see
\cite{BGJ13}. \full{This embedding is also discussed in
  \cref{sec:appendix-congruence-closure}.}


We want to show that if $\mathbb L$ approximates $\mathbb B$ ``well
enough'', the rewriting algorithm terminates for $\mathbb L$.



\begin{thm}
  The approximation of an element $\ell\in\mathbb B$ in the lattice
  $\mathbb L$ is defined
  as 
  $\lfloor \ell\rfloor = {\bigsqcup}_{\mathbb L}\{\ell'\in\mathbb
  L\mid \ell'\leq \ell\}$.

  Let $\mathcal R$ be a rewriting system for vectors in $\mathbb L^X$.
  Whenever the set
  $L(l,x)=\{\ell\in\mathbb L\mid \lfloor \neg {l[x]\rfloor}\sqsubseteq
  \ell\sqsubseteq\neg {l[x]}\}$
  is finite for all rules $(l\mapsto r)\in\mathcal R$ and 
  all $x\in X$, rewriting terminates.
  \label{thm-termcond}
\end{thm}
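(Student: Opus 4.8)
The plan is to prove termination by showing that, for any starting vector $v$, the set $W = \{\hat v \mid v \leadsto^{*} \hat v\}$ of reachable vectors is finite. Since $\leadsto$ is irreflexive and, by the side condition in \cref{def:RewritingNF}, every rewriting step strictly increases the vector in the order $\sqsubseteq$, an infinite rewriting sequence would be an infinite strictly ascending chain inside the finite set $W$, which is impossible; hence rewriting terminates. (Equivalently, once $W$ is finite one reaches $\bigsqcup W$ after finitely many steps, so \cref{cor:terminationgeneral} together with \cref{prop:rewritingtogreater} also yields the claim.) It therefore suffices to bound, coordinate by coordinate, the values occurring in $W$.

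The first step is a closed form for the residuum in terms of $\mathbb B$. As $\mathbb L$ is a completely distributive lattice, its multiplication is binary meet, so a rewriting step with a rule $l \mapsto r$ replaces $\hat v$ by $\hat v \sqcup r \cdot \ell^{*}$, where $\ell^{*} = l \to \hat v = \bigsqcap_{x \in X}(l[x] \to \hat v[x])$ and $l[x] \to \hat v[x] = \bigsqcup_{\mathbb L}\{\ell \in \mathbb L \mid l[x] \sqcap \ell \sqsubseteq \hat v[x]\}$. Using that $\mathbb L$ and $\mathbb B$ have the same order and the same infima and that $\mathbb B$ is boolean, complete distributivity of $\mathbb L$ (to push meets inside the supremum defining $\lfloor \cdot \rfloor$) gives $l[x] \to \hat v[x] = \lfloor \neg l[x] \vee_{\mathbb B} \hat v[x] \rfloor$. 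Two consequences are used below: this residuum depends on $\hat v[x]$ only through the meet $l[x] \sqcap \hat v[x]$, since $\neg l[x] \vee_{\mathbb B} \hat v[x] = \neg l[x] \vee_{\mathbb B} (l[x] \sqcap \hat v[x])$ in $\mathbb B$; and it always lies above $\lfloor \neg l[x] \rfloor$ and below $\neg l[x] \vee_{\mathbb B} \hat v[x]$.

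The second step confines $W$ to a fixed box: since $r \cdot \ell^{*} \sqsubseteq r$, each step only enlarges coordinate $x$ towards the fixed value $v[x] \sqcup \bigsqcup\{ r[x] \mid (l \mapsto r) \in \mathcal R\}$, and there are finitely many rules. The third step is where the hypothesis enters: one shows that, for each rule $l \mapsto r$, as $\hat v$ ranges over $W$ the relevant meet $l[x] \sqcap \hat v[x]$ — equivalently, by the closed form, the residuum $l[x] \to \hat v[x]$, hence the scalar $\ell^{*}$ — ranges over a finite set. The intuition is that two reachable vectors with the same image in $\mathbb B$ can differ, in coordinate $x$, only inside the ``gap'' between $\lfloor \neg l[x] \rfloor$ and $\neg l[x]$, and this gap is governed by the set $L(l,x)$, finite by assumption; combined with the finite box this leaves only finitely many possible increments $r \cdot \ell^{*}$, and since every element of $W$ is a join of $v$ with finitely many such (idempotent) increments, $W$ is finite.

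I expect the delicate point — the heart of the proof — to be the third step: converting the finiteness of the individual intervals $L(l,x)$ into a genuine finite bound on the residua $l[x] \to \hat v[x]$ that can be produced along a run. This requires exploiting complete distributivity together with the fact that $\mathbb L$ is meet-closed inside $\mathbb B$, in order to argue that the ``boolean silhouette'' of a reachable vector pins it down up to the finitely many choices recorded in $L(l,x)$. By contrast, the closed form for the residuum and the box argument are essentially routine computations in a completely distributive lattice.
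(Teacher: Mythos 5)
Your overall architecture matches the paper's: reduce termination to finiteness of the set $W$ of vectors reachable from $v$ (strict increase of $\leadsto$ then rules out infinite runs), and exploit the decomposition of the residuum over $\mathbb B$ together with finiteness of the sets $L(l,x)$. Your Steps 1 and 2 are fine and correspond to \cref{lemlatticebool3} and \cref{lemlatticebool4}. The problem is that Step 3 --- the only place where the hypothesis of \cref{thm-termcond} is actually used --- is not proved, and the sketch you give for it is circular. You argue that the residua $l[x]\to\hat v[x]$ range over a finite set ``as $\hat v$ ranges over $W$'' because the meets $l[x]\sqcap\hat v[x]$ do; but that presupposes that the coordinate values occurring in $W$ form a finite set, which is exactly what you are trying to establish. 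The ``finite box'' of Step 2 does not help: an order interval in an infinite lattice can contain infinitely many elements, so bounding $W$ above and below gives no cardinality bound. Likewise ``every element of $W$ is a join of $v$ with finitely many increments'' only yields finiteness once you already know the increments are drawn from a finite pool, which is again the point at issue.

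The paper closes this gap with a closure argument on a finitely generated sublattice rather than a run-dependent bound. By \cref{lemlatticebool5}, each residuum can be written as $l^*\sqcup_{\mathbb B}\hat v[x]$ where $l^*$ is chosen from $L(l,x)$ --- a finite set fixed \emph{in advance}, independently of $\hat v$ (only the choice of element within it depends on $\hat v$). Consequently every coordinate value produced during rewriting is obtained by applying infima and suprema to one fixed finite generating set: the entries of the initial vector, the entries of the rule vectors, and the elements of the finitely many finite sets $L(l,x)$. Since a distributive lattice generated by finitely many elements is finite (bring terms into conjunctive normal form), all reachable coordinate values lie in a fixed finite set, hence $W$ is finite and rewriting terminates. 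This generator argument is exactly what your Step 3 needs in place of the ``boolean silhouette'' heuristic; note that a careful version must also account for the fact that the rewriting step uses $\sqcup_{\mathbb L}$, which need not coincide with $\sqcup_{\mathbb B}$, a point neither your draft nor the paper's one-paragraph proof dwells on.
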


Note that
$[\neg\ell] = [\ell\rightarrow_\mathbb{B} 0] =
\ell\rightarrow_\mathbb{L} 0$\full{
  (Lemma~\ref{lemslatticebool})}.  Hence the theorem says that there
should be only finitely many elements between the negation of an
element in the lattice and the negation of the same element in the
boolean algebra.
As a simple corollary we obtain that the rewriting algorithm
terminates for all boolean algebras.

\section{Up-To Techniques for Weighted Automata}
\label{sec:applications}


In this section we present applications of our congruence closure
method. More specifically, we investigate weighted automata and
present up-to techniques both for the language equivalence and the
inclusion problem, which are variants of the efficient up-to based
algorithm presented in \cite{bp:checking-nfa-equiv}. For the tropical
semiring we also give a procedure for solving the threshold problem,
based on the language inclusion algorithm.




\subsection{Language Equivalence for Weighted Automata}\label{sec:langequiv}

We turn our attention towards weighted automata and their languages.

A \emph{weighted automaton} over the semiring $\mathbb S$ and alphabet
$A$ is a triple $(X,o,t)$ where $X$ is a finite set of states,
$t= (t_a \colon X\rightarrow \mathbb S^X)_{a\in A}$ is an $A$-indexed
set of transition functions and $o \colon X\rightarrow\mathbb S$ is
the output function.  Intuitively $t_a(x)(y)=s$ means that the states
$x$ can make a transition to $y$ with letter $a\in A$ and weight
$s\in \mathbb S$ (sometimes written as $x\xrightarrow{a,s}y$).  The
functions $t_a$ can be represented as $X\times X$-matrices with values
in $\mathbb S$ and $o$ as a row vector of dimension $X$.  Given a
vector $v\in \mathbb S^X$, we use $t_a(v)$ to denote the vector
obtained by multiplying the matrix $t_a$ by $v$ and $o(v)$ to denote
the scalar in $\mathbb S$ obtained by multiplying the row vector $o$
by the column vector $v$.

	

A \emph{weighted language} is a function $\varphi\colon A^*\to \mathbb S$, where $A^*$ is the set of all words over $A$. We will use $\epsilon$ to denote the empty word and $aw$ the concatenation of a letter $a\in A$ with the word $w\in A^*$.  Every weighted automaton is associated with a function $\llbracket - \rrbracket \colon  \mathbb S^X \to  \mathbb S^{A^*}$ mapping  each vector into its \emph{accepted language}. For all $v\in \mathbb S^X$, $a\in A$ and $w\in A^*$, this is defined as
\[
\llbracket v \rrbracket (\epsilon) = o(v) \quad \qquad
\llbracket v \rrbracket (aw) = \llbracket t_a(v) \rrbracket (w)\text{.}
\]

Two vectors $v_1,v_2\in \mathbb S^X$ are called \emph{language
  equivalent}, written $v_1\sim v_2$ iff
$\llbracket v_1 \rrbracket = \llbracket v_2 \rrbracket$. \footnote{The
  accepted notions of language and language equivalence can be given
  for states rather than for vectors by assigning to each state
  $x\in X$ the corresponding unit vector $e_x\in \mathbb S^X$. On the
  other hand, when weighted automata are given with an initial vector
  $i$ -- which is often the case in literature -- one can define the
  language of an automaton as $\llbracket i \rrbracket$.}
The problem of checking language equivalence in weighted automata for
an arbitrary semiring is undecidable: for the tropical semiring this
was shown by Krob in \cite{Krob94theequality}; the proof was later simplified in \cite{Alma11whatsdecidable}.
However, for several semirings the problem is decidable, for instance
for all (complete and distributive) lattices. For finite non-deterministic automata, i.e., automata weighted over
the boolean semiring, Bonchi and Pous introduced in \cite{bp:checking-nfa-equiv} the algorithm \texttt{HKC}.
The name stems from the fact that the algorithm extends the procedure of
Hopcroft and Karp \cite{hk:equ-finite-automata} with congruence closure.

\cref{fig:hkc} shows the extension of \texttt{HKC} to
weighted automata over an arbitrary semiring: the code is the same as the one in \cite{bp:checking-nfa-equiv}, apart from the fact that, rather than exploring sets of states, the algorithm works with vectors in $\mathbb S^X$. The check at step \texttt{(3.2)} can be performed with the procedures discussed in Section \ref{sec:cong-closure}.

Below we prove that the algorithm is sound and complete, but termination can fail
in two ways: either the check at step \texttt{(3.2)} does not terminate, or the while loop at step \texttt{(3)} does not. 
For the tropical semiring we have seen that the check at step \texttt{(3.2)} can be effectively performed by rewriting (Theorem \ref{thm:rewritingterminates01maxtimes}). Therefore, due to Krob's undecidability result,  the while  loop at step \texttt{(3)} 
may not terminate.  For (distributive)
lattices, we have shown termination of rewriting under some additional
constraints (Theorem \ref{thm-termcond}); moreover the loop at \texttt{(3)} will always terminate,
because from a given finite set of lattice elements only finitely many
lattice elements can be constructed using infimum and supremum
\cite{KK16}.

%
%
%
%
%
%
%

\begin{figure}[t]
\centering
\underline{\texttt{HKC} $(v_1,v_2)$}
\begin{codeNT}
(1) $R := \emptyset$; $todo := \emptyset$
(2) insert $(v_1,v_2)$ into $todo$
(3) while $todo$ is not empty do 
   (3.1)  extract $(v_1',v_2')$ from $todo$
   (3.2)  if $(v_1',v_2')\in c(R)$ then continue
   (3.3)  if $o(v_1')\neq o(v_2')$ then return false
   (3.4)  for all $a\in A$, 
             insert $(t_a(v_1'),\,t_a(v_2'))$ into $todo$
   (3.5)  insert $(v_1',v_2')$ into $R$ 
(4) return true
\end{codeNT}
\caption{Algorithm to check the equivalence of vectors $v_1,v_2\in \mathbb{S}^X$ for a weighted automata $(X,o,t)$.}
\label{fig:hkc}
\end{figure}

%
%
%
%


To prove soundness of \texttt{HKC}, we introduce the notions of simulation and
bisimulation up-to.  Let $Rel_{\mathbb S^X}$ be the complete lattice
of relations over $\mathbb S^X$ and
$\beq \colon Rel_{\mathbb S^X} \to Rel_{\mathbb S^X} $ be the monotone
map defined for all $R \subseteq \mathbb S^X \times \mathbb S^X$ as
\begin{equation*}
\beq(R)= \{(v_1,v_2) \mid o(v_1)=o(v_2) \text{ and for all }a\in A, \; (t_a(v_1), t_a(v_2))\in R \} 
\end{equation*}

\begin{defi} 
  A relation $R \subseteq \mathbb S^X \times \mathbb S^X$ is a
  \emph{$\beq$-simulation} if $R\subseteq \beq(R)$, i.e., for all
  $(v_1,v_2)\in R$: (i) $o(v_1)=o(v_2)$; (ii) for all $a\in A$,
  $(t_a(v_1), t_a(v_2))\in R$.

  For a monotone map
  $f\colon Rel_{\mathbb S^X} \to Rel_{\mathbb S^X}$, a
  \emph{$\beq$-simulation up-to $f$} is a relation $R$ such that
  $R\subseteq \beq(f(R))$.
\end{defi}

It is easy to show (see e.g. \cite{PS11}) that $\beq$-simulation
provides a sound and complete proof technique for $\sim$. On the other
hand, not all functions $f$ can be used as sound up-to techniques.
\texttt{HKC} exploits the monotone function
$c\colon Rel_{\mathbb S^X} \to Rel_{\mathbb S^X}$ mapping each
relation $R$ to its congruence closure $c(R)$.

\begin{prop}
  \label{prop:coinduction} Let $v_1,v_2\in \mathbb S^X$. It holds that
  $v_1\sim v_2$ iff there exists a $\beq$-simulation $R$ such that
  $(v_1,v_2)\in R$ iff there exists a $\beq$-simulation up-to $c$ $R$
  such that $(v_1,v_2)\in R$.
\end{prop}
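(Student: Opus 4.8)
The plan is to establish the two biconditionals in turn. The equivalence of $v_1\sim v_2$ with the existence of a $\beq$-simulation containing $(v_1,v_2)$ is the usual soundness and completeness of coinduction and may be taken from \cite{PS11}: completeness holds because, unfolding the definition of $\llbracket-\rrbracket$ and using function extensionality, the relation $\sim$ satisfies $\sim\subseteq\beq(\sim)$ (in fact $\sim=\beq(\sim)$), and soundness follows by an induction on word length showing that any $R$ with $R\subseteq\beq(R)$ is contained in $\sim$. For the second biconditional, one direction is immediate: if $R\subseteq\beq(R)$ then, since $R\subseteq c(R)$ and $\beq$ is monotone, $R\subseteq\beq(c(R))$, so $R$ is also a $\beq$-simulation up-to $c$ and still contains $(v_1,v_2)$.

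The substance of the proof is the converse direction: given a $\beq$-simulation up-to $c$, i.e. a relation $R$ with $R\subseteq\beq(c(R))$, we must produce an honest $\beq$-simulation containing $(v_1,v_2)$. I would take $c(R)$ itself and show $c(R)\subseteq\beq(c(R))$. The key lemma is that $\beq$ maps congruences to congruences: if $S\subseteq\mathbb S^X\times\mathbb S^X$ is a congruence, so is $\beq(S)$. Reflexivity, symmetry and transitivity of $\beq(S)$ are inherited from those of $S$ together with the corresponding properties of equality on $\mathbb S$. Closure of $\beq(S)$ under $+$ and under scalar multiplication is the one point that uses the linear-algebraic structure of weighted automata: since $o$ is a row vector and each $t_a$ a matrix, the maps $v\mapsto o(v)$ and $v\mapsto t_a(v)$ are linear, so from $(v_1,v_1'),(v_2,v_2')\in\beq(S)$ we get $o(v_1+v_2)=o(v_1)+o(v_2)=o(v_1')+o(v_2')=o(v_1'+v_2')$ and, using closure of $S$ under $+$, $(t_a(v_1+v_2),t_a(v_1'+v_2'))=(t_a(v_1)+t_a(v_2),t_a(v_1')+t_a(v_2'))\in S$ for every $a$; hence $(v_1+v_2,v_1'+v_2')\in\beq(S)$, and the scalar case is analogous.

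Granting the lemma, the rest is formal. Setting $S=c(R)$, the relation $S$ is a congruence, hence $\beq(S)$ is a congruence; by hypothesis $R\subseteq\beq(c(R))=\beq(S)$, and since $c(R)$ is the least congruence containing $R$ we conclude $c(R)\subseteq\beq(S)$, that is $S\subseteq\beq(S)$. Thus $c(R)$ is a $\beq$-simulation, and $(v_1,v_2)\in R\subseteq c(R)$. (In the terminology of \cite{PS11}, the same computation shows that $c$ is a $\beq$-compatible up-to technique, from which soundness of reasoning up-to $c$ also follows.) I expect the preservation lemma to be the only real obstacle, and it is routine: it needs nothing about the semiring beyond associativity, commutativity and distributivity, so the argument is uniform over all semirings $\mathbb S$.
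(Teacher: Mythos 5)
Your proof is correct. The first biconditional is handled essentially as in the paper (the paper characterises $\sim$ as $\nu\beq$ via co-continuity and Kleene iteration; your post-fixed-point argument plus induction on word length is the same content). For the second biconditional you take a genuinely different route. The paper invokes the compatible-functions framework of \cite{PS11}: it decomposes $c$ as $(\mathit{Id}\cup r\cup s\cup t\cup +\cup\,\cdot)^\omega$, checks each piece is $\beq$-compatible, and concludes by the compositionality lemma and the abstract soundness theorem for compatible up-to techniques. You instead prove one preservation lemma --- $\beq$ maps congruences to congruences --- and then use minimality of $c(R)$ to get $c(R)\subseteq c(\beq(c(R)))=\beq(c(R))$ directly, so that $c(R)$ itself is the required $\beq$-simulation. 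Your lemma does use the linearity of $o$ and $t_a$ over the (commutative) semiring exactly where the paper's compatibility of $+$ and $\cdot$ does, so the mathematical core is the same; and, as your parenthetical notes, your lemma actually implies full compatibility ($c(\beq(R))\subseteq c(\beq(c(R)))=\beq(c(R))$ for arbitrary $R$). What your version buys is a short, self-contained argument that avoids the general machinery; what the paper's version buys is reusability --- the same compatible building blocks are recombined later for the precongruence closure $p$, for $p\bullet\sqsubseteq$ in the threshold algorithm, and for $p'$ with similarity, which is why the paper routes everything through compatibility.
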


With this result, it is easy to prove the correctness of the algorithm.

\begin{thm} \label{thm:langequicong} Whenever \texttt{HKC} terminates,
  it returns true iff
  $\llbracket v_1\rrbracket = \llbracket v_2\rrbracket$.
\end{thm}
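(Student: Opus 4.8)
The plan is to derive the correctness of \texttt{HKC} from \cref{prop:coinduction}, by showing that on termination the relation $R$ built by the algorithm witnesses (or refutes) the existence of a $\beq$-simulation up-to $c$ containing $(v_1,v_2)$. First I would handle the case where \texttt{HKC} returns \texttt{true}. In this case the while loop exits with $todo$ empty, and I claim that the final relation $R$ is a $\beq$-simulation up-to $c$, i.e.\ $R \subseteq \beq(c(R))$. To see this, take any $(v_1',v_2')\in R$. It was inserted into $R$ at step \texttt{(3.5)}, which means it passed the test at step \texttt{(3.2)} (so it was not discarded) and the test at step \texttt{(3.3)}, hence $o(v_1') = o(v_2')$. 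Moreover, at step \texttt{(3.4)} the pair $(t_a(v_1'), t_a(v_2'))$ was inserted into $todo$ for every $a\in A$. Since the loop terminated with $todo$ empty, every such pair was eventually extracted at some later iteration; when extracted, it was either found to be in $c(R')$ for the then-current relation $R'\subseteq R$ — hence in $c(R)$ by monotonicity of $c$ — or it was itself added to $R$, and so lies in $c(R)$ via rule (\textsc{Rel}). Either way $(t_a(v_1'), t_a(v_2'))\in c(R)$ for all $a$, so $(v_1',v_2')\in\beq(c(R))$. Since the initial pair $(v_1,v_2)$ is inserted into $todo$ at step \texttt{(2)}, the same argument applied to the first iteration shows $(v_1,v_2)\in c(R)$; combined with $R\subseteq\beq(c(R))$ and the (standard, by \cite{PS11}) fact that $c$ is a compatible/sound up-to function for $\beq$, \cref{prop:coinduction} gives $\llbracket v_1\rrbracket = \llbracket v_2\rrbracket$.

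For the converse direction, suppose \texttt{HKC} returns \texttt{false}. Then at some iteration a pair $(v_1',v_2')$ was extracted with $(v_1',v_2')\notin c(R')$ (current $R'$) and $o(v_1')\neq o(v_2')$. I would show by induction on the iterations that every pair ever placed in $todo$ has the form $(t_w(v_1), t_w(v_2))$ for some word $w\in A^*$, where $t_w$ denotes the iterated transition map (this is immediate from step \texttt{(2)} for $w=\epsilon$ and step \texttt{(3.4)} for the inductive step $w\mapsto aw$). Hence $o(t_w(v_1)) \neq o(t_w(v_2))$, and by the defining equations of $\llbracket-\rrbracket$ this means $\llbracket v_1\rrbracket(w) = o(t_w(v_1)) \neq o(t_w(v_2)) = \llbracket v_2\rrbracket(w)$, so $\llbracket v_1\rrbracket \neq \llbracket v_2\rrbracket$. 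Thus \texttt{true} is returned exactly when the languages coincide.

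The main obstacle — and the only genuinely non-routine point — is the soundness of the up-to technique $c$, i.e.\ that a $\beq$-simulation up-to $c$ really does entail $\sim$; but this is exactly what \cref{prop:coinduction} packages, so here it can simply be invoked. The remaining subtlety is bookkeeping about which relation $c(R)$ is meant at step \texttt{(3.2)}: the test uses the relation $R'$ current at that moment, which is a subset of the final $R$, so I must use monotonicity of $c$ to lift membership in $c(R')$ to membership in $c(R)$. Everything else is a straightforward unwinding of the loop invariant.
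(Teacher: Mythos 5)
Your proposal is correct and follows essentially the same route as the paper: the paper states the loop invariant $R\subseteq \beq(c(R)\cup todo)$ and invokes \cref{prop:coinduction} for the \texttt{true} case, which is exactly what your pair-by-pair unwinding establishes, and the \texttt{false} case is handled identically by tracing the offending pair back to a word $w$ with $\llbracket v_1\rrbracket(w)=o(v_1')\neq o(v_2')=\llbracket v_2\rrbracket(w)$. (Your only slip is cosmetic: the successors inserted at step \texttt{(3.4)} correspond to appending a letter, $w\mapsto wa$, not prepending.)
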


\begin{proof}  
  Observe that $R\subseteq \beq (c(R)\cup todo)$ is an invariant for
  the while loop at step (3).

  If \texttt{HKC} returns $true$ then $todo$ is empty and thus
  $R\subseteq \beq (c(R))$, i.e., $R$ is a $\beq$-simulation up-to
  $c$.  By \cref{prop:coinduction}, $v_1\sim v_2$.
  
  Whenever \texttt{HKC} returns false, it encounters a pair
  $(v_1',v_2')\in todo$ such that $o(v_1')\neq o(v_2')$. Observe that
  for all pairs $(v_1',v_2')\in todo$, there exists a word
  $w=a_1a_2\dots a_n \in A^*$ such that
  $v_1'=t_{a_n}(\dots t_{a_2}(t_{a_1}(v_1)))$ and
  $v_2'=t_{a_n}(\dots t_{a_2}(t_{a_1}(v_2)))$.  Therefore
  $\llbracket v_1\rrbracket (w) = \llbracket v_1'\rrbracket (\epsilon)
  = o(v_1') \neq o(v_2') = \llbracket v_2'\rrbracket (\epsilon) =
  \llbracket v_2'\rrbracket (w)$.
\end{proof}



\subsection{Language Inclusion}\label{sec:langincl}
Whenever a semiring $\mathbb{S}$ carries a partial order $\sqsubseteq$, one
can be interested in checking language inclusion of the states of a
weighted automata $(X,o,t)$. More generally, given
$v_1,v_2\in \mathbb S^X$, we say that the language of $v_1$ is
included in the language of $v_2$ (written $v_1 \precsimu v_2$ ) iff
$\llbracket v_1\rrbracket(w)\sqsubseteq\llbracket v_2\rrbracket(w)$ for all
$w\in A^*$.

The algorithm \texttt{HKC} can be slightly modified to check language
inclusion, resulting in algorithm \texttt{HKP}: steps (3.2) and (3.3)
are replaced by
\begin{codeNT}
 (3.2)  if $(v_1',v_2')\in p(R)$ then continue
 (3.3)  if $o(v_1')\not\sqsubseteq o(v_2')$ then return false
\end{codeNT}
where $p\colon Rel_{\mathbb S^X} \to Rel_{\mathbb S^X}$ is the
monotone function assigning to each relation $R$ its pre-congruence
closure $p(R)$. 

\begin{wrapfigure}{R}{0pt}
  $(\textsc{Ord})\
  \begin{array}{c}
    v\sqsubseteq v'\\ \hline
    v\ p(R)\ v'
  \end{array}$
\end{wrapfigure}
The precongruence closure is defined as the closure of $R$ under $\sqsubseteq$,
transitivity and linear combination. That is, in the rules of
\cref{tab:proof-rules} $c(R)$ is replaced by $p(R)$,
rule~(\textsc{Sym}) is removed and rule~(\textsc{Refl}) is replaced by
rule~(\textsc{Ord}) on the right.

The soundness of the modified algorithm can be proved in the same way
as for \texttt{HKC} by replacing $c$ by $p$ and $\beq$ by
$\bin \colon Rel_{\mathbb S^X} \to Rel_{\mathbb S^X} $ defined for all
$R \subseteq \mathbb S^X \times \mathbb S^X$ as
\begin{equation*}
  \bin(R)= \{(v_1,v_2) \mid o(v_1) \sqsubseteq o(v_2) \text{ and for all }
  a\in A, \; (t_a(v_1), t_a(v_1))\in R \} \text{.}
\end{equation*}
However, the soundness of up-to reasoning is guaranteed only if
$\sqsubseteq$ is a precongruence, that is $p(\sqsubseteq)$ is contained in $\sqsubseteq$.

\begin{thm} 
  \label{thm:langeincl} 
  Let $\mathbb{S}$ be a semiring equipped with a precongruence
  $\sqsubseteq$.  Whenever \linebreak \texttt{HKP}($v_1,v_2$) terminates, it
  returns true if and only if $v_1\precsimu v_2$.
\end{thm}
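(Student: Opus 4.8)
The plan is to mirror the proof of \cref{thm:langequicong} exactly, substituting the inclusion-flavoured operators for the equivalence ones. First I would establish the loop invariant $R\subseteq \bin(p(R)\cup todo)$ for the \texttt{while} loop at step~(3) of \texttt{HKP}: initially $R=\emptyset$ so this holds vacuously; each iteration either discards a pair $(v_1',v_2')$ already in $p(R)$ (which does not affect the invariant since $R$ is unchanged and $todo$ only shrinks), or, after checking $o(v_1')\sqsubseteq o(v_2')$, inserts the successor pairs $(t_a(v_1'),t_a(v_2'))$ into $todo$ and $(v_1',v_2')$ into $R$; one checks that the new $R$ still satisfies $R\subseteq\bin(p(R)\cup todo)$, using that $p$ is monotone and that $p(R')\supseteq R'$. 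This is the routine bookkeeping part.

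Next, for the ``if'' direction, I would argue: if \texttt{HKP} terminates returning \texttt{true}, then $todo=\emptyset$, so the invariant gives $R\subseteq\bin(p(R))$, i.e.\ $R$ is a $\bin$-simulation up-to $p$. Since $(v_1,v_2)$ was inserted into $todo$ at the start, and the only way a pair leaves $todo$ without a \texttt{false} being returned is by being absorbed into $p(R)$ or into $R$, we get $(v_1,v_2)\in p(R)$. Now I invoke the inclusion analogue of \cref{prop:coinduction}: because $\sqsubseteq$ is a precongruence (so $p(\sqsubseteq)\subseteq\sqsubseteq$), $p$ is a sound up-to technique for $\precsimu$, hence a $\bin$-simulation up-to $p$ is contained in $\precsimu$, and moreover $p(\precsimu)\subseteq\precsimu$ (the precongruence property again, since $\precsimu$ is itself closed under the $\bin$-steps), giving $(v_1,v_2)\in\precsimu$, i.e.\ $v_1\precsimu v_2$.

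For the ``only if'' direction, I would argue by contraposition, exactly as in \cref{thm:langequicong}: if \texttt{HKP} returns \texttt{false}, it does so because it extracted a pair $(v_1',v_2')\in todo$ with $o(v_1')\not\sqsubseteq o(v_2')$. Every pair ever placed in $todo$ has the form $(t_{a_n}(\cdots t_{a_1}(v_1)\cdots),\, t_{a_n}(\cdots t_{a_1}(v_2)\cdots))$ for some word $w=a_1\cdots a_n$, so $\llbracket v_1\rrbracket(w)=o(v_1')\not\sqsubseteq o(v_2')=\llbracket v_2\rrbracket(w)$, witnessing $v_1\not\precsimu v_2$. This part is a verbatim transcription of the corresponding argument for \texttt{HKC}.

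The main obstacle, and the only place where something genuinely new over \texttt{HKC} is needed, is justifying that $p$ is a \emph{sound} up-to technique — i.e.\ that every $\bin$-simulation up-to $p$ is contained in $\precsimu$ — which is exactly where the hypothesis that $\sqsubseteq$ is a precongruence is consumed. I expect this to follow from the standard compatibility/soundness machinery for up-to techniques (as in \cite{PS11}): one shows $p$ is $\bin$-compatible, i.e.\ $p\circ\bin\subseteq\bin\circ p$, which reduces to checking that the building blocks of $p$ — reflexivity-via-$\sqsubseteq$ (rule \textsc{Ord}), transitivity, scalar multiplication, and pointwise sum — each commute appropriately with $\bin$; the \textsc{Ord} block is the one requiring $\sqsubseteq$ to be compatible with $o$ and the $t_a$, which is precisely the precongruence assumption. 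Everything else is bookkeeping already carried out for the equivalence case, so I would state the compatibility of $p$ as the key lemma and then assemble the theorem from it.
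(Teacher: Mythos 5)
Your proposal is correct and follows essentially the same route as the paper: the same loop invariant $R\subseteq \bin(p(R)\cup todo)$, the same reduction of soundness to the $\bin$-compatibility of $p$ (which the paper establishes by decomposing $p$ into the blocks $\sqsubseteq$, transitivity, sum and scalar, with the precongruence hypothesis consumed exactly in the \textsc{Ord} block, cf.\ \cref{compatibilitybin}, \cref{lemma:c'compatible}, \cref{thm:coinupto} and \cref{lemma:greteastbin}), and the same contrapositive word-witness argument for completeness. Your explicit remark that a pair skipped at step (3.2) still lands in $p(R)\subseteq\precsimu$ is a small point the paper's terse proof glosses over, but it is not a different approach.
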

%
%

In order for \texttt{HKP} to be effective, we need a procedure to
compute $p$.  When $\mathbb{S}$ is an 
%
integral $l$-monoid, we can check $(v,v')\in p(R)$ via a
variation of the congruence check, using a rewriting system as in
\cref{sec:cong-closure-l-monoids}.

\begin{prop}\label{prop:Preconalgocorrect}  
  Let $\mathbb L$ be an integral $l$-monoid and let
  $R\subseteq {\mathbb L}^X\times {\mathbb L}^X$ be a relation. 
  
  The set of rules $\mathcal R$ is defined as  $\{ v'\mapsto v\sqcup
  v' \mid (v,v')\in R \}$.\footnote{Whenever $v\le v'$, the rule can be omitted, since it is
    never applicable.}	 Rewriting steps are defined as in \cref{def:RewritingNF}. If the rewriting algorithm terminates, then for all $v,v'\in {\mathbb L}^X$, $(v,v')\in p(R)$ iff
  $\Downarrow\!v'\geq v$ (where, as usual, $\Downarrow\!v'$ denotes the normal form of $v'$).
\end{prop}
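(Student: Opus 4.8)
The plan is to prove the two implications separately, mirroring the structure used for the congruence case (\cref{theorem-correctness} and \cref{thm:RewritingCorrectness}). First I would record that the asymmetric rule set $\mathcal R = \{v'\mapsto v\sqcup v'\mid (v,v')\in R\}$ is a subset of the rules one would obtain in the congruence case, and that \cref{lem:rewriting,lem:ChurchRosser,lem:prepcorrectness} only speak about individual rewrite steps relative to arbitrary vectors and scalars; hence they hold verbatim for $\mathcal R$. Together with the termination hypothesis, \cref{lem:ChurchRosser} gives confluence, so every vector $v'$ has a unique normal form $\Downarrow\!v'$, the map $\Downarrow$ is monotone and idempotent, $v'\leadsto^* u$ implies $\Downarrow\!v' = \Downarrow\!u$, and since each step is strictly increasing we always have $v'\sqsubseteq\,\Downarrow\!v'$.

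For soundness ($\Downarrow\!v'\ge v \Rightarrow (v,v')\in p(R)$), I would first show that a single step $w\leadsto w'$ via a rule $l\mapsto r$ with $l = v'$, $r = v\sqcup v'$, $(v,v')\in R$, satisfies $w'\ p(R)\ w$. From $v\ R\ v'$ we get $v\ p(R)\ v'$ by (\textsc{Rel}) and $v'\ p(R)\ v'$ by (\textsc{Ord}); (\textsc{Plus}) then yields $r = v\sqcup v'\ p(R)\ v' = l$, and (\textsc{Sca}) gives $r\cdot(l\to w)\ p(R)\ l\cdot(l\to w)$. Since $l\cdot(l\to w)\sqsubseteq w$ by the defining property of residuation, (\textsc{Ord}) and (\textsc{Trans}) give $r\cdot(l\to w)\ p(R)\ w$, and combining with $w\ p(R)\ w$ via (\textsc{Plus}) gives $w' = w\sqcup r\cdot(l\to w)\ p(R)\ w$. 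Iterating along $v'\leadsto^*\,\Downarrow\!v'$ and using (\textsc{Trans}) gives $\Downarrow\!v'\ p(R)\ v'$; if moreover $v\sqsubseteq\,\Downarrow\!v'$, then (\textsc{Ord}) and (\textsc{Trans}) conclude $v\ p(R)\ v'$.

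For completeness ($(v,v')\in p(R)\Rightarrow\,\Downarrow\!v'\ge v$), I would do rule induction on the derivation of $(v,v')\in p(R)$, maintaining the invariant $v\sqsubseteq\,\Downarrow\!v'$. The (\textsc{Ord}) case is immediate from $v'\sqsubseteq\,\Downarrow\!v'$; (\textsc{Trans}) uses monotonicity and idempotence of $\Downarrow$; (\textsc{Sca}) uses monotonicity of multiplication together with $(\Downarrow\!w)\cdot s\sqsubseteq\,\Downarrow\!(w\cdot s)$ from \cref{lem:prepcorrectness}; (\textsc{Plus}) uses monotonicity of $\Downarrow$ to get $\Downarrow\!v_1'\sqcup\,\Downarrow\!v_2'\sqsubseteq\,\Downarrow\!(v_1'\sqcup v_2')$. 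The base case (\textsc{Rel}), with $(v,v')\in R$, is where integrality enters: either $v\sqsubseteq v'$ and we are done, or the rule $v'\mapsto v\sqcup v'$ is active and, since $l\to v' = \top = 1$ by integrality, applies to $v'$ itself, so $v'\leadsto v\sqcup v'$ and hence $\Downarrow\!v' =\,\Downarrow\!(v\sqcup v')\sqsupseteq v\sqcup v'\sqsupseteq v$.

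I expect the only real subtlety to be the bookkeeping around confluence and the behaviour of $\Downarrow$ under $\leadsto^*$, $\sqcup$ and scalar multiplication — in particular, carefully justifying that the meta-lemmas of \cref{sec:cong-closure-l-monoids} transfer unchanged to the asymmetric rule set $\mathcal R$ — while the remaining arguments are a routine induction over the proof rules defining $p(R)$.
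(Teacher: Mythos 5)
Your proposal is correct and follows essentially the same route as the paper: it splits the claim into the two asymmetric analogues of \cref{theorem-correctness} and \cref{thm:RewritingCorrectness}, reuses \cref{lem:rewriting,lem:ChurchRosser,lem:prepcorrectness} verbatim for the one-sided rule set, derives $r\ p(R)\ l$ from (\textsc{Rel}), (\textsc{Ord}) and (\textsc{Plus}) to handle a single rewrite step, and closes the converse direction by rule induction over the modified proof rules. Your treatment of the (\textsc{Rel}) base case (checking applicability via integrality, or falling back to $v\sqsubseteq v'$) is slightly more explicit than the paper's, but it is the same argument.
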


Observe that Theorems \ref{thm:rewritingterminates01maxtimes} and \ref{thm-termcond} guarantee termination for certain specific $l$-monoids. In particular, termination for the tropical semiring will be pivotal henceforward.

\subsection{Threshold Problem for Automata over the Tropical Semiring}
\label{sec:threshold}

Language inclusion for weighted automata over the tropical semiring is
not decidable, because language equivalence is not.  However, the
algorithm that we have introduced in the previous section can be used
to solve the so called \emph{threshold problem} over the tropical
semiring of natural numbers
$(\mathbb N_0\cup\{\infty\},\min, +,\infty, 0)$. The problem is to check whether for a given threshold
$T\in\mathbb N_0$, a vector of states of a weighted automaton
$v\in\mathbb (\mathbb N_0\cup\{\infty\})^X$ satisfies the threshold
$T$, i.e. $\llbracket v\rrbracket(w)\leq T$ for all $w\in A^*$. 

Note that this problem is also known as the \emph{universality
  problem}: universality for non-deterministic automata can be easily
reduced to it, by taking weight $0$ for each transition and setting
$T=0$ for the threshold.
\begin{wrapfigure}{R}{0pt}
  \vspace{-2cm}
  \scalebox{0.8}{
  \tikzset{elliptic state/.style={draw,ellipse}}
    \begin{tikzpicture}[x=2cm,y=-2cm,double distance=2pt]
    	\node[elliptic state] (d) at (3,0) {$t$} ;
      \node (final2) at (3.5,0){};
      \begin{scope}[->]
        
        \path[shorten <=1pt] 
          (d) edge[loop left] node[arlab,above] {$a,0$} (d) 
          		edge node[arlab,above] {$T$} (final2) ;
      \end{scope}
    \end{tikzpicture}
}
\end{wrapfigure}

This problem -- which is known to be $\mathsf{PSPACE}$-complete
\cite{Alma11whatsdecidable} -- can be reduced to language inclusion by adding a
new state $t$ with output $o(t)=T$ and a $0$ self-loop for each letter
$a\in A$. Then we check whether the language of $v$ includes the
language of the unit vector $e_t$.

It is worth to note that in
$(\mathbb N_0\cup\{\infty\},\min, +,\infty, 0)$ the ordering
$\sqsubseteq$ is actually $\geq$, the reversed ordering on natural
numbers. Therefore to solve the threshold problem, we need to check
$e_t\precsimu v$. 

The reader can easily concoct an example where $\texttt{HKP}$ may not terminate. 
However, it has already been observed in
\cite{Alma11whatsdecidable} that it is a sound reasoning
technique to replace every vector entry larger than $T$ by $\infty$.
%
%
%
To formalise this result, we will first introduce an abstraction
mapping $\mathcal A$ and then state our modified algorithm:

\begin{defi}
  Let a threshold $T\in\mathbb N_0$ be given. We define the
  abstraction
  $\mathcal A:\mathbb N_0\cup\{\infty\}\rightarrow\mathbb
  N_0\cup\{\infty\}$
  according to $\mathcal A(s) = s$ if $s\leq T$ and
  $\mathcal A(s) = \infty$ otherwise. The definition extends
  elementwise to vectors in $(\mathbb N_0\cup\{\infty\})^X$.
\end{defi}

With this definition, we call $\HKPA$, the algorithm obtained from
$\texttt{HKP}$ by replacing step \texttt{(3.4)} with the following:
\begin{codeNT}
 (3.4)  for all $a\in A$ 
          insert $(t_a(v_1'), \mathcal A(t_a(v_2')))$ into $\mathit{todo}$
\end{codeNT}

Now to check whether a certain vector $v$ satisfies the threshold of $T$, it is enough to run $\HKPA(e_t,v)$ where $e_t$ is the unit vector for $t$ as defined above.

The soundness of the proposed algorithm can be shown in essentially
the same way as for $\texttt{HKP}$ but using a novel up-to technique
to take care of the abstraction $\mathcal A$\full{ (see \cref{sec:appendix-applications})}. 
For the completeness, we need the following additional result.

\begin{lem}~ 
  \label{lems:Atacompat}
  For all vectors $v\in(\mathbb N_0\cup\{\infty\})^X$ it holds that
  (i) $\mathcal A(t_a(\mathcal A(v)))=\mathcal A(t_a(v))$; (ii)
  $\mathcal A(o(\mathcal A(v)))=\mathcal A(o(v))$.
\end{lem}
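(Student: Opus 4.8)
The plan is to unfold the definitions of $t_a$ and $o$ as matrix/row-vector multiplication over the tropical semiring $(\mathbb N_0\cup\{\infty\},\min,+,\infty,0)$, so that $t_a(v)[x]=\min_{y\in X}(t_a[x,y]+v[y])$ and $o(v)=\min_{y\in X}(o[y]+v[y])$, and then argue componentwise. The only facts about $\mathcal A$ I will use are that it is monotone for the natural order $\le$ on $\mathbb N_0\cup\{\infty\}$ (with $\infty$ as top element) and that it is inflationary, i.e.\ $s\le\mathcal A(s)$ for every $s$; both properties are immediate from the definition and extend elementwise to vectors, giving in particular $v\le\mathcal A(v)$. (Abstractly $\mathcal A$ is a closure operator on $\mathbb N_0\cup\{\infty\}$, and the lemma expresses compatibility of the transition operators with this closure.)

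For (i), I would prove the two inequalities separately. The inequality $\mathcal A(t_a(v))\le\mathcal A(t_a(\mathcal A(v)))$ follows just by chaining monotonicities: from $v\le\mathcal A(v)$ and monotonicity of $+$ and $\min$ we get $t_a(v)\le t_a(\mathcal A(v))$ componentwise, and applying the monotone map $\mathcal A$ yields the claim. For the reverse inequality, fix a state $x\in X$. If $t_a(v)[x]>T$ then $\mathcal A(t_a(v))[x]=\infty$ and there is nothing to show. Otherwise $t_a(v)[x]=\min_y(t_a[x,y]+v[y])\le T$; pick an index $y_0$ attaining this minimum. Since $t_a[x,y_0]\ge 0$ we have $v[y_0]\le t_a[x,y_0]+v[y_0]=t_a(v)[x]\le T$, hence $\mathcal A(v)[y_0]=v[y_0]$. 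Therefore $t_a(\mathcal A(v))[x]\le t_a[x,y_0]+\mathcal A(v)[y_0]=t_a[x,y_0]+v[y_0]=t_a(v)[x]\le T$, so $\mathcal A(t_a(\mathcal A(v)))[x]=t_a(\mathcal A(v))[x]\le t_a(v)[x]=\mathcal A(t_a(v))[x]$. Combining the two inequalities gives (i).

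Part (ii) is the identical argument with the single scalar expression $o(v)=\min_y(o[y]+v[y])$ in place of the component $t_a(v)[x]$: monotonicity and inflationarity give $\mathcal A(o(v))\le\mathcal A(o(\mathcal A(v)))$, and if $o(v)\le T$ one again picks a minimizing $y_0$, notes $v[y_0]\le o(v)\le T$ so $\mathcal A(v)[y_0]=v[y_0]$, and concludes $o(\mathcal A(v))\le o(v)$, whence equality after applying $\mathcal A$. I do not expect a genuine obstacle here — the statement is essentially a computation — the one point that needs care is precisely the observation that, when $t_a(v)[x]\le T$ (resp.\ $o(v)\le T$), the index realizing the minimum must have $v[y_0]\le T$, which is exactly what makes $\mathcal A$ vanish on that coordinate and lets the two sides collapse.
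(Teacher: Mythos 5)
Your proof is correct and follows essentially the same route as the paper's: both reduce the claim to the case where the relevant value is at most $T$ (the $\infty$ case being handled by monotonicity of $\mathcal A$) and exploit the fact that, since all weights are non-negative, any index realizing the minimum must have $v$-entry at most $T$ and is therefore left unchanged by $\mathcal A$. The only cosmetic difference is that you argue via two inequalities and a single minimizing index, whereas the paper splits the minimum over the set $I$ of abstracted indices and shows that part contributes only $\infty$.
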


\begin{thm}
  \label{thm:modifalgcorrect}
  $\HKPA(e_t, v_1)$ always terminates. Moreover $\HKPA(e_t, v_1)$ returns true iff $\llbracket v_1\rrbracket(w)\leq T$ for all $w\in A^*$.
\end{thm}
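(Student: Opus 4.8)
The statement has two parts: termination of $\HKPA(e_t,v_1)$, and its correctness (returning true iff $\llbracket v_1\rrbracket(w)\le T$ for all $w$). I would handle termination first, since it is the genuinely new point. The key observation is that, thanks to the abstraction $\mathcal A$ applied in step \texttt{(3.4)}, every second component $v_2'$ that is ever inserted into $\mathit{todo}$ (after the initial pair) is an $\mathcal A$-abstracted vector, i.e.\ each entry lies in $\{0,1,\dots,T\}\cup\{\infty\}$, a finite set. Hence there are only finitely many possible second components, and since the first component is always the unit vector $e_t$ pushed through $0$-weighted self-loops, it stays equal to $e_t$ (or, more precisely, ranges over a finite set as well — in the reduction $t$ has only the $0$-self-loop, so $t_a(e_t)=e_t$). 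Consequently only finitely many distinct pairs can ever be extracted at step \texttt{(3.1)} without being discarded at step \texttt{(3.2)}: once a pair is added to $R$, any future pair equal to it is caught by the precongruence check (since $R\subseteq p(R)$). Therefore the while loop at \texttt{(3)} terminates. Termination of the inner check at \texttt{(3.2)} is Theorem~\ref{thm:rewritingterminates01maxtimes} (via the isomorphism $\mathbb T\cong\mathbb M$ and the fact that the tropical semiring over $\mathbb N_0$ embeds in $\mathbb T$), so the algorithm as a whole terminates.

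For correctness I would argue that $\HKPA(e_t,v_1)$ computes the same answer as $\texttt{HKP}(e_t,v_1)$ would if the latter terminated, i.e.\ it decides $e_t\precsimu v_1$, which by the reduction described just before the theorem is equivalent to $\llbracket v_1\rrbracket(w)\le T$ for all $w\in A^*$. Soundness (if it returns true, then $e_t\precsimu v_1$) follows the pattern of the proof of Theorem~\ref{thm:langequicong}/Theorem~\ref{thm:langeincl}: one exhibits a loop invariant of the form $R\subseteq \bin(f(R)\cup \mathit{todo})$ for a suitable up-to function $f$ that now also absorbs the abstraction $\mathcal A$. The paper alludes to "a novel up-to technique to take care of the abstraction $\mathcal A$"; concretely $f(R)$ should be $p(R)$ closed up under replacing a right-hand component $v_2$ by any $v_2'$ with $\mathcal A(v_2')=\mathcal A(v_2)$ — or, more simply, one composes $p$ with the relation $\{(v,\mathcal A(v))\}\cup\{(\mathcal A(v),v)\}$ restricted appropriately so that soundness for $\precsimu$ is preserved. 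The crucial algebraic fact enabling this is exactly Lemma~\ref{lems:Atacompat}: $\mathcal A(t_a(\mathcal A(v)))=\mathcal A(t_a(v))$ and $\mathcal A(o(\mathcal A(v)))=\mathcal A(o(v))$, which say that $\mathcal A$ is compatible with the automaton structure "up to the threshold", so that abstracting a vector does not change the truth of the relevant $\le T$ comparisons downstream. For completeness (if $e_t\precsimu v_1$, it returns true), I would show that the algorithm never hits step \texttt{(3.3)} returning false: for every pair $(v_1',v_2')$ extracted, $v_1'$ is $e_t$ composed with $0$-transitions so $o(v_1')=T$, while $v_2'$ is $\mathcal A$ applied to the corresponding derivative of $v_1$; using Lemma~\ref{lems:Atacompat} and the hypothesis $e_t\precsimu v_1$ (which gives $T = \llbracket e_t\rrbracket(w)\le\llbracket v_1\rrbracket(w)$, i.e.\ in the reversed order $o(v_1')\sqsubseteq o(v_2')$, since $\sqsubseteq$ is $\ge$), the test at \texttt{(3.3)} never fails, so by termination the algorithm returns true.

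The main obstacle I anticipate is making the soundness direction fully rigorous, specifically: pinning down the right up-to function $f$ that simultaneously (a) is sound for $\bin$ in the sense of \cite{PS11} (i.e.\ $f$ is $\bin$-compatible or at least below the greatest $\bin$-compatible function), (b) contains the precongruence closure $p$ so that step \texttt{(3.2)} is justified, and (c) absorbs $\mathcal A$ so that the invariant survives step \texttt{(3.4)}. Establishing that this $f$ is a sound up-to technique — rather than just plausible — is where the compatibility computation (again leaning on Lemma~\ref{lems:Atacompat}) has to be done carefully; the fact that $\precsimu$ only requires a one-sided inequality and that $\mathcal A(v)\sqsupseteq v$ in the tropical order (abstraction only increases the "real" value, hence moves up in $\sqsubseteq=\ge$) is what makes the one-sided up-to-$\mathcal A$ step sound, and I would make sure the monotonicity direction is used correctly there. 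The termination argument and the completeness argument, by contrast, I expect to be routine given the finiteness of the abstracted state space and Lemma~\ref{lems:Atacompat}.
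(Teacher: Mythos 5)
Your proposal matches the paper's proof in all three parts: termination from the finiteness of the set of vectors with entries in $\{0,\dots,T,\infty\}$ (you are in fact more explicit than the paper about the first component remaining $e_t$), soundness via the invariant $R\subseteq \bin((p(R)\cup \mathit{todo})\bullet\sqsubseteq)$ --- the paper's concrete realisation of your ``up-to function absorbing $\mathcal A$'' is simply relational composition with $\sqsubseteq$ on the right, whose $\bin$-compatibility is \cref{lemma:c'bullet}, rather than any symmetric $\mathcal A$-closure --- and completeness by telescoping the nested $\mathcal A$-applications with \cref{lems:Atacompat} exactly as you describe. One correction on the point you flagged yourself: since $\sqsubseteq$ is $\geq$ and $\mathcal A$ numerically increases entries, we have $\mathcal A(v)\sqsubseteq v$ (not $\sqsupseteq$), which is precisely the direction needed for $(t_a(v_1'),\mathcal A(t_a(v_2')))\in \mathit{todo}$ to witness $(t_a(v_1'),t_a(v_2'))\in \mathit{todo}\bullet\sqsubseteq$.
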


\subsection{Exploiting Similarity}
\label{sec:exploiting-similarity}

For checking language inclusion of non-deterministic automaton it is
often convenient to precompute a \emph{similarity} relation that
allows to immediately skip some pairs of states
\cite{achmv:simulation-meets-antichains}. This idea can be readapted
to weighted automata over an $l$-monoid by using the following notion.


\begin{defi}
  \label{def:similarity-upto-le}
  Let $(X,o,t)$ be a weighted automaton. A relation
  $R\subseteq \mathbb{S}^X\times \mathbb{S}^X$ on unit vectors is
  called a \emph{simulation relation} whenever for all $(v,v')\in R$
  (i) $o(v)\sqsubseteq o(v')$; (ii) for all $a\in A$, there exists a
  pair $(u,u')$ that is a linear combination of vector pairs in $R$
  and furthermore $t_a(v)\sqsubseteq u$, $u'\sqsubseteq t_a(v')$.

  \emph{Similarity}, written $\preceq$, is the greatest simulation
  relation.
\end{defi}

\begin{lem}
  \label{simulationimplieslanguageinclusion}
  Simulation implies language inclusion, i.e. $\preceq$ is included in
  $\precsimu$.
\end{lem}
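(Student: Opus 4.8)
The plan is to show that if $R$ is a simulation relation (in the sense of \cref{def:similarity-upto-le}), then every pair $(v,v')\in R$ satisfies $v\precsimu v'$, i.e.\ $\llbracket v\rrbracket(w)\sqsubseteq\llbracket v'\rrbracket(w)$ for all $w\in A^*$; applying this to $\preceq$ gives the lemma. The proof is by induction on the length of $w$. First I would record the key closure property that has to be maintained: if $(u_1,u_1'),\dots,(u_n,u_n')$ are pairs with $u_i\precsimu u_i'$ for each $i$, then for any scalars $s_i\in\mathbb S$ the linear combination $(\sum_i u_i\cdot s_i,\ \sum_i u_i'\cdot s_i)$ again satisfies $\sum_i u_i\cdot s_i\precsimu \sum_i u_i'\cdot s_i$. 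This is immediate from the fact that $\llbracket -\rrbracket$ is linear (it is built from matrix multiplications by $t_a$ and the row vector $o$) and that $\sqsubseteq$ is compatible with $+$ and with multiplication by a fixed scalar (monotonicity of the semiring operations, which is part of $\sqsubseteq$ being a precongruence / an $l$-monoid order). So "$\precsimu$ is closed under linear combinations of pairs" is the engine of the argument.

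Next, the base case $w=\epsilon$: for $(v,v')\in R$ we have $\llbracket v\rrbracket(\epsilon)=o(v)\sqsubseteq o(v')=\llbracket v'\rrbracket(\epsilon)$ directly from condition (i) of \cref{def:similarity-upto-le}. For the inductive step, take $w=aw'$ and $(v,v')\in R$. By condition (ii) there is a pair $(u,u')$ that is a linear combination of pairs in $R$ with $t_a(v)\sqsubseteq u$ and $u'\sqsubseteq t_a(v')$. By the closure property above, $u\precsimu u'$, so in particular $\llbracket u\rrbracket(w')\sqsubseteq\llbracket u'\rrbracket(w')$. Now I chain inequalities:
\[
\llbracket v\rrbracket(aw') = \llbracket t_a(v)\rrbracket(w')
  \sqsubseteq \llbracket u\rrbracket(w')
  \sqsubseteq \llbracket u'\rrbracket(w')
  \sqsubseteq \llbracket t_a(v')\rrbracket(w')
  = \llbracket v'\rrbracket(aw'),
\]
where the first and last steps are the defining recursion of $\llbracket-\rrbracket$, the middle step is $u\precsimu u'$, and the two outer $\sqsubseteq$-steps use monotonicity of $\llbracket-\rrbracket$ in its vector argument with respect to $\sqsubseteq$ (again a consequence of $t_a,o$ being $\sqsubseteq$-monotone linear maps). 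This completes the induction, hence $R\subseteq{\precsimu}$, and taking $R={\preceq}$ finishes the proof.

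The main subtlety — and the step I would be most careful about — is making the induction hypothesis strong enough: one cannot induct on pairs in $R$ alone, because condition (ii) produces a witness $(u,u')$ that lies in the linear-combination closure of $R$, not in $R$ itself. The clean fix is to prove the statement for the set $R' = \{(u,u') \mid (u,u')$ is a linear combination of pairs in $R\}$ from the outset (equivalently, to prove simultaneously by induction on $|w|$ that for every such linear combination the inclusion holds), using that $R'$ is closed under the two operations of \cref{def:similarity-upto-le}(ii) and that the witnessing $(u,u')$ can be taken in $R'$. Everything else — monotonicity and linearity of $\llbracket-\rrbracket$, compatibility of $\sqsubseteq$ with the semiring operations — is routine given the preliminaries, and I would only state it rather than belabour it.
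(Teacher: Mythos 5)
Your proof is correct and follows essentially the same route as the paper: induction on $|w|$, base case from condition (i), and the step case via the chain $\llbracket t_a(v)\rrbracket(w')\sqsubseteq\llbracket u\rrbracket(w')\sqsubseteq\llbracket u'\rrbracket(w')\sqsubseteq\llbracket t_a(v')\rrbracket(w')$, with the middle inequality obtained from the induction hypothesis on the component pairs $(v_i,v_i')\in R$ together with monotonicity/linearity of $\llbracket-\rrbracket(w')$. The subtlety you flag at the end (that the witness lives in the linear-combination closure, so one must apply the IH per word to the components rather than invoke $u\precsimu u'$ outright) is exactly how the paper handles it, so your repair matches the published argument.
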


\begin{figure}[t]
\centering
\underline{\texttt{SIM} $(X,o,t)$}

\begin{codeNT}
(1)	$R:=\{(v,v')\in\mathbb S^X\times S^X\mid v,v'\mathit{\ are\ unit\ vectors}\}$
(2)	$R':=\emptyset$
(3)	for all $(v,v')\in R$
	(3.1)	if $o(v)\not\sqsubseteq o(v')$ then $R:=R\setminus\{(v,v')\}$
(4)	while $R\neq R'$
	(4.1)	$R':=R$
	(4.2)	for all $a\in A$
		(4.2.1)	for all $(v,v')\in R$
			(4.2.1.1)  $u := \bigsqcup\{v_1\cdot(v_2\rightarrow t_a(v'))\mid (v_1, v_2)\in R\}$
			(4.2.1.2)  if $t_a(v)\not\sqsubseteq u$ then $R:=R\setminus\{(v,v')\}$
(5)	return $R$
\end{codeNT}
\caption{Algorithm to  compute similarity  ($\preceq$) for  a weighted
  automaton $(X,o,t)$.}
\label{fig:sim}
\end{figure}

%

%
%

Similarity over an
$l$-monoid can be computed with the algorithm in \cref{fig:sim}.
Even though the relation is not symmetric, the method is conceptually
close to the traditional partition refinement algorithm to compute
bisimilarity. Starting from the cross-product of all states, the
algorithm first eliminates all pairs of states where the first state
does not have a smaller-or-equal output than the second one and then
continuously removes all pairs of states that do not meet the second
requirement for a simulation relation, until the relation does not
change anymore.

\begin{lem}
  \label{lem:greatestsim}
  $\texttt{SIM} $ computes $\preceq$.

\end{lem}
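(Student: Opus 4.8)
The plan is to prove that \texttt{SIM} computes $\preceq$ by establishing two inclusions: that the relation $R$ returned by \texttt{SIM} is a simulation relation (hence $R \subseteq \preceq$ by maximality of similarity), and conversely that $\preceq \subseteq R$, i.e., the algorithm never discards a pair belonging to similarity. The overall structure mirrors the standard correctness proof of partition-refinement-style algorithms, adapted to the asymmetric, linear-combination-closed setting of \cref{def:similarity-upto-le}.

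First I would argue termination: the relation $R$ starts as the finite set of all pairs of unit vectors and is only ever shrunk (steps (3.1) and (4.2.1.2)), so the \textbf{while} loop at (4) must eventually reach a fixpoint $R = R'$; this uses finiteness of $X$. Next, for soundness ($R \subseteq \preceq$), I would observe that upon termination $R = R'$ means no pair was removed during the last pass, so for every $(v,v') \in R$ and every $a \in A$ we have $t_a(v) \sqsubseteq u$ where $u = \bigsqcup\{v_1 \cdot (v_2 \rightarrow t_a(v')) \mid (v_1,v_2) \in R\}$. I must then check that this $u$ together with a suitable $u'$ witnesses condition (ii) of \cref{def:similarity-upto-le}: taking $u' = \bigsqcup\{v_2 \cdot (v_2 \rightarrow t_a(v')) \mid (v_1,v_2)\in R\}$ makes $(u,u')$ a linear combination of pairs in $R$, and the defining (Galois) property of residuation gives $v_2 \cdot (v_2 \rightarrow t_a(v')) \sqsubseteq t_a(v')$ for each summand, hence $u' \sqsubseteq t_a(v')$. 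Together with $o(v) \sqsubseteq o(v')$ surviving step (3.1), this shows $R$ is a simulation, so $R \subseteq \preceq$ by \cref{def:similarity-upto-le}.

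For completeness ($\preceq \subseteq R$), I would show by induction on the execution that $\preceq \subseteq R$ is an invariant maintained at every point where $R$ is modified. Initially $\preceq$ is contained in the set of all unit-vector pairs (since similarity is a relation on unit vectors). A pair $(v,v') \in \preceq$ survives step (3.1) because $o(v) \sqsubseteq o(v')$ by clause (i). For step (4.2.1.2), suppose inductively $\preceq \subseteq R$ at the start of the pass, and let $(v,v') \in \preceq$. By clause (ii) there is a linear combination $(\hat u, \hat u')$ of pairs in $\preceq$ with $t_a(v) \sqsubseteq \hat u$ and $\hat u' \sqsubseteq t_a(v')$; the key point is that such a witness can be taken in the \emph{canonical residual form} used by the algorithm — i.e., $\hat u \sqsubseteq \bigsqcup\{v_1 \cdot (v_2 \rightarrow t_a(v')) \mid (v_1,v_2) \in \preceq\}$ — because for any summand $v_1 \cdot v_2 \cdot s$ of $\hat u$ (with $(v_1,v_2) \in \preceq$, paired with summand $v_2 \cdot s$ of $\hat u'$ satisfying $v_2 \cdot s \sqsubseteq t_a(v')$, using complete distributivity to split the suprema), we have $s \sqsubseteq (v_2 \rightarrow t_a(v'))$, so $v_1 \cdot v_2 \cdot s \sqsubseteq v_1 \cdot (v_2 \rightarrow t_a(v'))$. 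Since $\preceq \subseteq R$ by induction hypothesis, the supremum over $\preceq$-pairs is below the supremum $u$ over $R$-pairs, giving $t_a(v) \sqsubseteq \hat u \sqsubseteq u$, so $(v,v')$ is not removed. Hence $\preceq \subseteq R$ throughout, and in particular for the returned $R$.

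The main obstacle I expect is the completeness direction, specifically justifying that the generic linear-combination witness $(\hat u, \hat u')$ guaranteed by \cref{def:similarity-upto-le} can be replaced by the specific residual form $\bigl(\bigsqcup v_1 \cdot (v_2 \rightarrow t_a(v')),\, \bigsqcup v_2 \cdot (v_2 \rightarrow t_a(v'))\bigr)$ tested by line (4.2.1.1). This requires carefully unpacking a linear combination into its individual scalar-weighted summands, pairing up the $v_1$- and $v_2$-components, invoking complete distributivity of multiplication over arbitrary suprema to distribute everything, and then applying the residuation adjunction summand-by-summand to bound each scalar by the corresponding residuum. The asymmetry (we want $t_a(v) \sqsubseteq u$ for the first component but $u' \sqsubseteq t_a(v')$ for the second) means the two components of the witness play genuinely different roles, and getting the direction of each residuation inequality right is the delicate bookkeeping step; everything else is routine monotone-fixpoint reasoning.
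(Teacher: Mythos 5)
Your proposal is correct and follows essentially the same route as the paper: termination by monotone shrinking of a finite $R$, soundness by exhibiting the witness $u' = \bigsqcup\{v_2\cdot(v_2\rightarrow t_a(v'))\mid (v_1,v_2)\in R\}$ via the residuation property, and completeness by the invariant that the greatest simulation is never pruned, using the adjunction $v_2'\cdot s \sqsubseteq t_a(v') \Rightarrow s \sqsubseteq v_2'\rightarrow t_a(v')$ to dominate an arbitrary linear-combination witness by the canonical residual form. The only blemish is the notational slip writing the summand of $\hat u$ as ``$v_1\cdot v_2\cdot s$'' rather than $v_1\cdot s$; the underlying reasoning is the paper's.
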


\begin{lem}
  \label{lem:sim-complexity}
  The runtime complexity of $\texttt{SIM} $ when applied to an
  automaton over state set $X$ and alphabet $A$ is
  polynomial, assuming constant time complexity for
  all semiring operations (supremum, multiplication, residuation).
\end{lem}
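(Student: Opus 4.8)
The plan is to give an explicit polynomial bound on the number of iterations of the outer \texttt{while} loop and the cost of each iteration, then multiply. First I would bound the size of the working relation $R$: at line~(1) it is initialised to the set of all pairs of unit vectors, of which there are exactly $|X|^2$; since $R$ only ever shrinks (lines (3.1) and (4.2.1.2) only remove pairs), we always have $|R|\le |X|^2$. Next I would bound the number of passes through the \texttt{while} loop at line~(4): each pass that does not terminate the loop must strictly decrease $R$ (otherwise $R=R'$ after line~(4.1) and the loop exits), so there are at most $|X|^2+1$ iterations of the \texttt{while} loop.

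Then I would account for the work inside one iteration. For each letter $a\in A$ and each pair $(v,v')\in R$ we compute the supremum $u=\bigsqcup\{v_1\cdot(v_2\rightarrow t_a(v'))\mid (v_1,v_2)\in R\}$: this is a join over at most $|X|^2$ terms, each obtained from one residuation, one scalar multiplication of a vector (which is $|X|$ semiring multiplications), giving $O(|X|^3)$ semiring operations per pair, plus $O(|X|^3)$ for the join itself; the subsequent test $t_a(v)\sqsubseteq u$ is $O(|X|)$ comparisons. Note $t_a(v)$ and $t_a(v')$ are columns of the transition matrix $t_a$ (since $v,v'$ are unit vectors), so they need no matrix multiplication; and $o(v)\sqsubseteq o(v')$ at line~(3.1) is a single comparison of matrix/vector entries. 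Summing over $(v,v')\in R$ and $a\in A$ gives $O(|A|\cdot|X|^2\cdot|X|^3)=O(|A|\cdot|X|^5)$ semiring operations per \texttt{while} iteration, hence $O(|A|\cdot|X|^7)$ in total, plus the $O(|X|^2)$ cost of line~(3). Under the stated assumption that supremum, multiplication and residuation (and hence comparison, via $\ell_1\sqsubseteq\ell_2 \iff \ell_1\sqcup\ell_2=\ell_2$) each take constant time, this is polynomial in the size of the automaton.

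The only mildly delicate point — and the place I would be careful — is making precise that the \texttt{while} loop really does terminate after polynomially many iterations, i.e. that each non-final pass strictly shrinks $R$. This is immediate from the loop structure: line~(4.1) records $R':=R$ at the start of the body, the body only removes elements from $R$, and the loop guard is $R\neq R'$; so if the body removes nothing, the next guard check fails. Combined with $|R|\le|X|^2$ throughout, the bound on the number of passes follows. Everything else is a routine tally of semiring operations, and the hypothesis that all of them cost $O(1)$ converts the operation count directly into a polynomial runtime bound.
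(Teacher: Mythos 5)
Your proof is correct and follows essentially the same route as the paper's: bound $|R|$ by $|X|^2$, observe that each non-terminal pass of the \texttt{while} loop removes at least one pair so there are at most $|X|^2$ passes, and tally the $O(|X|\cdot|R|)$ cost of computing $u$ in line \texttt{(4.2.1.1)} over all letters and pairs, arriving at the same $O(|A|\cdot|X|^7)$ bound. Your accounting of the individual semiring operations is if anything slightly more explicit than the paper's.
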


Once $\preceq$ is known, it can be exploited by \texttt{HKP} and
$\HKPA$.  To be completely formal in the proofs, it is convenient to
define two novel algorithm -- called \texttt{HKP'} and $\HKPA '$ --
which are obtained from \texttt{HKP} and $\HKPA$ by replacing step
\texttt{(3.2)} by
\begin{codeNT}
 (3.2)  if $(v_1',v_2')\in p'(R)$ then continue 
\end{codeNT}
where $p'(R)$ is defined for all
relations $R$ as $p'(R)=p(R\,\cup \preceq)$. The following two results state the correctness of the two algorithms.



\begin{lem}\label{lemmaHKP'}
  Let $\mathbb{S}$ be a semiring equipped with a precongruence
  $\sqsubseteq$.  Whenever \linebreak \texttt{HKP'}($v_1,v_2$)
  terminates, it returns true iff $\ v_1 \precsimu v_2$.
  \label{lem:langincalgcorrect}
\end{lem}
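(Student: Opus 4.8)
The plan is to mimic the soundness and completeness argument for \texttt{HKP} (Theorem~\ref{thm:langeincl}), but with the congruence/precongruence closure $p$ replaced throughout by $p'$, and to verify that $p'$ is still a sound up-to technique for $\bin$. First I would record the key algebraic facts about $p'$: since $\preceq$ is included in $\precsimu$ (\cref{simulationimplieslanguageinclusion}) and $\precsimu$ is the language-inclusion preorder, and since $\sqsubseteq$ is assumed to be a precongruence, we get that $p(\preceq)$, hence also $p'(R)=p(R\cup\preceq)$ for any $R\subseteq\,\precsimu$, is again contained in $\precsimu$. More precisely, if $R\subseteq\,\precsimu$ then $R\cup\preceq\,\subseteq\,\precsimu$, and since $\precsimu$ is itself a precongruence (it is closed under $\sqsubseteq$, transitivity and linear combinations, because $\llbracket-\rrbracket$ is linear and $\sqsubseteq$ is a precongruence), we obtain $p'(R)=p(R\cup\preceq)\subseteq\,\precsimu$. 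This is the compatibility property that makes $p'$ usable as an up-to technique.

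Next, for \textbf{soundness} I would exhibit the loop invariant $R\subseteq\bin(p'(R)\cup\mathit{todo})$ for the while loop at step (3), exactly as in the proof of \cref{thm:langequicong}/\cref{thm:langeincl}: initially $R=\emptyset$ so the invariant holds trivially; each iteration either discards a pair already in $p'(R)$ (which only shrinks $\mathit{todo}$, and $p'$ is monotone, so the invariant is preserved), or else it checks $o(v_1')\sqsubseteq o(v_2')$, pushes the pairs $(t_a(v_1'),t_a(v_2'))$ into $\mathit{todo}$, and moves $(v_1',v_2')$ into $R$ — which makes $(v_1',v_2')\in\bin(p'(R_{\text{new}})\cup\mathit{todo}_{\text{new}})$ by construction, while all previously-present pairs of $R$ still satisfy the condition since $p'(R_{\text{new}})\cup\mathit{todo}_{\text{new}}\supseteq p'(R_{\text{old}})\cup\mathit{todo}_{\text{old}}$ up to the one pair just resolved. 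Upon termination with \texttt{true}, $\mathit{todo}=\emptyset$, so $R\subseteq\bin(p'(R))$, i.e. $R$ is a $\bin$-simulation up-to $p'$; by the compatibility of $p'$ just established, $\preceq$-augmented up-to reasoning is sound, so $(v_1,v_2)\in R$ gives $v_1\precsimu v_2$. Upon termination with \texttt{false}, the algorithm found a reachable pair $(v_1',v_2')=(t_{a_n}(\cdots t_{a_1}(v_1)),t_{a_n}(\cdots t_{a_1}(v_2)))$ with $o(v_1')\not\sqsubseteq o(v_2')$, hence $\llbracket v_1\rrbracket(w)\not\sqsubseteq\llbracket v_2\rrbracket(w)$ for $w=a_1\cdots a_n$, so $v_1\not\precsimu v_2$. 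The \textbf{completeness} direction (if $v_1\precsimu v_2$ the algorithm does not return \texttt{false}, so since it terminates it returns \texttt{true}) follows because every pair ever placed in $\mathit{todo}$ is of the form $(t_w(v_1),t_w(v_2))$ for some word $w$, and $v_1\precsimu v_2$ forces $o(t_w(v_1))=\llbracket v_1\rrbracket(w)\sqsubseteq\llbracket v_2\rrbracket(w)=o(t_w(v_2))$, so step (3.3) never triggers.

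The main obstacle, and the only genuinely new point compared with \cref{thm:langeincl}, is the compatibility claim $p'(R)\subseteq\,\precsimu$ for $R\subseteq\,\precsimu$ — equivalently, that enlarging the relation by $\preceq$ before taking the precongruence closure does not break soundness. This rests on \cref{simulationimplieslanguageinclusion} together with the fact that $\precsimu$ is itself closed under the precongruence-closure rules; I would spell out that latter closure explicitly (reflexivity/$\sqsubseteq$-closure and transitivity are immediate, and linear-combination closure uses that $\llbracket v\cdot s+w\rrbracket=\llbracket v\rrbracket\cdot s+\llbracket w\rrbracket$ pointwise together with monotonicity of $+$ and $\cdot$ with respect to $\sqsubseteq$, which holds since $\sqsubseteq$ is a precongruence). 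Everything else is a routine transcription of the \texttt{HKP} argument with $p$ replaced by $p'$.
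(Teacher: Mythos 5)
Your overall architecture (the loop invariant $R\subseteq \bin(p'(R)\cup \mathit{todo})$, the false/completeness direction via pairs of the form $(t_w(v_1),t_w(v_2))$) matches the paper. But the soundness step has a genuine gap. The property you establish for $p'$ --- namely that $R\subseteq\,\precsimu$ implies $p'(R)=p(R\cup\preceq)\subseteq\,\precsimu$ --- is \emph{not} the compatibility condition needed to make an up-to technique sound, and it does not let you conclude anything from $R\subseteq \bin(p'(R))$: to invoke it you would already need $R\subseteq\,\precsimu$, which is exactly what you are trying to prove. In the framework the paper uses (\cref{thm:coinupto}), soundness of up-to-$f$ requires the interchange law $f(\bin(R))\subseteq \bin(f(R))$ for \emph{all} $R$, not merely preservation of the greatest fixed point; preservation alone is known to be insufficient in general (the classic example being ``weak bisimulation up to weak bisimilarity''). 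So the sentence ``by the compatibility of $p'$ just established, up-to reasoning is sound'' is an unproved assertion, and it is the crux of the lemma.

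The paper closes this gap differently: it concedes that $p'$ itself need not be compatible (since $\preceq$ is an auxiliary relation with no a priori compatibility property) and instead introduces $p''(R)=p(R\,\cup\precsimu)$. The constant function returning $\precsimu$ \emph{is} $\bin$-compatible (\cref{compatibilitybin}), so $p''$ is compatible by the closure properties of \cref{lemmacompositionality}; then $\preceq\,\subseteq\,\precsimu$ (\cref{simulationimplieslanguageinclusion}) and monotonicity of $p$ give $p'(R)\subseteq p''(R)$, hence $R\subseteq \bin(p''(R))$, and \cref{thm:coinupto} together with $\nu\bin=\,\precsimu$ (\cref{lemma:greteastbin}) yields $v_1\precsimu v_2$. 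To repair your proof you would either have to reproduce this majorisation-by-a-compatible-function argument, or replace the appeal to ``compatibility'' by a direct stratified induction on word length showing that pairs in $p'(R)$ inherit language inclusion up to length $n$ from pairs in $R$; as written, neither is present.
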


\begin{lem}
  \label{thm:HKPA'sound}
  $\HKPA '(e_t, v_1)$ always terminates. Moreover $\HKPA '(e_t, v_1)$ returns true iff $\llbracket v_1\rrbracket(w)\leq T$ for all $w\in A^*$.
\end{lem}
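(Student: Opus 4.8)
The plan is to reduce this statement to the already-established correctness of $\HKPA$ (Theorem~\ref{thm:modifalgcorrect}) and the soundness of exploiting similarity (Lemma~\ref{lemmaHKP'}), rather than redoing the whole up-to argument from scratch. First I would observe that $\HKPA '$ differs from $\HKPA$ in exactly one place: step \texttt{(3.2)} tests membership in $p'(R) = p(R\cup\preceq)$ instead of $p(R)$. Since $R\subseteq R\cup\preceq$ and $p$ is monotone, we have $p(R)\subseteq p'(R)$; hence every pair that $\HKPA$ would discard at step \texttt{(3.2)} is also discarded by $\HKPA '$. Consequently $\HKPA '$ explores a \emph{subset} of the pairs explored by $\HKPA$, which immediately yields termination of $\HKPA '(e_t,v_1)$ from termination of $\HKPA(e_t,v_1)$ (Theorem~\ref{thm:modifalgcorrect}).

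For correctness I would argue the two implications separately. If $\HKPA '$ returns \emph{false}, it has found a pair $(v_1',v_2')\in todo$ with $o(v_1')\not\sqsubseteq o(v_2')$; exactly as in the proof of Theorem~\ref{thm:langequicong}, every such pair has the form $v_1' = t_{a_n}(\dots t_{a_1}(e_t)\dots)$, $v_2' = \mathcal A(t_{a_n}(\dots))$ for some word $w=a_1\cdots a_n$, and using Lemma~\ref{lems:Atacompat} together with $o(e_t)=T$ this forces $\llbracket v_1\rrbracket(w) = o(v_2')' $ — more precisely, the abstraction steps do not affect the computed output by \cref{lems:Atacompat}, so $o(v_2')\sqsubseteq$ fails to dominate $T$, witnessing $\llbracket v_1\rrbracket(w) > T$. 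For the converse, when $\HKPA '$ returns \emph{true} the loop invariant $R\subseteq \bin(p'(R)\cup todo)$ holds and $todo$ is empty on exit, so $R\subseteq \bin(p'(R)) = \bin(p(R\cup\preceq))$, i.e.\ $R$ is a $\bin$-simulation up-to $p'$. The soundness of up-to $p'$ is precisely the content that underlies Lemma~\ref{lemmaHKP'}: since $\preceq$ is a simulation relation and $\preceq\ \subseteq\ \precsimu$ (Lemma~\ref{simulationimplieslanguageinclusion}), adjoining $\preceq$ to $R$ before taking the precongruence closure is sound, so $(e_t,v_1)\in p'(R)\subseteq\ \precsimu$, giving $e_t\precsimu v_1$, i.e.\ $\llbracket v_1\rrbracket(w)\geq\llbracket e_t\rrbracket(w) = T'$ — wait, the direction is $e_t\precsimu v_1$ meaning $\llbracket e_t\rrbracket(w)\sqsubseteq\llbracket v_1\rrbracket(w)$ for all $w$; in the tropical semiring $\sqsubseteq$ is $\geq$, so this reads $T\geq\llbracket v_1\rrbracket(w)$, which is exactly $\llbracket v_1\rrbracket(w)\leq T$ as claimed.

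The main obstacle I anticipate is making the up-to soundness argument for $p'$ fully rigorous in the presence of the abstraction $\mathcal A$ — one must check that the novel up-to technique handling $\mathcal A$ (from the proof of Theorem~\ref{thm:modifalgcorrect}) composes correctly with the up-to $\preceq$ reasoning, i.e.\ that the combined function $R\mapsto\bin(p(R\cup\preceq))$ together with the $\mathcal A$-abstraction is still a sound enhancement. Concretely, one needs that $\preceq$ interacts well with $\mathcal A$, e.g.\ that applying $\mathcal A$ does not break any of the simulation witnesses used to justify $p'$; here Lemma~\ref{lems:Atacompat} and the fact that $\mathcal A$ is monotone and inflationary with respect to $\geq$ should suffice, but the bookkeeping combining the two enhancements is the delicate part. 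Everything else is a routine transfer of the arguments already given for $\HKPA$ and \texttt{HKP'}.
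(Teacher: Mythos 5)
Your overall strategy is the same as the paper's: reuse the termination and completeness arguments of \cref{thm:modifalgcorrect} and combine the up-to-$p'$ soundness of \cref{lemmaHKP'} with the handling of the abstraction $\mathcal A$. However, the soundness half as you state it has a genuine gap, and it is precisely the point you defer to the last paragraph as ``the delicate part.'' The invariant $R\subseteq \bin(p'(R)\cup todo)$ is \emph{false} for $\HKPA'$: when $(v_1',v_2')$ is moved into $R$, step \texttt{(3.4)} puts $(t_a(v_1'),\mathcal A(t_a(v_2')))$ into $todo$, not $(t_a(v_1'),t_a(v_2'))$, so the pair that $\bin$ demands is only related to a $todo$-pair via $\mathcal A(t_a(v_2'))\sqsubseteq t_a(v_2')$. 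The paper's proof therefore uses the invariant $R\subseteq \bin\bigl((p'(R)\cup todo)\bullet \sqsubseteq\bigr)$, and the up-to function one must justify is not $p'$ but the relational composition $p''\bullet\sqsubseteq$ with $p''(R)=p(R\,\cup\precsimu)$. Its $\bin$-compatibility is what makes the argument go through: $p''$ is compatible (\cref{lemma:c'compatible} plus \cref{compatibilitybin}), $\sqsubseteq$ is compatible, $\bin(R)\bullet\bin(S)\subseteq\bin(R\bullet S)$, and \cref{lemmacompositionality} then gives compatibility of the composite; finally $p'(R)\bullet\sqsubseteq\ \subseteq\ p''(R)\bullet\sqsubseteq$ by $\preceq\ \subseteq\ \precsimu$ and monotonicity. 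Without this your inference ``$(e_t,v_1)\in p'(R)\subseteq\ \precsimu$'' does not follow ($p'(R)\subseteq\ \precsimu$ is not true for an arbitrary $R$; one needs $R\subseteq\nu\bin$ first, via \cref{thm:coinupto} applied to the compatible composite).

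Two smaller points. Your termination argument by comparing the runs of $\HKPA'$ and $\HKPA$ is shakier than needed: the two runs accumulate different relations $R$, so ``every pair $\HKPA$ discards at \texttt{(3.2)} is also discarded by $\HKPA'$'' requires an induction over the synchronized executions. The paper's argument is direct and you should use it: after abstraction the second components range over the finite set of vectors with entries in $\{0,\dots,T,\infty\}$ (and the first component stays $e_t$), so $R$ can grow only finitely often and the loop terminates. The completeness direction (return of \emph{false}) is fine in spirit, but the sentence deriving the witness is garbled; the clean statement is that \cref{lems:Atacompat} lets you collapse the nested $\mathcal A$-applications to a single outer one, from which $o(v')>T$ for the un-abstracted vector $v'$ follows, exactly as in the proof of \cref{thm:modifalgcorrect}.
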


\subsection{An Exponential Pruning}

To illustrate the benefits of up-to techniques, we show an example
where $\HKPA '$ exponentially prunes the exploration space by
exploiting the technique $p'$. We compare $\HKPA'$ against $\HKA$ in
\cref{fig:naive}, that can be thought as an adaptation of the
algorithm proposed in \cite{Alma11whatsdecidable} to the
notation used in this paper.

\begin{figure}[t]
\centering
\underline{$\HKA (v_0)$}
	\begin{codeNT}
	(1)	$\mathit{todo} := \{v_0\}$
	(2)	$P:=\emptyset$
	(3)	while $\mathit{todo}\neq\emptyset$
		(3.1)	extract $v$ from $\mathit{todo}$
		(3.2)	if $v\in P$ then continue
		(3.3)	if $o(v)\not\leq T$ then return false
		(3.4)	for all $a\in A$ insert $\mathcal{A}(t_a(v))$ into $\mathit{todo}$
		(3.5)	insert $v$ into $P$
	(4)	return true
	\end{codeNT}
        \caption{Algorithm to check whether a vector $v_0$ of a
          weighted automata $(X,o,t)$ satisfies the threshold
          $T\in\mathbb N_0$}
\label{fig:naive}
\end{figure}

Consider the family of automata over the tropical semiring in
\cref{fig:exponentialspeedup} and assume that $T = n$. By taking as
initial vector $e_x \sqcup e_y$ (i.e., the vector mapping $x$ and $y$
to $0$ and all other states to $\infty$), the automaton clearly does
not respect the threshold, but this can be observed only for words
longer than $n$.

\begin{figure}
  \begin{center}
    \tikzset{every state/.style={minimum size=3em}}
    \tikzset{elliptic state/.style={draw,ellipse}}
    \scalebox{0.9}{\begin{tikzpicture}[x=2cm,y=-1cm,double distance=4pt]
      \node[elliptic state] (x) at (0,1) {$x$} ;
      \node[elliptic state] (x1) at (1,1) {$x_1$} ;
      \node[elliptic state] (x2) at (2,1) {$x_2$} ;
      \node[elliptic state] (x3) at (3,1) {$x_{n-1}$} ;
      \node[elliptic state] (x4) at (4,1) {$x_n$} ;
      \node[elliptic state] (y) at (0,2) {$y$} ;
      \node[elliptic state] (y1) at (1,2) {$y_1$} ;
      \node[elliptic state] (y2) at (2,2) {$y_2$} ;
      \node[elliptic state] (y3) at (3,2) {$y_{n-1}$} ;
      \node[elliptic state] (y4) at (4,2) {$y_n$} ;
      \begin{scope}[->]
        \path[shorten <=1pt] 
        (x) edge[loop left] node[arlab] {$a,b$} (x) 
        edge node[arlab,above] {$a$} (x1) ;
        \path[shorten <=1pt] 
        (x1) edge node[arlab,above] {$a,b$} (x2) ;
        \path[shorten <=1pt,dotted] 
        (x2) edge (x3) ;
        \path[shorten <=1pt] 
        (x3) edge node[arlab,above] {$a,b$} (x4) ;
        \path[shorten <=1pt] 
        (y) edge[loop left] node[arlab] {$a,b$} (y) 
        edge node[arlab,above] {$b$} (y1) ;
        \path[shorten <=1pt] 
        (y1) edge node[arlab,above] {$a,b$} (y2) ;
        \path[shorten <=1pt, dotted] 
        (y2) edge (y3) ;
        \path[shorten <=1pt] 
        (y3) edge node[arlab,above] {$a,b$} (y4) ;
      \end{scope}
    \end{tikzpicture}}
  \end{center}	

  \caption{Examples where $\HKPA'$ exponentially improves over $\HKA$. Output weight is always $0$, transition weight is
    always $1$.}
  \label{fig:exponentialspeedup}
\end{figure}

First, for $\HKA$ the runtime is exponential. This
happens, since every word up to length $n$ produces a different weight
vector. For a word $w$ of length $m$ state $x_i$ has weight $m$ iff
the $i$-last letter of the word is $a$, similarly state $y_i$ has
weight $m$ iff the $i$-last letter is $b$. All other weights are
$\infty$. For instance, the weights for word $aab$ are given below.

\begin{center}
  \begin{tabular}{|c|c|c|c|c|c|c|c|c|c|c|c|}
    $x$ & $x_1$ & $x_2$ & $x_3$ & $x_4$ & $\dots$ & $y$ & $y_1$ & 
    $y_2$ & $y_3$ & $y_4$ & $\dots$ \\ \hline
    $3$ & $\infty$ & $3$ & $3$ & $\infty$ & $\dots$ & 
    $3$ & $3$ & $\infty$ & $\infty$ & $\infty$ & $\dots$ 
  \end{tabular}
\end{center}

Now we compare with $\HKPA'$.  
Observe that $x_i\preceq x$, $y_i\preceq y$ for all $i$. (Remember
that since the order is reversed, a lower weight simulates a higher
weight.) Hence, we obtain rewriting rules that allow to replace an
$\infty$-entry in $x_i$ and $y_i$ by $m$ for all $i$. (Since both
entries $x$ and $y$ are $m$, we can always apply this rule.) In the
example above this leads to a vector where every entry is $3$.



Hence it turns out that for all words of the same length, the
corresponding vectors are all in the precongruence relation with each
other -- as they share the same normal form -- and we
only have to consider exactly one word of each length. Therefore, only
linearly many words are considered and the runtime is polynomial.

\section{Runtime Results for the Threshold Problem}
\label{sec:implementation}

We now discuss runtime results for the threshold problem for weighted
automata over the tropical semiring of the natural numbers. We compare
the following three algorithms: the algorithm without up-to technique
($\HKA$) algorithm in \cref{fig:naive}, the algorithm that works up-to
precongruence ($\HKPA$), explained in \cref{sec:threshold}, and the
algorithm that additionally exploits pre-computed similarity
($\HKPA'$), introduced in \cref{sec:exploiting-similarity}. This
precomputation step is relatively costly and is included in the
runtime results below.



We performed the following experiment: for certain values of $|X|$
(size of state set) and of $T$ (threshold) we generated random
automata. The alphabet size was randomly chosen between $1$ and $5$.
For each pair of states and alphabet symbol, the probability of having
an edge with finite weight is $90\%$. (We chose this high
number, since otherwise the threshold is almost never respected and
the algorithms return false almost immediately due to absence of a
transition for a given letter. With our choice instead, the algorithms
need many steps and the threshold is satisfied in $14\%$ of the cases.)
In case the weight is different from $\infty$, a random weight from
the set $\{0,\dots,10\}$ is assigned.

For each pair $(|X|, T)$ we generated 1000 automata. The
runtime results can be seen in \cref{fig:experiments}. We considered the
$50\%$, $90\%$ and $99\%$ percentiles: the $50\%$ percentile is the median
and the $90\%$ percentile means that $90\%$ of the runs were faster and
$10\%$ slower than the time given in the respective field. Analogously
for the $99\%$ percentile.

Apart from the runtime we also measured the size of the relation $R$
(or $P$ in the case of $\HKA$) and the size of the similarity
$\preceq$ (in case of $\HKPA'$). The program was written in C\# and
executed on an Intel Core 2 Quad CPU Q9550 at 2.83 GHz with 4 GB RAM,
running Windows 10. 

\begin{table}[!t]
  \centering
{\footnotesize
\begin{tabular}{|r|r||r|r|r||r|r|r||r|r|r|r|}
\hline
\multicolumn{2}{|c|}{} & \multicolumn{3}{c|}{Runtime (millisec.)} &
                                                                \multicolumn{3}{c|}{Size of $R$/$P$} & \multicolumn{3}{c|}{Size of $\preceq$} \\
\hline
$(|X|,T)$&algo&$50\%$&$90\%$&$99\%$&$50\%$&$90\%$&$99\%$&$50\%$&$90\%$&$99\%$\\
\hline
(3,10)&$\HKPA'$&2&8&20&5&14&33&0&2&4\\&$\HKPA$&1&3&14&5&14&34&-&-&-\\&$\HKA$&1&3&13&6&28&92&-&-&-\\\hline
(3,15)&$\HKPA'$&3&17&127&11&34&100&0&2&4\\&$\HKPA$&2&16&126&11&34&100&-&-&-\\&$\HKA$&2&17&90&18&119&373&-&-&-\\\hline
(3,20)&$\HKPA'$&6&65&393&18&70&174&0&2&4\\&$\HKPA$&4&64&466&18&71&192&-&-&-\\&$\HKA$&5&79&315&55&364&825&-&-&-\\\hline
(6,10)&$\HKPA'$&21&227&1862&18&106&302&0&2&12\\&$\HKPA$&8&217&1858&19&106&302&-&-&-\\&$\HKA$&9&286&2045&40&693&2183&-&-&-\\\hline
(6,15)&$\HKPA'$&90&2547&12344&65&353&750&0&2&11\\&$\HKPA$&84&2560&12328&65&353&750&-&-&-\\&$\HKA$&88&4063&20987&346&3082&7270&-&-&-\\\hline
(6,20)&$\HKPA'$&239&7541&59922&111&589&1681&0&3&11\\&$\HKPA$&234&7613&60360&111&589&1681&-&-&-\\&$\HKA$&253&16240&103804&702&6140&14126&-&-&-\\\hline
(9,10)&$\HKPA'$&274&9634&73369&98&582&1501&0&3&21\\&$\HKPA$&236&9581&72833&99&582&1501&-&-&-\\&$\HKA$&232&17825&99332&536&6336&14956&-&-&-\\\hline
(9,15)&$\HKPA'$&1709&71509&301033&256&1517&3319&0&3&19\\&$\HKPA$&1681&70587&301018&256&1517&3319&-&-&-\\&$\HKA$&919&112323&515386&1436&14889&28818&-&-&-\\\hline
(9,20)&$\HKPA'$&3885&168826&874259&407&2347&5086&0&3&20\\&$\HKPA$&3838&168947&872647&407&2347&5086&-&-&-\\&$\HKA$&1744&301253&1617813&2171&22713&48735&-&-&-\\\hline
(12,10)&$\HKPA'$&1866&93271&560824&247&1586&3668&0&7&31\\&$\HKPA$&1800&92490&560837&251&1586&3668&-&-&-\\&$\HKA$&1067&189058&889949&1342&18129&37387&-&-&-\\\hline
(12,15)&$\HKPA'$&5127&363530&1971541&423&3001&6743&0&7&35\\&$\HKPA$&5010&362908&1968865&423&3001&6743&-&-&-\\&$\HKA$&1418&509455&2349335&1672&27225&55627&-&-&-\\\hline
(12,20)&$\HKPA'$&15101&789324&3622374&744&4489&9027&0&6&32\\&$\HKPA$&15013&787119&3623393&744&4489&9027&-&-&-\\&$\HKA$&4169&1385929&4773543&3297&43756&80712&-&-&-\\\hline
\end{tabular}
\medskip
}\caption{Runtime results on randomly generated
  automata}\label{fig:experiments}
\vspace{-0.8cm}
\end{table}

First note that, as expected, $\HKPA$ and $\HKPA'$ always produce much
smaller relations than $\HKA$.  However, they introduce some overhead,
due to rewriting for checking $p(R)$, and due to the
computation of similarity, which is clearly seen for the $50\%$
percentile. 
However,
if we look at the larger parameters and at the $90\%$ and $99\%$
percentiles (which measure the worst-case performance), $\HKPA$ and
$\HKPA'$ gain the upper hand in terms of runtime.

Note also that while in the example above similarity played a large
role, this is not the case for the random examples. Here similarity
(not counting the reflexive pairs) is usually quite small. This means
that similarity does not lead to savings, only in very few cases does
the size of $R$ decrease for $\HKPA'$. But this also means that
the computation of $\preceq$ is not very costly and hence the runtime
of $\HKPA$ is quite similar to the runtime of $\HKPA'$.  We believe that for
weighted automata arising from concrete problems, the similarity
relation will usually be larger and promise better runtimes. Note also
that similarity is independent of the initial vector and the threshold
and if one wants to ask several threshold questions for the same
automaton, it has to be computed only once.

\section{Conclusion and Future Work}

In this work, we have investigated up-to techniques for weighted
automata, including methods to determine the congruence closure for
semimodules. 


\emph{Related work:} Related work on up-to techniques has already
been discussed in the introduction.
For the language equivalence problem for weighted automata we are
mainly aware of the algorithm presented in \cite{bls:conjugacy}, which
is a partition refinement algorithm and which already uses a kind of
up-to technique: it can eliminate certain vectors which arise as
linear combinations of other vectors. The paper
\cite{hu:forward-backward-quantitative} considers simulation for
weighted automata, but not in connection to up-to
techniques. 

Congruence closure for term rewriting has been investigated in
\cite{Cyrluk96onshostaks}.

Our examples mainly involved the tropical semiring (and related
semirings). Hence there are relations to work by Aceto et al.
\cite{Aceto2003417} who presented an equational theory for the
tropical semiring and related semirings, as well as Gaubert et al.
\cite{gaubert1997} who discuss several reasons to be interested in the
tropical semiring and present solution methods for several types of
linear equation systems.


\emph{Future work:}
As we have seen in the experiments on the threshold problem, our
techniques greatly reduce the size of the relations. However, the
reduction in runtime is less significant, which is due to the overhead
for the computation of similarity and the rewriting procedure. There
is still a substantial improvement for the worst-case running times
(90\% and 99\% percentiles).  So far, the algorithms, especially
algorithm \texttt{SIM} for computing similarity, are not very
sophisticated and we believe that there is further potential for
optimization.

\smallskip

\textbf{Acknowledgements:} We would like to thank Pawe{\l}
Soboci\'{n}ski and Damien Pous for interesting discussions on the
topic of this paper. Furthermore we express our thanks to Issai Zaks
for his help with the runtime results.

\short{
\bibliographystyle{plain} 
\bibliography{references}
}

\full{
}

\full{

\appendix



\section{Proofs}
\label{sec:proofs}


Here we give proofs for all lemmas and propositions where we have
omitted the proofs in the article.

\subsection{Preliminaries}

We will prove some properties $l$-monoids and residuation which will
play an important role in proving our main results.

\begin{lem}\label{lem:alternatedefofvectorresiduation}
  Let $(\mathbb L, \sqcup, \cdot, 0,1)$ be an $l$-monoid and $v,v'$ be
  $n$-dimensional $\mathbb L$-vectors. Then it holds that
  $$v\rightarrow v'=\bigsqcap\{v[i]\rightarrow v'[i]\mid 1\leq i\leq
  n\}$$
\end{lem}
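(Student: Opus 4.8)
The plan is to reduce everything to the elementary adjunction (Galois connection) satisfied by scalar residuation and then lift it coordinatewise. First I would record the scalar property: for all $a,b,c\in\mathbb L$,
\[
  a\cdot b\sqsubseteq c \quad\Longleftrightarrow\quad b\sqsubseteq a\rightarrow c .
\]
The direction from left to right is immediate, since $a\rightarrow c$ is by definition the supremum of all $d$ with $a\cdot d\sqsubseteq c$. For the converse one uses complete distributivity to compute $a\cdot(a\rightarrow c)=\bigsqcup\{a\cdot d\mid a\cdot d\sqsubseteq c\}\sqsubseteq c$, and then monotonicity of $\cdot$ (which follows from distributivity over $\sqcup$, since $x\sqsubseteq y$ gives $x\sqcup y=y$ and hence $c\cdot x\sqsubseteq c\cdot y$) to get $a\cdot b\sqsubseteq a\cdot(a\rightarrow c)\sqsubseteq c$ whenever $b\sqsubseteq a\rightarrow c$.

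Next I would unfold the definition of vector residuation: $v\rightarrow v'=\bigsqcup\{\ell\in\mathbb L\mid v\cdot\ell\sqsubseteq v'\}$, where $\sqsubseteq$ on $\mathbb L^X$ is the componentwise order, so $v\cdot\ell\sqsubseteq v'$ holds exactly when $v[i]\cdot\ell\sqsubseteq v'[i]$ for every $i$. Applying the scalar adjunction in each coordinate then yields, for every scalar $\ell$,
\begin{align*}
  v\cdot\ell\sqsubseteq v'
  &\iff \forall i\colon v[i]\cdot\ell\sqsubseteq v'[i]\\
  &\iff \forall i\colon \ell\sqsubseteq v[i]\rightarrow v'[i]\\
  &\iff \ell\sqsubseteq\bigsqcap\{v[i]\rightarrow v'[i]\mid 1\le i\le n\} .
\end{align*}
Hence $m:=\bigsqcap\{v[i]\rightarrow v'[i]\mid 1\le i\le n\}$ is an upper bound of the set $\{\ell\mid v\cdot\ell\sqsubseteq v'\}$, and it also belongs to that set (instantiate the chain at $\ell=m$, where the last condition is trivially true). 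Therefore $m$ is the greatest element of that set and equals its supremum $v\rightarrow v'$, which is the claim.

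The only point that genuinely needs attention is the reliance on \emph{complete} distributivity: it is used both to justify the adjunction property of the scalar residuum and, equivalently, to guarantee that the supremum defining $v\rightarrow v'$ is actually attained, so that the displayed ``$\iff$''-chain can be turned into an equality by antisymmetry. Since the paper assumes $l$-monoids to be completely distributive precisely when residuation is in play, this is available and I would simply invoke it; the rest is a routine componentwise unfolding with no case distinctions.
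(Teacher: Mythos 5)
Your proof is correct and follows essentially the same route as the paper's: both arguments show that the componentwise infimum $\bigsqcap_i (v[i]\rightarrow v'[i])$ lies in the set $\{\ell\mid v\cdot\ell\sqsubseteq v'\}$ (via $v[i]\cdot(v[i]\rightarrow v'[i])\sqsubseteq v'[i]$, which needs complete distributivity) and is an upper bound of that set, hence equals its supremum. Packaging the two inequalities as a single scalar Galois connection applied coordinatewise is a cosmetic, not a substantive, difference.
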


\begin{proof}
  We define
  $(v\Rightarrow v') := \bigsqcap\{v[i]\rightarrow v'[i]\mid 1\leq
  i\leq n\}$.

  First we will show that $v\Rightarrow v'\sqsubseteq v\rightarrow v'$.  In
  order to prove this, we will show that
  $v\cdot(v\Rightarrow v')\leq v'$, i.e.
  $(v\Rightarrow v')\in\{\ell\in\mathbb L\mid v\cdot \ell\sqsubseteq v'\}$:
  \begin{equation*}
    \begin{aligned}
      &v\cdot(v\Rightarrow v')=\begin{pmatrix}v[1]\\v[2]\\\vdots\\v[n]\end{pmatrix}\cdot(v\Rightarrow v')=\begin{pmatrix}v[1]\cdot(v\rightarrow v')\\v[2]\cdot(v\rightarrow v')\\\vdots\\v[n]\cdot(v\rightarrow v')\end{pmatrix}\sqsubseteq\begin{pmatrix}v[1]\cdot(v[1]\rightarrow v'[1])\\v[2]\cdot(v[2]\rightarrow v'[2])\\\vdots\\v[n]\cdot(v[n]\rightarrow v'[n])\end{pmatrix}\\
      \sqsubseteq&\begin{pmatrix}v'[1]\\v'[2]\\\vdots\\v'[n]\end{pmatrix}=v'
    \end{aligned}
  \end{equation*}
  Next we need to show that $v\Rightarrow v'\sqsupseteq v\rightarrow v'$.  It
  suffices to show that $v\Rightarrow v'$ is an upper bound of the set
  $\{\ell\in\mathbb L\mid v\cdot \ell\sqsubseteq v'\}$. Since
  $v\Rightarrow v'$ is the greatest lower bound of
  $\{v[i]\rightarrow v'[i]\mid 1\leq i\leq n\}$, it is enough to show
  that every element of the first set and every element of the second
  set are in relation. Hence take $\ell\in \mathbb L$ with
  $v\cdot \ell \sqsubseteq v'$ and an index $i$. Since $v\cdot \ell \sqsubseteq v'$
  (componentwise), it holds that $v[i]\cdot \ell \sqsubseteq v'[i]$ for every
  $i$. Hence $\ell \sqsubseteq v[i]\sqsubseteq v'[i]$.

\end{proof}


\begin{lem}
  In an $l$-monoid $\mathbb L$, \label{lem:monotonicity}
  \begin{lemlist}
  \item Whenever $\ell_1\sqsubseteq \ell_2$ it follows that
    $\ell\cdot \ell_1\sqsubseteq \ell\cdot \ell_2$ and
    $ \ell_1\cdot\ell\sqsubseteq \ell_2\cdot \ell$ for all $l$-monoid
    elements $\ell,\ell_1,\ell_2$.
  \item For all $l$-monoid vectors $v,v'\in \mathbb{L}^Y$ and matrices
    $M\in \mathbb L^{X\times Y}$ it holds that $v\sqsubseteq v'$ implies
    $Mv\sqsubseteq Mv'$.
  \item if $\mathbb L$ is integral, $\ell\cdot \ell'\sqsubseteq \ell'$ and
    $\ell\cdot \ell'\sqsubseteq \ell$.
  \item For all $\ell_1,\ell_2,\ell_3\in\mathbb L$, we have
    $(\ell_1\rightarrow \ell_2)\cdot \ell_3\sqsubseteq
    \ell_1\rightarrow(\ell_2\cdot \ell_3)$.\label{lem:monotonicity-4}
  \item If $\ell_2\sqsubseteq \ell_3$ it follows that
    $(\ell_1\rightarrow \ell_2)\sqsubseteq(\ell_1\rightarrow \ell_3)$.
  \item It holds that
    $\ell_3 \sqsubseteq \ell_1\to \ell_2 \iff \ell_1\cdot \ell_3 \sqsubseteq
    \ell_2$.\label{lem:monotonicity-6}
  \item It holds that
    $\ell_1\to (\ell_2\sqcup \ell_3) \sqsupseteq (\ell_1\to \ell_2) \sqcup
    (\ell_1\to \ell_3)$. \label{lem:monotonicity-7}
	\item It holds that $\ell_1\cdot (\ell_1\to \ell_2) \sqsubseteq \ell_2$ \label{lem:monotonicity-8}
  \end{lemlist}
\end{lem}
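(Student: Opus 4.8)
The plan is to prove the eight items essentially by unfolding definitions, reordering them so that each follows from those already available: first (i), then (ii) and (iii) from (i), then (viii), then (vi), (v) and (vii) from (viii) and (v), and finally (iv) from (vi) and (viii). None of the steps requires more than a short calculation, so the only real work is choosing this order carefully and keeping track of which ambient assumptions (commutativity, integrality, complete distributivity) each step uses.

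First I would handle (i): since $\ell_1\sqsubseteq\ell_2$ is the same as $\ell_1\sqcup\ell_2=\ell_2$, distributivity of $\cdot$ over $\sqcup$ gives $\ell\cdot\ell_2=\ell\cdot(\ell_1\sqcup\ell_2)=(\ell\cdot\ell_1)\sqcup(\ell\cdot\ell_2)$, hence $\ell\cdot\ell_1\sqsubseteq\ell\cdot\ell_2$; the left-multiplication version is symmetric and, under the standing commutativity assumption, coincides with it. Part (ii) is then entrywise monotonicity: $(Mv)[x]=\bigsqcup_{y\in Y}M[x,y]\cdot v[y]$, each summand is monotone in the corresponding $v[y]$ by (i), so the ordered suprema give $Mv\sqsubseteq Mv'$. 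Part (iii) uses that an integral $l$-monoid has $\top=1$: from $\ell'\sqsubseteq 1$ and (i) we get $\ell\cdot\ell'\sqsubseteq\ell\cdot 1=\ell$, and symmetrically $\ell\cdot\ell'\sqsubseteq\ell'$.

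The central step is (viii), which is the only place complete distributivity is really needed. Writing $S=\{\ell\in\mathbb L\mid\ell_1\cdot\ell\sqsubseteq\ell_2\}$ so that $\ell_1\to\ell_2=\bigsqcup S$ by definition, complete distributivity lets me pull $\ell_1$ through the supremum: $\ell_1\cdot(\ell_1\to\ell_2)=\bigsqcup\{\ell_1\cdot\ell\mid\ell\in S\}\sqsubseteq\ell_2$, since every element of that set lies below $\ell_2$ by the definition of $S$. Given (viii), part (vi) is immediate in both directions: $\ell_1\cdot\ell_3\sqsubseteq\ell_2$ says exactly $\ell_3\in S$, hence $\ell_3\sqsubseteq\bigsqcup S=\ell_1\to\ell_2$; conversely $\ell_3\sqsubseteq\ell_1\to\ell_2$ gives $\ell_1\cdot\ell_3\sqsubseteq\ell_1\cdot(\ell_1\to\ell_2)\sqsubseteq\ell_2$ using (i) and (viii). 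Part (v) follows because $\ell_2\sqsubseteq\ell_3$ makes $S$ for $\ell_1\to\ell_2$ a subset of the analogous set for $\ell_1\to\ell_3$, so the suprema are ordered; part (vii) is then monotonicity in the second argument (v) applied to $\ell_2,\ell_3\sqsubseteq\ell_2\sqcup\ell_3$, followed by taking the join of the two resulting inequalities.

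Finally, for (iv) I would use (vi): to obtain $(\ell_1\to\ell_2)\cdot\ell_3\sqsubseteq\ell_1\to(\ell_2\cdot\ell_3)$ it suffices to verify $\ell_1\cdot\big((\ell_1\to\ell_2)\cdot\ell_3\big)\sqsubseteq\ell_2\cdot\ell_3$; by associativity and commutativity this equals $\big(\ell_1\cdot(\ell_1\to\ell_2)\big)\cdot\ell_3$, which is $\sqsubseteq\ell_2\cdot\ell_3$ by (viii) together with (i). I do not expect a genuine mathematical obstacle; the main point of care is the bookkeeping just described — commutativity for the two-sided statements, integrality for (iii), complete distributivity for (iv)--(viii) so that the residuum is a supremum that commutes with $\cdot$ — and making sure the proof order introduces no circular dependence among the items.
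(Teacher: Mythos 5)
Your proposal is correct and follows essentially the same route as the paper: (i)--(iii) by distributivity over $\sqcup$ and integrality, and the residuation facts by using complete distributivity to pull multiplication through the supremum defining $\ell_1\to\ell_2$. The only (harmless) variations are that you derive (iv) from the adjunction (vi) plus (viii) instead of the paper's direct comparison of the two defining sets, and (vii) from monotonicity (v) instead of the paper's appeal to (vi); both are equally valid and your ordering introduces no circularity.
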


\begin{proof}
\begin{lemlist}
 \item $$\ell\cdot \ell_2=\ell\cdot(\ell_1\sqcup \ell_2)=\ell\cdot
\ell_1\sqcup \ell\cdot \ell_2\sqsupseteq \ell\cdot \ell_1$$
and
$$\ell_2\cdot \ell=(\ell_1\sqcup \ell_2)\cdot\ell=
\ell_1\cdot\ell\sqcup  \ell_2\cdot\ell\sqsupseteq  \ell_1\cdot\ell.$$
\item The second part follows directly, because 
  $$(Mv)[i]=\bigsqcup_{j\in Y} M[i,j]\cdot v[j] \underset{v[j]\sqsubseteq
    v'[j]}\sqsubseteq \bigsqcup_{j\in Y} M[i,j]\cdot v'[j] = (Mv')[i]$$
\item This follows directly from monotonicity,
  $\ell\cdot \ell'\sqsubseteq\top\cdot \ell'=1\cdot \ell'=\ell'$ and
  $\ell\cdot \ell'\sqsubseteq \ell\cdot\top=\ell\cdot1=\ell$.
\item We first compute:
  $$(\ell_1\rightarrow \ell_2)\cdot
  \ell_3=\bigsqcup\{\ell\in\mathbb L\mid \ell_1\cdot \ell\sqsubseteq
  \ell_2\}\cdot \ell_3=\bigsqcup\{\ell\cdot \ell_3\mid \ell_1 \cdot
  \ell\sqsubseteq \ell_2\}$$  and:
  $$\ell_1\rightarrow(\ell_2\cdot \ell_3)=\bigsqcup\{\ell\in\mathbb L\mid \ell_1\cdot \ell\sqsubseteq \ell_2\cdot \ell_3\}$$
  Now we obtain:
  $$\ell_1 \cdot \ell\sqsubseteq \ell_2\Rightarrow \ell_1\cdot \ell\cdot \ell_3\sqsubseteq \ell_2\cdot \ell_3$$
  and therefore
  $\{\ell\cdot \ell_3\mid \ell_1 \cdot \ell\sqsubseteq \ell_2\}\subseteq
  \{\ell\in\mathbb L\mid \ell_1\cdot \ell\sqsubseteq \ell_2\cdot \ell_3\}$.        
\item Obviously,
  $\{\ell\in\mathbb L\mid \ell_1\cdot \ell\sqsubseteq
  \ell_2\}\subseteq\{\ell\in\mathbb L\mid \ell_1\cdot \ell\sqsubseteq
  \ell_3\}$ and therefore:
  $$\ell_1\rightarrow \ell_2=\bigsqcup\{\ell\in\mathbb L\mid
  \ell_1\cdot \ell\sqsubseteq \ell_2\}\sqsubseteq\bigsqcup\{\ell\in\mathbb L\mid
  \ell_1\cdot \ell\sqsubseteq \ell_3\}=\ell_1\rightarrow \ell_3$$.
\item Whenever $\ell_3 \sqsubseteq \ell_1\to \ell_2$, then
  $\ell_1\cdot \ell_3 \sqsubseteq \ell_1\cdot (\ell_1\to \ell_2) = \ell_1\cdot
  \bigsqcup \{\ell\in\mathbb L \mid \ell_1\cdot \ell \sqsubseteq \ell_2 \} =
  \bigsqcup \{\ell_1\cdot \ell\in\mathbb L \mid \ell_1\cdot \ell \sqsubseteq
  \ell_2 \} \sqsubseteq \ell_2$.

  Whenever $\ell_1\cdot \ell_3 \sqsubseteq \ell_2$ we have that
  $\ell_1\to \ell_2 = \bigsqcup \{\ell\in\mathbb L\mid \ell_1\cdot
  \ell \sqsubseteq \ell_2\} \sqsupseteq \ell_3$,
  since $\ell_3$ is an element of the set.
\item Because of \cref{lem:monotonicity-6} it suffices to show that
  $\ell_1\cdot ((\ell_1\to \ell_2)\sqcup (\ell_1\to \ell_3)) =
  \ell_1\cdot (\ell_1\to \ell_2) \sqcup \ell_1\cdot (\ell_1\to \ell_3)
  \sqsubseteq \ell_2\sqcup \ell_3$.
\item $\ell_1\cdot (\ell_1\to \ell_2) =\ell_1\cdot\bigsqcup\{\ell'\mid \ell_1\cdot\ell'\sqsubseteq\ell_2\}=\bigsqcup\{\ell_1\cdot\ell'\mid \ell_1\cdot\ell'\sqsubseteq\ell_2\}\sqsubseteq \bigsqcup\{\ell_2\}=\ell_2$
\end{lemlist}
\end{proof}

\subsection{Congruence Closure}
\label{sec:appendix-congruence-closure}

\subsection*{Congruence Closure for Rings}

\begin{lem}~
  \label{lem:uRmodulerUcong}
  \begin{lemlist}
  \item Let $\mathbb{I}$ be a ring. Let
    $R\subseteq \mathbb{I}^X\times\mathbb{I}^X$ be a congruence. Then
    $u(R)$ is a module.\label{lem:uRmodule}
  \item Let $\mathbb{I}$ be a ring. Let $U\subseteq\mathbb{I}^X$ be a
    module. Then $r(U)$ is a congruence.\label{lem:rUcong}
  \end{lemlist}
\end{lem}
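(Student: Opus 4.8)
The plan is to verify both statements by direct computation from the definitions, recalling that $u(R)=\{u-u'\mid (u,u')\in R\}$ is the set of differences of $R$-related vectors, and that $r(U)=\{(v,v')\mid v-v'\in U\}$ relates two vectors exactly when their difference lies in $U$. Everything reduces to the identity $(u_1+u_2)-(u_1'+u_2')=(u_1-u_1')+(u_2-u_2')$ and its scalar analogue $(u\cdot s)-(u'\cdot s)=(u-u')\cdot s$, combined with the closure properties that are part of the hypothesis on each side.

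For part (i), I would check that $u(R)$ is a submodule of $\mathbb{I}^X$. Since $R$ is an equivalence it is reflexive, so $(v,v)\in R$ and hence $0=v-v\in u(R)$; in particular $u(R)\neq\emptyset$. Given two elements $u_1-u_1',u_2-u_2'\in u(R)$ with $(u_i,u_i')\in R$, closure of $R$ under addition of pairs gives $(u_1+u_2,u_1'+u_2')\in R$, so $(u_1-u_1')+(u_2-u_2')=(u_1+u_2)-(u_1'+u_2')\in u(R)$. Similarly, for $s\in\mathbb{I}$ closure of $R$ under multiplication of a pair by a scalar gives $(u_1\cdot s,u_1'\cdot s)\in R$, hence $(u_1-u_1')\cdot s\in u(R)$. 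Closure under additive inverses then follows by specialising to $s=-1$, which exists because $\mathbb{I}$ is a ring. Thus $u(R)$ is a submodule.

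For part (ii), I would show that $r(U)$ is an equivalence relation closed under linear combinations. Reflexivity holds since $v-v=0\in U$; symmetry holds since $v'-v=-(v-v')\in U$ (using that $U$, being a module over a ring, is closed under multiplication by $-1$); transitivity holds since $v-v''=(v-v')+(v'-v'')\in U$ whenever both summands lie in $U$. For closure under addition of related pairs: if $v_1-v_1'\in U$ and $v_2-v_2'\in U$, then $(v_1+v_2)-(v_1'+v_2')=(v_1-v_1')+(v_2-v_2')\in U$, so $(v_1+v_2,v_1'+v_2')\in r(U)$; and for a scalar $s$, $(v\cdot s)-(v'\cdot s)=(v-v')\cdot s\in U$, so $(v\cdot s,v'\cdot s)\in r(U)$. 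Hence $r(U)$ is a congruence.

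I do not expect a genuine obstacle here: the argument is a routine bookkeeping exercise once the two displayed identities are in hand. The only subtlety worth flagging is where the ring hypothesis is actually used, as opposed to a mere semiring: in part (i) it guarantees that $u(R)$ is closed under additive inverses (via the scalar $-1$, or equivalently that differences make sense at all), and in part (ii) the same fact underlies symmetry of $r(U)$. Everything else goes through verbatim.
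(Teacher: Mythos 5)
Your proof is correct and follows essentially the same route as the paper's: a direct verification of the closure properties of $u(R)$ and of the congruence axioms for $r(U)$, using the identities $(u_1+u_2)-(u_1'+u_2')=(u_1-u_1')+(u_2-u_2')$ and $(u\cdot s)-(u'\cdot s)=(u-u')\cdot s$ together with the scalar $-1$ for inverses and symmetry. Your added remark pinpointing exactly where the ring (as opposed to semiring) hypothesis enters is accurate and consistent with the paper's argument.
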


\begin{proof}~
\begin{lemlist}
\item We will show that $u(R)$ contains all vectors generated via
  linear combination from $u(R)$, making $u(R)$ a generating set for
  itself, i.e. a module.
  \begin{itemize}
  \item Let $v''\in u(R)$ then there must be $v,v'\in\mathbb{I}^X$ such
    that $(v,v')\in R$ and $v-v'=v''$. Since $R$ is congruence, it
    follows that $(v\cdot s, v'\cdot s)\in R$ for any $s\in\mathbb{I}$.
    This means that $v\cdot s - v'\cdot s\in u(R)$, distributivity now
    proves $(v-v')\cdot s\in u(R)$, i.e. $v''\cdot s\in u(R)$.
  \item Let $v''_1, v''_2\in u(R)$. Then there must be
    $v_1, v'_1, v_2, v'_2\in\mathbb{I}^X$ such that $v''_i=v_i-v'_i$
    and $(v_i,v'_i)\in R$ for $i = 1,2$.  Since $R$ is a congruence,
    it follows that $(v_1+v_2, v'_1+v'_2)\in R$. Thus,
    $(v_1+v_2) - (v'_1+v'_2)\in u(R)$. Commutativity of addition
    yields $(v_1-v'_1)+(v_2-v'_2)\in u(R)$, i.e.
    $v''_1+v''_2 \in u(R)$.
  \end{itemize}
\item 
  \begin{itemize}
  \item \emph{Reflexivity:} Let any $v\in U$ be given, then
    $v\cdot 0\in U$, so the $0$-vector is in $U$. For any given
    $v\in\mathbb{I}^X$, $v-v=0$, thus $(v,v)\in r(U)$.
  \item \emph{Symmetry:} Let $(v,v')\in r(U)$, then $v-v'\in U$. Since
    $U$ is a module, $(v-v')\cdot (-1)\in U$ and thus
    $-v+v'=v'-v\in U$, therefore $(v', v)\in r(U)$.
  \item \emph{Transitivity:} Let $(v,v')\in r(U)$, $(v',v'')\in r(U)$,
    then $v-v'\in U$ and $v'-v''\in U$. Since $U$ is a module,
    $(v-v')+(v'-v'')\in U$ and thus $v-v''\in U$. Therefore
    $(v,v'')\in R(U)$.
  \item \emph{Addition:} Let $(v_1,v_2)\in r(U)$ and
    $(v_1',v_2')\in r(U)$, then $v_1-v_2\in U$ and $v_1'-v_2'\in U$.
    Therefore $(v_1-v_2)+(v_1'-v_2')\in U$. Commutativity yields
    $(v_1+v_1')-(v_2+v_2')\in U$, i.e. $(v_1+v_1', v_2+v_2')\in r(U)$.
  \item \emph{Multiplication:} Let $(v,v')\in r(U)$ and
    $s\in\mathbb{I}$. Then $v-v'\in U$. Since $U$ is a module,
    $(v-v')\cdot s\in U$, distributivity yields
    $v\cdot s - v'\cdot s\in U$ and thus per definition
    $(v\cdot s,v'\cdot s)\in r(U)$.
  \end{itemize}      
\end{lemlist}
\end{proof}

\subsection*{Proof of \cref{prop:congclosurerings}}

\begin{proof}
  We first define the following two functions:
  \begin{itemize}
  \item
    $u: \mathcal P(\mathbb{I}^X\times\mathbb{I}^X)\rightarrow \mathcal
    P(\mathbb{I}^X)$ with $u(R)=\{v-v'\mid(v,v')\in R\}$.
  \item
    $r:\mathcal P(\mathbb{I}^X)\rightarrow\mathcal P(\mathbf
    R^X\times\mathbb{I}^X)$
    with $r(U)=\{(v,v')\mid v-v'\in U\}$, where
    $U\subseteq \mathbb{I}^X$, i.e., $U$ is a set of
    $\mathbb{I}$-vectors.
  \end{itemize}
  According to \cref{lem:uRmodulerUcong} in \cref{sec:proofs} we know
  that if $R$ is a congruence, then $u(R)$ is a submodule and if $U$
  is a submodule then $r(U)$ is a congruence.

  Observe that $R\subseteq R'$ implies $u(R)\subseteq u(R')$ via
  definition of $u$ and $r(U)\subseteq r(U')$ whenever
  $U\subseteq U'$, by definition of $r$. 

  Observe furthermore that $r(u(R))\subseteq c(R)$ holds, because if
  $(v_1,v_2)\in r(u(R))$, then there exists a $(v_1',v_2')\in R$ such
  that $v_1-v_2=v_1'-v_2'$, hence $v_1'-v_1=v_2'-v_2$.  Now
  $(v_1'-v_1,v_1'-v_1)\in c(R)$ due to reflexivity and thus we obtain:
  $(v_1,v_2)+(v_1'-v_1,v_1'-v_1)=(v_1,v_2)+(v_1'-v_1,
  v_2'-v_2)=(v_1+v_1'-v_1, v_2+v_2'-v_2)=(v_1',v_2')$,
  hence $(v'_1,v'_2)\in c(R)$.  Thus, $r(u(R))\subseteq c(R)$, proving
  also that congruences are fixed points of the monotone function
  $r\circ u$, since $r(U)$ is always a congruence and for every
  congruence $R$ it holds that $c(R)=R$.
	
  Now we can observe that the module generated by a set of vectors is
  the smallest module that contains this set and the congruence
  closure of a relation is the smallest congruence closed relation
  containing that relation. 

  We will now show $r([U_R]) = r([u(R)]) = c(R)$, thus proving the
  statement of the proposition. We have $u(R)\subseteq u(c(R))$ and we
  know that $u(c(R))$ is a submodule from \cref{lem:uRmodule},
  hence the submodule generated by $u(R)$ is included in $u(c(R))$,
  i.e. $[u(R)]\subseteq u(c(R))$.  Therefore,
  $r([u(R)])\subseteq r(u(c(R)))=c(R)$, and since
  \cref{lem:rUcong} shows that $r$ applied to a submodule yields
  a congruence and we have $R\subseteq r([u(R)])\subseteq c(R)$, the
  second inclusion is indeed an equality. 
\end{proof}

\subsection*{Congruence Closure for $l$-Monoids}

\subsection*{Proof of \cref{lem:rewriting}}

\begin{proof}~
  \begin{lemlist}
  \item Assume that $v\leadsto v'$, via rule $l\mapsto r$, and
    $v\sqsubseteq w$. Then
    $v' = v\sqcup r\cdot (l\to v) \sqsubseteq w\sqcup r\cdot (l\to w) =: w'$.
    Hence either $w\leadsto w'$ or $w=w'$ and in this case $w\sqsupseteq v'$.
  \item Assume that $v\leadsto v'$, via rule $l\mapsto r$. Define
    $u := (v\sqcup w) \sqcup r\cdot (l\to (v\sqcup w)) \sqsupseteq (v\sqcup w)
    \sqcup r\cdot ((l\to v)\sqcup (l\to w)) = (v \sqcup r\cdot (l\to
    v)) \sqcup (w\sqcup r\cdot (l\to w)) \sqsupseteq v'\sqcup w$. The first
    inequality is due to \cref{lem:monotonicity-7}.

    Now either $v\sqcup w \leadsto u$ or
    $v\sqcup w = u \sqsupseteq v'\sqcup w$. Since we also have that
    $v\sqcup w \sqsubseteq v'\sqcup w$, this implies $v\sqcup w = v'\sqcup w$.
  \end{lemlist}
\end{proof}

\subsection*{Proof of \cref{lem:ChurchRosser}}

\begin{proof}
  Assume that $v\leadsto v_1$ and $v\leadsto v_2$.  We set
  $v_0^a = v$, $v_1^a = v_1$ and consider a sequence of rewriting
  steps
  $v_1^a \leadsto v_2^a \leadsto \dots \leadsto v_n^a \not\leadsto$
  that leads to normal form $v_n^a$. 

  We now construct a seqence of vectors $v_1^b,\dots,v_{n+1}^b$ where
  $v_1^b = v_2$, $v_i^b \leadsto v_{i+1}^b$ or $v_i^b = v_{i+1}^b$,
  and $v_{i+1}\sqsupseteq v_i^a$.

  Given $v_i^b$ with $i\ge 1$, \cref{lem:rewriting-2} guarantees the
  existence of $v_{i+1}^b \sqsupseteq v_i^a$ with $v_i^b\leadsto v_{i+1}^b$,
  or $v_i^b \sqsupseteq v_i^a$. In the latter case we set $v_{i+1}^b = v_i^b$.

  Since $v\leadsto^* v_{n+1}^b$ and $v_n^a$ is a normal form, it must
  hold that $v_{n+1}^b \le v_n^a$. We also know from above that
  $v_{n+1}^b \sqsupseteq v_n^a$, hence $v_{n+1}^b = v_n^a$. Hence, this is the
  vector $v'$ which is reachable from both $v_1$ and $v_2$ and which
  proves the local Church-Rosser property.

\end{proof}

\subsection*{Proof of \cref{lem:prepcorrectness}}

\begin{proof}
  Assume that $v\leadsto v'$ via rule $l\mapsto r$. Hence
  $v' = v\sqcup r\cdot (l\to v)$. Let
  $v'' := v\cdot \ell\sqcup r\cdot (l\to v\cdot \ell)$. By adapting
  the proof of \cref{lem:monotonicity-4} to vectors we can show that
  $v'' \sqsupseteq v\cdot \ell\sqcup r\cdot (l\to v)\cdot \ell = (v\sqcup
  r\cdot (l\to v))\cdot \ell = v'\cdot \ell$.
  Hence either $v\cdot \ell \leadsto v''$ or $v\cdot \ell = v''$. In
  the latter case $v\cdot \ell \sqsupseteq v'\cdot \ell$ (since
  $v''\sqsupseteq v'\cdot\ell$), but also $v\cdot \ell \sqsubseteq v'\cdot \ell$
  (since $v\sqsubseteq v'$). Hence
  $v\cdot \ell = v'\cdot \ell$.

  Now assume that
  $v=v_0\leadsto v'=v_1\leadsto v_2\leadsto \dots\leadsto v_n =
  \,\Downarrow\! v$.
  We construct a sequence of vectors $w_i$ where $w_0 = v\cdot \ell$,
  $w_i \sqsupseteq v_i\cdot \ell$ and either $w_i\leadsto w_{i+1}$ or
  $w_i = w_{i+1}$. 

  Given $w_i \sqsupseteq v_i\cdot \ell$ and $v_i\leadsto v_{i+1}$. We know
  that one of the following two cases holds:
  \begin{itemize}
  \item there exists $v''$ such that
    $v_i\cdot \ell\leadsto v'' \sqsupseteq v_{i+1}\cdot \ell$: now, since
    $w_i \sqsupseteq v_i\cdot \ell$, we know due to \cref{lem:rewriting-2}
    that there exists $w_{i+1}$ such that
    $w_i\leadsto w_{i+1} \sqsupseteq v''\sqsupseteq v_{i+1}\cdot \ell$ or
    $w_i \sqsupseteq v''$. In the second case we set $w_{i+1} = w_i$.
  \item \emph{or} $v_i\cdot \ell = v_{i+1}\cdot \ell$: again we set
    $w_{i+1} = w_i$ and obtain
    $w_{i+1} = w_i \sqsupseteq v_i\cdot\ell = v_{i+1}\cdot \ell$.
  \end{itemize}
  Hence $w_n \sqsupseteq v_n =\,\Downarrow\! v$ and via monotonicity we obtain
  $\Downarrow\! (v\cdot \ell) = \Downarrow\! w_0 = \Downarrow\! w_n \sqsupseteq
  \Downarrow\! v_n = \Downarrow\! v$.

	
\end{proof}

\subsection*{Proof of \cref{thm:RewritingCorrectness}}

\begin{proof}
  It suffices to show that $\Downarrow\! v\sqsupseteq v'$ and
  $\Downarrow\! v'\sqsupseteq v$, because if $v\sqsubseteq \Downarrow\! v'$, then
  $\Downarrow\! v\sqsubseteq \Downarrow\! v'$ ($\Downarrow$ is monotone and
  idempotent) and vice-versa.  We prove this via rule induction (cf.\
  rules in \cref{tab:proof-rules}).
  \begin{description}
  \item[(\textsc{Rel})] If we find that $v \ c(R)\ v'$ because
    $v\ R\ v'$, then there are rules $v\mapsto v\sqcup v'$ and
    $v'\mapsto v\sqcup v'$. 

    Hence $v$ rewrites to
    $v\sqcup (v\sqcup v')\cdot (v\to v) \ge v\sqcup v'$, since
    $v\to v \sqsupseteq 1$ (or $v$ can not be rewritten via this rule). Hence
    either $v$ is rewritten to a vector larger or equal $v'$ or
    $v \sqsupseteq v'$ holds. Hence $\Downarrow\! v \sqsupseteq v'$.

    Analogously one can show $\Downarrow\! v' \sqsupseteq v$.
  \item[(\textsc{Refl})] If we find that $v \ c(R)\ v'$ because of
    reflexivity (i.e.  $v=v'$), then trivially
    $\Downarrow\! v=\Downarrow\! v'$
  \item[(\textsc{Sym})] If we find that $v \ c(R)\ v'$ because of
    symmetry, then we already know from the induction hypothesis that
    $\Downarrow\! v\sqsupseteq v'$ and $\Downarrow\! v'\sqsupseteq v$.
  \item[(\textsc{Trans})] If we find that $v_1 \ c(R)\ v_3$ because of
    transitivity, i.e. $v_1 \ c(R)\ v_2$ and $v_2 \ c(R)\ v_3$, we
    know from the induction hypothesis that $\Downarrow\! v_1\sqsupseteq v_2$
    and $\Downarrow\! v_2\sqsupseteq v_1$ as well as $\Downarrow\! v_3\sqsupseteq v_2$
    and $\Downarrow\! v_2\sqsupseteq v_3$. In particular, we have
    $v_1\leq \Downarrow\! v_2\sqsubseteq\Downarrow\! v_3$ and
    $v_3\leq \Downarrow\! v_2\sqsubseteq \Downarrow\! v_1.$
  \item[(\textsc{Sca})] If we find that
    $v\cdot \ell \ c(R)\ v'\cdot \ell$ because $v \ c(R)\ v'$, then
    $v\sqsubseteq\Downarrow\! v'$ and therefore, using
    \cref{lem:prepcorrectness}
    $$ v\cdot \ell\sqsubseteq (\Downarrow\! v)\cdot \ell\sqsubseteq (\Downarrow\! v')\cdot
    \ell\sqsubseteq\Downarrow\!( v'\cdot \ell)$$
  \item[(\textsc{Plus})] If we find that
    $\overline v\sqcup v \ c(R)\ \overline v'\sqcup v'$ because of
    $\overline v \ c(R)\ \overline v'$ and $v \ c(R)\ v'$, then
    $$v\sqcup \overline v \sqsubseteq (\Downarrow\! v')\sqcup (\Downarrow\! 
    \overline v')\sqsubseteq \Downarrow\!(v'\sqcup \overline v'),$$
    due to the monotonicity of $\Downarrow$.
  \end{description}
\end{proof}

\subsection*{Proof of \cref{prop:rewritingtogreater}}

\begin{proof}
  We prove this via rule induction (cf.\ rules in
  \cref{tab:proof-rules}). 

  \begin{description}
  \item[(\textsc{Rel})] If we find that $v \ c(R)\ \overline v$
    because $v\ R\ \overline v$, then there are rules
    $v\mapsto v\sqcup \overline v$ and
    $\overline v\mapsto v\sqcup \overline v$. As in the proof of
    \cref{thm:RewritingCorrectness} we obtain that $v$ rewrites to a
    vector larger or equal $v\sqcup \overline v$ in one step or that
    $v$ itself has this property. Analogously for $\overline v$.
  \item[(\textsc{Refl})] If we find that $v \ c(R)\ \overline v$
    because of reflexivity (i.e. $\overline v=v$), then no rewriting
    step is needed.
  \item[(\textsc{Sym})] If we find that $v \ c(R)\ \overline v$
    because of symmetry, then we already know this from the induction
    hypothesis because the property we want to prove is symmetric.
  \item[(\textsc{Trans})] If we find that $v_1 \ c(R)\ v_3$ because of
    transitivity, i.e. $v_1 \ c(R)\ v_2$ and $v_2 \ c(R)\ v_3$, we
    know inductively: $v_1\leadsto^* v_1'\sqsupseteq v_1\sqcup v_2$,
    $v_2\leadsto ^*v_2'\sqsupseteq v_2\sqcup v_3$.

    Now, due to \cref{lem:rewriting-3}
    $v_1\sqcup v_2\leadsto^* u \sqsupseteq v_1\sqcup v'_2 \sqsupseteq v_1\sqcup
    v_2\sqcup v_3 \sqsupseteq v_1\sqcup v_3$.
    Furthermore since $v'_1\sqsupseteq v_1\sqcup v_2$ we know from
    \cref{lem:rewriting-2} that
    $v_1\leadsto^* v'_1\leadsto^* v''_1 \sqsupseteq u$. Combined, we obtain
    $v_1\leadsto^* v''_1 \sqsupseteq v_1\sqcup v_3$.

    For $v_3$ the proof is analogous.


  \item[(\textsc{Sca})] If we have
    $v_1\cdot \ell \ c(R)\ v_2\cdot \ell$ because $v_1 \ c(R)\ v_2$
    then $v_1\leadsto v_1'\sqsupseteq v_1\sqcup v_2$ and
    $v_2\leadsto v_2'\sqsupseteq v_1\sqcup v_2$. Thus, using
    \cref{lem:prepcorrectness}
    $$ v_1\cdot \ell\leadsto^* v_1''\sqsupseteq v_1'\cdot \ell\sqsupseteq(v_1\sqcup v_2)\cdot \ell=v_1\cdot \ell\sqcup v_1'\cdot \ell$$
    $$ v_2\cdot \ell\leadsto^* v_2''\sqsupseteq v_2'\cdot \ell\sqsupseteq(v_1\sqcup v_2)\cdot \ell=v_1\cdot \ell\sqcup v_1'\cdot \ell$$
  \item[(\textsc{Plus})] If we have
    $v_1\sqcup v_2 \ c(R)\ v_1'\sqcup v_2'$ because of
    $v_1 \ c(R)\ v_1'$ and $v_2 \ c(R)\ v_2'$, then
    $v_1\leadsto^* v_1''\sqsupseteq v_1\sqcup v_1'$ and
    $v_2\leadsto^* v_2''\sqsupseteq v_2\sqcup v_2'$ and we obtain with
    \cref{lem:rewriting}:
    $$v_1\sqcup v_2\leadsto^* v\sqsupseteq v_1''\sqcup v_2''\sqsupseteq (v_1\sqcup
    v_1')\sqcup(v_2\sqcup v_2')=(v_1\sqcup v_2)\sqcup(v_1'\sqcup
    v_2').$$ Analogously for $v'_1,v'_2$.
  \end{description}
\end{proof}

\subsection*{Proof of \cref{cor:terminationgeneral}}

\begin{proof}
  Take
  $\overline v = \bigsqcup \{\hat{v} \mid v \leadsto^* \hat{v} \}$. By
  \cref{prop:rewritingtogreater} if $v \ c(R)\ \overline v$, then
  $v \leadsto^* \overline v$. Since $\leadsto$ is irreflexive, 
  $\overline v \not \leadsto$ and $\overline v$ is in normal form.

  If we assume that each rule that is applicable is applied at one
  point (or rendered unapplicable by other rule applications), it is
  sufficient to know that there exists one rewriting sequence reaching
  $\overline{v}$ from $v$.  If we decide to apply a rule, that was
  applied in this specific sequence, at a later point of time, either
  we have already exceeded the corresponding vector in the original
  rewriting sequence via other rule applications, or the rule can
  still be applied with the same or a greater multiplicand, leading
  again to a a larger vector.

  Hence every sequence of rewriting steps will eventually reach
  $\overline v$.
\end{proof}

\subsection*{Termination for Specific $l$-Monoids}

\subsection*{Proof of \cref{thm:rewritingterminates01maxtimes}}

\begin{proof}
  We show this via contradiction. So we assume the algorithm does not
  terminate, i.e. there exists an infinite sequence of rewriting steps
  starting from a vector $v$. If that is the case, observe that there
  are some indices such that the corresponding entries in the vector
  increase infinitely often. We can assume that from the beginning,
  there are only indices that increase infinitely often or do not
  increase at all, because otherwise, we can apply rules until this is
  true and use the resulting vector as the new starting vector.
  Equivalently we can assume that each rule is applied infinitely
  often, by applying rules until no rule that can only be applied
  finitely often can ever get applied anymore and then removing all
  these rules from the rule system.
	
  We call the initial vector $v$ and the rule system $\mathcal R$. The
  sequence of intermediate rewriting results is a sequence
  $v_0, v_1, \ldots$ where $v_i\leadsto v_{i+1}$ for all
  $i\in\mathbb N_0$ in a single rewriting step. Taking a look at the
  history of a specific component $v[j]$ of $v$, we can observe that
  in each rewriting step applying a rule $l\mapsto r$, $v[j]$ either
  does not change or is rewritten to $\frac{r[j]\cdot v[j']}{l[j']}$
  for a vector-index $j'$. In fact, we choose the index $j'$ which
  minimizes that quotient. Inductively, we obtain that at any given
  rewriting step $i$,
  $$v_i[j]=\frac{r_n[j]\cdot r_{n-1}[j_{n-1}] \cdot\ldots\cdot 
    r_1[j_1]\cdot v[j_0]}{l_n[j_{n-1}]\cdot
    l_{n-1}[j_{n-2}]\cdot\ldots\cdot l_1[j_0]}$$
  where $j_0,\ldots, j_{n}$ are vector-indices and
  $l_1\mapsto r_1, \ldots, l_n\mapsto r_n$ are rules. The maximum
  index of rules is not the same as $i$, because only those rules are
  multiplied that really contributed to the value of $v_i[j]$
  directly, thus, $n$ might be smaller than $i$.  Note that we used
  the fact that multiplication with $1$ in order to apply a rule
  cannot happen, since then the rule could never be applied again. (In
  this case we say that the rule has been applied maximally.) If a
  rule was used maximally instead, it would not necessarily contribute
  a factor of the type $\frac {r[\overline i]}{l[\overline j]}$. Note
  that this representation is unique and we say $v_i[j]$ is based on
  $v[j_0]$ if it can be written as above.
  
  Let $N$ be the dimension of $v$. At any given time $i$, there are at
  most $N$ different entries in $v_i$ and each entry in $v_{i+1}$ is
  obtained by multiplying one factor of the form
  $\frac{r[\overline i]}{l[\overline j]}$, where $l\mapsto r$ is a
  rule and $\overline i,\overline j$ are vector-indices, with one of
  the entries in $v_i$, or is identical to the entry in $v_i$. After
  at most $N\cdot(N-1)+1$ steps, there must exist one vector-index
  $j$, such that there exist indices $i\leq i'<j'\leq i+N\cdot(N-1)+1$
  where $v_{i'}[j]$ and $v_{j'}[j]$ are not identical and based on the
  same entry $\ell\in[0,1]$ from $v_i$, i.e. $v_{i'}[j]$ can be
  written as
  $$v_{i'}[j]=\frac{r_k[j]\cdot r_{k-1}[i_{k-1}]\cdot\ldots\cdot r_1[i_1]}{l_k[i_{k-1}]\cdot\ldots\cdot l_2[i_1]\cdot l_1[i_0]}\cdot \ell$$
  and $v_{j'}[j]$ as
  $$v_{j'}[j]=\frac{r_h'[j]\cdot r_{h-1}'[j_{h-1}]\cdot\ldots\cdot r_1'[j_1]}{l_h[j_{h-1}]\cdot\ldots\cdot l_2'[j_1]\cdot l_1'[j_0]}\cdot \ell$$
  where $k$ and $h$ are at most $N\cdot(N-1)$. 
  
  This can be proven as follows: Each vector $v_i$ has at most $N$
  different entries, so there are at most $N$ different entries from
  $v_i$ an entry in a vector $v_j$, $j\geq i$, can be based on. In
  each step, at least one entry of $v_i$ is rewritten to something
  larger. Each of the $N$ entries of $v_i$ can be rewritten and
  increased in the process at most $N-1$ times, without being based on
  the same element from $v_i$ twice (including the initial entry of
  $v_i$), due to the pigeonhole principle.  Again, since in each
  rewriting step at least one entry gets changed and there are only
  $N$ entries, after $N\cdot(N-1)+1$ rewriting steps, there must be at
  least one entry that was based on the same entry from $v_i$ twice
  and increased inbetween.
	
  Thus, we have 
  $$v_{j'}[j]=\frac{r_h'[j]\cdot r_{h-1}'[j_{h-1}]\cdot\ldots\cdot r_1'[j_1]}{l_h[j_{h-1}]\cdot\ldots\cdot l_2'[j_1]\cdot l_1'[j_0]}\cdot\frac{l_k[i_{k-1}]\cdot\ldots\cdot l_2[i_1]\cdot l_1[i_0]}{r_k[j]\cdot r_{k-1}[i_{k-1}]\cdot\ldots\cdot r_1[i_1]}\cdot v_{i'}[j]$$
  which means, $v_{j'}[j]$ can be obtained from $v_{i'}[j]$ via
  multiplication of at most $N\cdot(N-1)+1$ factors of the form
  $\frac{r[\overline i]}{l[\overline j]}$ and at most $N\cdot(N-1)+1$
  factors of the form $\frac{l[\overline j]}{r[\overline i]}$. Also,
  since $v_{j'}[j]>v_{i'}[j]$, this multiplicand is larger than $1$.
  Observe that due to finiteness of $\mathcal R$, there are only
  finitely many products of at most $N\cdot(N-1)+1$ factors of the
  form $\frac{r[\overline i]}{l[\overline j]}$ and at most
  $N\cdot(N-1)+1$ factors of the form
  $\frac{l[\overline j]}{r[\overline i]}$, so there is a least one
  such factor $\delta>1$. Moreover, this construction works for each
  interval of size $N\cdot(N-1)$. Dividing the whole history of rule
  applications into consecutive chunks of size $N\cdot(N-1)+1$ yields
  infinitely many intervals where in each interval at least one index
  increases by at least factor $\delta$. Since the dimension of $v$ is
  finite, there must be at least one index $j$ which has this property
  infinitely often. That means that $v[j]$ is rewritten to something
  larger than $ \delta^n\cdot v[j]$ for each $n\in\mathbb N_0$.
  However, the sequence $\left\langle \delta^n\right\rangle$ is not
  bounded, therefore $\delta^n\cdot v[j]$ is not bounded by $1$, but
  that is a contradiction to the assumption that the rewriting never
  terminates.

  \medskip

Using this result, we can now go on to show that rewriting terminates for the tropical semiring as well.

  We show this by proving that the tropical semiring
  $(\mathbb R_0^+\cup\{\infty\},\min, +\infty, 0)$ and
  $([0,1],\max, \cdot, 0,1)$ are isomorphic with an isomorphism that
  is compatible with the order and the multiplication, which ensures
  that any given transformation system in one of those semirings can
  do a transformation step iff the transformation system obtained by
  applying the isomorphism to each component of every rule as well as
  the vector under consideration can do one.
	
  We use the bijection
  $f: \mathbb R_0^+\cup\{\infty\}\rightarrow[0,1], f(x)=2^{-x}$ where
  we have extended the power to $-\infty$ via the natural definition
  $2^{-\infty}=0$. Obviously, this function is bijective with the
  inverse being $f^{-1}(x) = -\log_2(x)$, where the base-$2$ logarithm
  is extended to $0$ via $\log_2(0)=-\infty$. Hence we only have to
  prove that $f$ respects the order of the lattices and that it is
  compatible with addition and multiplication.
	
  \begin{itemize}
  \item Let $\ell_1,\ell_2\in \mathbb R_0^+\cup\{\infty\}$ be given,
    then
    $$\ell_1\geq \ell_2\Leftrightarrow
    -\ell_1\leq-\ell_2\Leftrightarrow 2^{-\ell_1}\leq
    2^{-\ell_2}\Leftrightarrow f(\ell_1)\leq f(\ell_2).$$
    Note that the order is swapped between the two semirings, so the
    function indeed is an order-isomorphism.
  \item Let $\ell_1,\ell_2\in \mathbb R_0^+\cup\{\infty\}$ be given,
    then
    $$f(\min\{\ell_1,\ell_2\})=2^{-\min\{\ell_1,\ell_2\}} = 
    \max\{2^{-\ell_1},2^{-\ell_2}\} = \max\{f(\ell_1),f(\ell_2)\}$$
  \item Let $\ell_1,\ell_2\in \mathbb R_0^+\cup\{\infty\}$ be given,
    then
    $$f(\ell_1+\ell_2)=2^{-(\ell_1+\ell_2)} = 2^{-\ell_1} \cdot 
    2^{-\ell_2}=f(\ell_1)\cdot f(\ell_2)$$
  \end{itemize}
\end{proof}


\subsection*{Termination for Lattices}

\begin{lem} 
  \label{lemslatticebool}
  Let $\ell_1, \ell_2\in\mathbb L$, where $\mathbb L$ is a lattice,
  then
  \begin{lemlist}
  \item
    $\ell_1\rightarrow_\mathbb L \ell_2\sqsupseteq
    \ell_2$\label{lemlatticebool1}
  \item
    $\ell_1\rightarrow_\mathbb L\ell_2\sqsupseteq\lfloor\neg
    \ell_1\rfloor$\label{lemlatticebool2}
  \item
    $\ell_1\rightarrow_\mathbb L\ell_2=\lfloor \ell_1\rightarrow_\mathbb
    B\ell_2\rfloor$\label{lemlatticebool3}
  \item
    $\lfloor \neg {\ell_1}\rfloor\sqcup_\mathbb B\ell_2\sqsubseteq
    \ell_1\rightarrow_\mathbb L\ell_2\sqsubseteq\neg \ell_1\sqcup_\mathbb B
    \ell_2$\label{lemlatticebool4}
  \item $\ell_1\rightarrow_\mathbb L\ell_2$ can be written as
    $\ell_1^*\sqcup_\mathbb B \ell_2$ for an
    $\lfloor \neg {\ell_1}\rfloor\sqsubseteq \ell_1^*\sqsubseteq\neg {\ell_1}$.
    \label{lemlatticebool5}
  \end{lemlist}
\end{lem}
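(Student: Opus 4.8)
\noindent The plan is to reduce each of the five claims to the residuation adjunction together with elementary boolean arithmetic. Recall that on $\mathbb{L}$ viewed as an $l$-monoid the multiplication is meet ($\ell\cdot\ell' = \ell\sqcap\ell'$) with unit $\top$, so $\mathbb{L}$ is integral, and that $\sqcap_{\mathbb{L}}=\sqcap_{\mathbb{B}}$ by hypothesis, so I write $\sqcap$ unambiguously. The facts I will use are: the adjunction $\ell_3\sqsubseteq\ell_1\rightarrow_{\mathbb{L}}\ell_2\iff\ell_1\sqcap\ell_3\sqsubseteq\ell_2$ (\cref{lem:monotonicity-6}) and its instance $\ell_1\sqcap(\ell_1\rightarrow_{\mathbb{L}}\ell_2)\sqsubseteq\ell_2$ (\cref{lem:monotonicity-8}); on the boolean side $\ell_1\rightarrow_{\mathbb{B}}\ell_2=\neg\ell_1\sqcup_{\mathbb{B}}\ell_2$, the law $\ell_1\sqcap(\neg\ell_1\sqcup_{\mathbb{B}}\ell_2)=\ell_1\sqcap\ell_2$, and the basic properties of the approximation, namely $\lfloor\ell\rfloor\in\mathbb{L}$, $\lfloor\ell\rfloor\sqsubseteq\ell$, and $\lfloor\ell\rfloor$ being the $\sqsubseteq$-greatest element of $\mathbb{L}$ below $\ell$. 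Note that $\ell_1\rightarrow_{\mathbb{L}}\ell_2$ is itself in $\mathbb{L}$, so it may be fed to $\lfloor\cdot\rfloor$-style arguments. With this, (i) is immediate from the adjunction since $\ell_1\sqcap\ell_2\sqsubseteq\ell_2$ (integrality; see \cref{lem:monotonicity}), and (ii) is immediate since $\ell_1\sqcap\lfloor\neg\ell_1\rfloor\sqsubseteq\ell_1\sqcap\neg\ell_1=0\sqsubseteq\ell_2$, using $\lfloor\neg\ell_1\rfloor\sqsubseteq\neg\ell_1$.

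For (iii) I prove the two inequalities. For ``$\sqsubseteq$'': \cref{lem:monotonicity-8} gives $\ell_1\sqcap(\ell_1\rightarrow_{\mathbb{L}}\ell_2)\sqsubseteq\ell_2$, and in a boolean algebra $\ell_1\sqcap x\sqsubseteq\ell_2$ implies $x=(x\sqcap\ell_1)\sqcup_{\mathbb{B}}(x\sqcap\neg\ell_1)\sqsubseteq\ell_2\sqcup_{\mathbb{B}}\neg\ell_1=\ell_1\rightarrow_{\mathbb{B}}\ell_2$; since $\ell_1\rightarrow_{\mathbb{L}}\ell_2\in\mathbb{L}$ lies below $\ell_1\rightarrow_{\mathbb{B}}\ell_2$, it lies below the greatest such element, $\lfloor\ell_1\rightarrow_{\mathbb{B}}\ell_2\rfloor$. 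For ``$\sqsupseteq$'': by the adjunction it suffices that $\ell_1\sqcap\lfloor\ell_1\rightarrow_{\mathbb{B}}\ell_2\rfloor\sqsubseteq\ell_2$, and indeed $\ell_1\sqcap\lfloor\ell_1\rightarrow_{\mathbb{B}}\ell_2\rfloor\sqsubseteq\ell_1\sqcap(\neg\ell_1\sqcup_{\mathbb{B}}\ell_2)=\ell_1\sqcap\ell_2\sqsubseteq\ell_2$, using $\lfloor\ell_1\rightarrow_{\mathbb{B}}\ell_2\rfloor\sqsubseteq\neg\ell_1\sqcup_{\mathbb{B}}\ell_2$.

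Claims (iv) and (v) follow. In (iv) the upper bound is the ``$\sqsubseteq$''-half of (iii) rewritten as $\ell_1\rightarrow_{\mathbb{L}}\ell_2\sqsubseteq\neg\ell_1\sqcup_{\mathbb{B}}\ell_2$, and the lower bound holds because, by (i) and (ii), $\ell_1\rightarrow_{\mathbb{L}}\ell_2$ is a common upper bound of $\lfloor\neg\ell_1\rfloor$ and $\ell_2$, hence dominates their least $\mathbb{B}$-upper bound $\lfloor\neg\ell_1\rfloor\sqcup_{\mathbb{B}}\ell_2$. For (v), set $\ell_1^{*}:=(\ell_1\rightarrow_{\mathbb{L}}\ell_2)\sqcap\neg\ell_1$; then $\ell_1^{*}\sqsubseteq\neg\ell_1$ by construction and $\lfloor\neg\ell_1\rfloor\sqsubseteq\ell_1^{*}$ since $\lfloor\neg\ell_1\rfloor$ is below both $\ell_1\rightarrow_{\mathbb{L}}\ell_2$ (by (ii)) and $\neg\ell_1$; finally, writing $y:=\ell_1\rightarrow_{\mathbb{L}}\ell_2$, (iv) gives $\ell_2\sqsubseteq y\sqsubseteq\neg\ell_1\sqcup_{\mathbb{B}}\ell_2$, whence in any boolean algebra $\ell_1^{*}\sqcup_{\mathbb{B}}\ell_2=(y\sqcap\neg\ell_1)\sqcup_{\mathbb{B}}\ell_2=(y\sqcup_{\mathbb{B}}\ell_2)\sqcap(\neg\ell_1\sqcup_{\mathbb{B}}\ell_2)=y\sqcap(\neg\ell_1\sqcup_{\mathbb{B}}\ell_2)=y$.

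The one point needing care beyond pure boolean arithmetic is the list of basic properties of $\lfloor\cdot\rfloor$ used above, above all $\lfloor\ell\rfloor\sqsubseteq\ell$ and its being the greatest $\mathbb{L}$-element below $\ell$: since $\mathbb{L}$- and $\mathbb{B}$-suprema differ in general, $\bigsqcup_{\mathbb{L}}\{\ell'\in\mathbb{L}\mid\ell'\le\ell\}$ need not a priori sit below $\ell$. I would dispatch this once, as a preliminary observation, from the structure of Funayama's embedding (\cite{BGJ13}), in which $\mathbb{L}$ is closed under arbitrary $\mathbb{B}$-meets, so that $\ell\mapsto\lfloor\ell\rfloor$ is precisely the coreflection of $\mathbb{B}$ onto $\mathbb{L}$ and hence enjoys all three properties; after that the rest is the short chain of implications sketched here.
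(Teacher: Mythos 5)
Your proof is correct and, for items (i), (ii) and (iv), essentially the same as the paper's. The differences worth noting are in (iii) and (v). For (iii) the paper proves the two \emph{sets} $\{\ell\in\mathbb L\mid \ell\sqsubseteq\ell_1\rightarrow_{\mathbb B}\ell_2\}$ and $\{\ell\in\mathbb L\mid \ell_1\sqcap\ell\sqsubseteq\ell_2\}$ equal and then takes $\bigsqcup_{\mathbb L}$ of both; you instead run the adjunction on the residua themselves. The two arguments are equivalent, except that your ``$\sqsupseteq$'' half needs $\lfloor y\rfloor\sqsubseteq y$ whereas the paper's set-equality argument does not need it at that point. For (v) your witness $\ell_1^*=(\ell_1\rightarrow_{\mathbb L}\ell_2)\sqcap\neg\ell_1$ is simpler than the paper's $\ell_1^*=((\ell_1\rightarrow_{\mathbb L}\ell_2)\sqcap\neg\ell_2)\sqcup\lfloor\neg\ell_1\rfloor$ and makes the final computation a two-line application of distributivity together with (i) and (iv); both choices satisfy the stated bounds.

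Your closing observation is a genuine and valuable one: under the section's standing hypotheses (orders and infima coincide, suprema need not), $\lfloor\ell\rfloor=\bigsqcup_{\mathbb L}\{\ell'\in\mathbb L\mid\ell'\leq\ell\}$ is an upper bound of the $\mathbb L$-elements below $\ell$ but need \emph{not} itself lie below $\ell$ (e.g.\ $\mathbb B=\mathcal P(\{1,2,3\})$, $\mathbb L=\{\emptyset,\{1\},\{2\},\{1,2,3\}\}$, $\ell=\{1,2\}$ gives $\lfloor\ell\rfloor=\{1,2,3\}$). The paper's own proof uses $\lfloor\ell\rfloor\sqsubseteq\ell$ without comment in (ii), (iv) and (v) (``per definition''), so the gap you flag is present there too; making the Funayama-style assumption explicit, as you propose, is the right repair. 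One small correction to your justification: closure of $\mathbb L$ under arbitrary $\mathbb B$-\emph{meets} yields a reflection (least $\mathbb L$-element above), not the coreflection you need; $\lfloor\cdot\rfloor$ is a coreflection, hence satisfies $\lfloor\ell\rfloor\leq\ell$, precisely when the inclusion $\mathbb L\hookrightarrow\mathbb B$ preserves arbitrary \emph{joins}, i.e.\ when the suprema coincide --- which is exactly the extra property the paper attributes to the Funayama embedding.
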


\begin{proof}~
  \begin{lemlist}
  \item Every lattice is integral, hence
    $\ell_1\cdot\ell_2 \sqsubseteq \ell_2$. Hence $\ell_2$ is an element of
    the set $\{\ell\mid \ell_1\cdot\ell \sqsubseteq \ell_2\}$, the supremum of
    which is $\ell_1\rightarrow_\mathbb L\ell_2$. Hence
    $\ell_1\rightarrow_\mathbb L\ell_2 \ge \ell_2$.

  \item Per definition,
    $\lfloor \neg {\ell_1}\rfloor\sqcap \ell_1\sqsubseteq\neg {\ell_1}\sqcap
    \ell_2=\bot_\mathbb B$
    and if there were an element $\ell\sqsupset\bot_\mathbb L$ such that
    $\lfloor \neg {\ell_1}\rfloor\sqsupseteq \ell$ and $\ell_1\sqsupseteq \ell$,
    then this would be true in $\mathbb B$, too, and therefore such a
    $\ell$ cannot exist.  Thus, in particular,
    $\lfloor \neg {\ell_1}\rfloor\sqcap \ell_1=\bot_\mathbb L\sqsubseteq
    \ell_2$
    and per definition of $\ell_1\rightarrow_\mathbb L\ell_2$, this
    proves that
    $\lfloor \neg {\ell_1}\rfloor\sqsubseteq \ell_1\rightarrow_\mathbb
    L\ell_2$.
  \item We observe that
    $$\lfloor \ell_1\rightarrow_\mathbb
    B\ell_2\rfloor=\bigsqcup_\mathbb L\{\ell\in\mathbb L\mid \ell\sqsubseteq
    \ell_1\rightarrow_\mathbb B\ell_2\}$$
    and
    $$\ell_1\rightarrow_\mathbb L\ell_2=\bigsqcup_\mathbb
    L\{\ell\in\mathbb L\mid \ell_1\sqcap \ell\sqsubseteq \ell_2\}$$
    We will show that both sets are equal:
    \begin{itemize}
    \item $\subseteq$:
      $$\ell\sqsubseteq (\ell_1\rightarrow_\mathbb B\ell_2) \Rightarrow
      \ell_1\sqcap \ell\sqsubseteq \ell_1\sqcap(\ell_1\rightarrow_\mathbb
      B\ell_2)$$ But then:
      $$\ell_1\sqcap \ell\sqsubseteq \ell_1\sqcap(\ell_1\rightarrow_\mathbb
      B\ell_2)\sqsubseteq \ell_2$$
      And thus
      $\ell\in\{\ell'\in\mathbb L\mid \ell_1\sqcap \ell'\sqsubseteq
      \ell_2\}$.
    \item $\supseteq$:
      $$\ell_1\sqcap \ell\sqsubseteq \ell_2\Rightarrow
      \ell\in\{\ell'\in\mathbb L\mid \ell_1\sqcap \ell'\sqsubseteq
      \ell_2\}\subseteq\{\ell'\in\mathbb B\mid \ell_1\sqcap \ell'\sqsubseteq
      \ell_2\}$$
      Hence $\ell$ is smaller or equal than the supremum of this set.
    \end{itemize}
  \item
    $\lfloor \neg {\ell_1}\rfloor\sqcup_\mathbb
    B\ell_2\underset{\str{\cref{lemlatticebool1}},\str{\cref{lemlatticebool2}}}\sqsubseteq
    \ell_1\rightarrow_\mathbb
    L\ell_2\underset{\str{\cref{lemlatticebool3}}}=\lfloor
    \ell_1\rightarrow_\mathbb B\ell_2\rfloor=\lfloor \neg
    \ell_1\sqcup_\mathbb B \ell_2\rfloor\sqsubseteq\neg \ell_1\sqcup_\mathbb
    B \ell_2$
  \item In this proof we are exclusively computing in $\mathbb B$, so
    we do not point out that $\sqcup=\sqcup_\mathbb B$. We will define
    $\ell_1^*:= ((\ell_1\rightarrow_\mathbb L\ell_2)\sqcap\neg
    {\ell_2})\sqcup\lfloor \neg {\ell_1}\rfloor$
    and first prove that
    $\ell_1\rightarrow_\mathbb L\ell_2 = \ell_1^*\sqcup \ell_2$ and
    then prove that
    $\lfloor \neg {\ell_1}\rfloor \sqsubseteq \ell_1^*\sqsubseteq\neg \ell_1$.

    \smallskip

    For the first part of the proof:
    \begin{align*}
      \ell_1\rightarrow_\mathbb L\ell_2
      &\underset{\str{\cref{lemlatticebool4}}}=
        \ell_1\rightarrow_\mathbb L\ell_2\sqcup\lfloor \neg
        {\ell_1}\rfloor\sqcup \ell_2 \\
      &= ((\ell_2\sqcup\neg {\ell_2})\sqcap(\ell_1\rightarrow_\mathbb L \ell_2))\sqcup\lfloor \neg {\ell_1}\rfloor\sqcup \ell_2\\
      =& ((\ell_1\rightarrow_\mathbb L\ell_2)\sqcap\neg
         {\ell_2})\sqcup ((\ell_1\rightarrow_\mathbb L\ell_2)\sqcap
         \ell_2)\sqcup\lfloor \neg {\ell_1}\rfloor\sqcup \ell_2 \\
      & = ((\ell_1\rightarrow_\mathbb L\ell_2)\sqcap\neg
        \ell_2)\sqcup\lfloor \neg {\ell_1}\rfloor\sqcup \ell_2 \\
      &= \ell_1^* \sqcup \ell_2
    \end{align*}
    Now, obviously,
    $\lfloor \neg {\ell_1}\rfloor\sqsubseteq ((\ell_1\rightarrow_\mathbb
    L\ell_2)\sqcap\neg {\ell_2})\sqcup\lfloor \neg {\ell_1}\rfloor =
    \ell_1^*$,
    since $\lfloor \neg {\ell_1}\rfloor$ is part of the supremum. It
    is only left to be shown that
    $((\ell_1\rightarrow_\mathbb L\ell_2)\sqcap\neg
    {\ell_2})\sqcup\lfloor \neg {\ell_1}\rfloor\sqsubseteq\neg {\ell_1}$
    holds as well. First, we observe, that
    $\lfloor \neg {\ell_1}\rfloor\sqsubseteq\neg {\ell_1}$ per definition, so
    it suffices to show that
    $(\ell_1\rightarrow_\mathbb L\ell_2)\sqcap\neg {\ell_2}\sqsubseteq\neg
    \ell_1$.
    Moreover, it is true that
    $\neg {\ell_1}=\bigsqcup\{\ell\in\mathbb B\mid \ell\sqcap
    \ell_1=\bot\}$. Then a simple computation shows:
    $$(\ell_1\rightarrow_\mathbb L\ell_2)\sqcap\neg {\ell_2}\sqcap
    \ell_1 = ((\ell_1\rightarrow_\mathbb L\ell_2)\sqcap
    \ell_1)\sqcap\neg {\ell_2}\sqsubseteq \ell_2\sqcap\neg
    {\ell_2}=\bot.$$Thus,
    $$(\ell_1\rightarrow_\mathbb L\ell_2)\sqcap\neg
    {\ell_2}\in\{\ell\in\mathbb B\mid \ell\sqcap \ell_1=\bot\}$$
    of which $\neg {\ell_1}$ is the supremum.
  \end{lemlist}
\end{proof}

\subsection*{Proof of \cref{thm-termcond}}

\begin{proof}
  \cref{lemlatticebool5} shows that each multiplicand
  $l\rightarrow_\mathbb L v$ can be written as supremum of $v[x]$ for
  an index $x\in X$ and an element $l^*$ from the finite set $L(l,x)$.
  This set is independent of $v$, the element from the set must
  however be chosen according to $v$. Therefore, each element we
  obtain in rewriting is built as infimum and supremum of finitely
  many elements from $\mathbb B$ and -- using conjunctive normal form
  -- we obtain that we can only build finitely many different
  rewriting results from $v$. Therefore, rewriting terminates for
  every vector $v$.
\end{proof}

\subsection{Up-To Techniques for Weighted Automata}
\label{sec:appendix-applications}



\subsection*{Coinduction and Up-to Techniques}

The soundness of the algorithms in \cref{sec:applications} can be
proved in a clear way by exploiting coinduction and up-to
techniques.  In this appendix we shortly recall the essential results
of the theory developed in \cite{PS11}.  We fix the lattice of
relations over $\mathbb S^X$, $Rel_{\mathbb S^X}$, but the results
expressed here hold for arbitrary complete lattices.

Given a monotone map
$b\colon Rel_{\mathbb S^X} \to Rel_{\mathbb S^X}$, the Knaster-Tarski
fixed-point theorem characterises the greatest fixed-point $\nu b$ as
the union of all post-fixed points of $b$:
$$\nu b= \bigcup \{ R \subseteq \mathbb S^X \times \mathbb S^X \mid R \subseteq b(R) \}\text{.}$$
This immediately leads to the \emph{coinduction proof principle}
\begin{equation}
  \label{eq:coinductionproofprinciple}
  \begin{array}{c}
    \exists R, \; S \subseteq R\subseteq b(R)\\
    \hline 
    S \subseteq \nu b
  \end{array}
\end{equation}
which allows to prove $(v_1, v_2)\in \nu b $ by exhibiting a
post-fixed-point $R$ such that $\{(v_1,v_2)\} \subseteq R$. We call
the post-fixed-points of $b$, \emph{$b$-simulations}.  For a monotone
map $f \colon Rel_{\mathbb S^X} \to Rel_{\mathbb S^X}$, a
\emph{$b$-simulation up-to $f$} is a relation $R$ such that
$R \subseteq b(f(R))$. We say that $f$ is \emph{compatible with $b$} if
$f(b(R)) \subseteq b (f(R))$ for all relations $R$. The following result from \cite{PS11} justifies our interest in compatible up-to techniques.

\begin{thm}\label{thm:coinupto}
  If $f$ is $b$-compatible and $R\subseteq b(f(R))$ then
  $R\subseteq\nu b$.
\end{thm}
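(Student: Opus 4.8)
The plan is to follow the standard soundness argument for compatible up-to techniques from \cite{PS11}: saturate the candidate relation $R$ by iterating $f$, and show that the saturation is an honest $b$-simulation containing $R$, so that the coinduction proof principle \eqref{eq:coinductionproofprinciple} applies.

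First I would record two routine closure facts about $b$-compatibility. (a) It is closed under composition: if $f$ is $b$-compatible then so is $f^n$ for every $n\geq 0$, where $f^0=\mathrm{id}$ (which is trivially $b$-compatible). This goes by induction on $n$, via $f^{n+1}(b(S)) = f(f^n(b(S))) \subseteq f(b(f^n(S))) \subseteq b(f(f^n(S))) = b(f^{n+1}(S))$, the first inclusion being the induction hypothesis together with monotonicity of $f$, the second being $b$-compatibility of $f$. (b) It is closed under pointwise unions: if each $f_i$ is $b$-compatible and monotone, then $g:=\bigcup_i f_i$ satisfies $g(b(S)) = \bigcup_i f_i(b(S)) \subseteq \bigcup_i b(f_i(S)) \subseteq b(\bigcup_i f_i(S)) = b(g(S))$, where the last inclusion uses monotonicity of $b$.

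Next, given a relation $R$ with $R\subseteq b(f(R))$, I would set $T:=\bigcup_{n\geq 0} f^n(R)$ and check that $T$ is a $b$-simulation. For each $n$, applying the monotone map $f^n$ to the hypothesis gives $f^n(R)\subseteq f^n(b(f(R)))$, which by fact (a) is contained in $b(f^n(f(R))) = b(f^{n+1}(R)) \subseteq b(T)$, the last step by monotonicity of $b$. Taking the union over all $n$ yields $T\subseteq b(T)$. Since $R = f^0(R) \subseteq T$, the coinduction proof principle \eqref{eq:coinductionproofprinciple} gives $T\subseteq\nu b$, hence $R\subseteq\nu b$, as required.

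There is no genuine obstacle here: every step is a short inclusion chase using only monotonicity of $b$ and $f$ together with the compatibility hypothesis. The one place to stay alert is bookkeeping of where monotonicity of $b$ versus monotonicity of $f$ is invoked — in particular in fact (b), where pushing a union inside $b$ requires $b$ to be monotone. An alternative packaging of the same argument uses the map $f^\omega=\bigcup_{n\geq 0} f^n$: facts (a) and (b) make it $b$-compatible, it satisfies $R\subseteq f^\omega(R)$ and $f(f^\omega(R))\subseteq f^\omega(R)$, and one then verifies $f^\omega(R)\subseteq b(f^\omega(R))$ directly; but the explicit union $T$ above is the most economical route.
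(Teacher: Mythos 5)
Your proof is correct and is essentially the standard argument for this result, which the paper itself does not prove but imports from Pous--Sangiorgi: saturate $R$ under $f$, use compatibility of the iterates $f^n$ to show the union $\bigcup_n f^n(R)$ is a post-fixed point of $b$, and conclude by Knaster--Tarski. All the inclusion chases, including the places where monotonicity of $f$ versus $b$ is needed, check out.
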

The above theorem leads to the coinduction up-to principle
\begin{equation}
  \label{eq:coinductionproofprinciple}
  \begin{array}{c}
    \exists R, \; S \subseteq R\subseteq b(f(R))\\
    \hline 
    S \subseteq \nu b
  \end{array}
\end{equation}

Up-to techniques can be combined in a number of interesting ways. For a map $f\colon \mathit{Rel}_{\mathbb S^X}\rightarrow \mathit{Rel}_{\mathbb S^X}$, the $n$-iteration of $f$ is defined as $f^{n+1}=f\circ f^n$ and $f^0=\mathit{id}$, the identity function. The omega iteration is defined as $f^\omega(R)=\bigcup_{i=0}^\infty f^i(R)$. Given two relations $R$ and $S$, we use $R\bullet S$ to denote their relational composition $\{(x,z)\mid\exists y \mid (x,y)\in R\text{\ and\ }(y,z)\in S\}$. For two functions, $f, g: \mathit{Rel}_{\mathbb S^X}\rightarrow \mathit{Rel}_{\mathbb S^X}$, we write $f\bullet g$ for the function mapping a relation $R$ into $f(R)\bullet g(R)$. 

The following result from \cite{PS11} informs us that compatible up-to techniques can be composed resulting in other compatible techniques.

\begin{lem}
  \label{lemmacompositionality}
  The following functions are $b$-compatible:
  \begin{itemize}
  \item $id$: the identity function; 
  \item $f\circ g$: the composition of $b$-compatible
    functions $f$ and $g$;
  \item $\bigcup F$: the pointwise union of an arbitrary family $F$
    of $b$-compatible functions: $\bigcup F (R) = \bigcup_{f\in
      F}f(R)$;
  \item $f^{\omega}$: the (omega) iteration of a $b$-compatible function
    $f$, defined as $f^\omega(R) = \bigcup_{i=0}^\infty f^i(R)$
  \end{itemize}
  Moreover, if $b(R)\bullet b(S)\subseteq b(R\bullet S)$ for all
  relations $R,S$
  \begin{itemize}
  \item $f\bullet g$: the relation composition of $b$-compatible
    functions $f$ and $g$;
  \end{itemize}
  is $b$-compatible.
\end{lem}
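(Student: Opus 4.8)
The plan is to unfold the definition of $b$-compatibility given just above --- recall $f$ is compatible with $b$ iff $f(b(R))\subseteq b(f(R))$ for every relation $R$ --- and to verify each of the five closure properties by a short diagram chase that uses only monotonicity of $b$, of the functions involved, and of the operators $\bullet$ and $\bigcup$. I would treat the items in the order listed, since the later ones are built from the earlier ones.

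For $\mathit{id}$ the claim is immediate: $\mathit{id}(b(R))=b(R)=b(\mathit{id}(R))$. For $f\circ g$ I would chain the two hypotheses: $f(g(b(R)))\subseteq f(b(g(R)))$ by $b$-compatibility of $g$ and monotonicity of $f$, and $f(b(g(R)))\subseteq b(f(g(R)))$ by $b$-compatibility of $f$; composing gives $(f\circ g)(b(R))\subseteq b((f\circ g)(R))$. For $\bigcup F$, fix $f\in F$; then $f(b(R))\subseteq b(f(R))\subseteq b\bigl(\bigcup_{g\in F}g(R)\bigr)=b((\bigcup F)(R))$, the middle step by monotonicity of $b$; taking the union over $f\in F$ on the left yields $(\bigcup F)(b(R))\subseteq b((\bigcup F)(R))$.

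For $f^\omega$ I would first show by induction on $n$ that every iterate $f^n$ is $b$-compatible: the base case $f^0=\mathit{id}$ is the first item, and the step $f^{n+1}=f\circ f^n$ is an instance of the second item. Since $f^\omega=\bigcup_{i\ge 0}f^i$ is a pointwise union of the $b$-compatible family $\{f^i\mid i\ge 0\}$, the third item applies and gives $b$-compatibility of $f^\omega$. Finally, for $f\bullet g$ under the extra hypothesis $b(R)\bullet b(S)\subseteq b(R\bullet S)$, I would compute $(f\bullet g)(b(R))=f(b(R))\bullet g(b(R))\subseteq b(f(R))\bullet b(g(R))\subseteq b(f(R)\bullet g(R))=b((f\bullet g)(R))$, where the first inclusion uses $b$-compatibility of $f$ and $g$ together with monotonicity of relational composition in both arguments, and the second inclusion is exactly the extra hypothesis.

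There is no real obstacle here: the argument is routine monotonicity bookkeeping. The one point that deserves care is the $f^\omega$ case, where $b$ does not commute with infinite unions, so the argument must route through monotonicity of $b$ --- observing $b(f^i(R))\subseteq b(f^\omega(R))$ for each $i$ before taking the union --- exactly as in the $\bigcup F$ case, and only after establishing $b$-compatibility of all finite iterates by induction.
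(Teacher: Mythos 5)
Your proof is correct, and each of the five verifications is the standard monotonicity computation; the only point needing care (that $b$ need not preserve infinite unions, so the $\bigcup F$ and $f^\omega$ cases must go through $b(f(R))\subseteq b((\bigcup F)(R))$ termwise before taking the union) is one you handle explicitly. Note that the paper itself gives no proof of this lemma --- it is quoted from Pous and Sangiorgi's chapter \cite{PS11} --- and your argument is exactly the standard one from that source, including the induction on finite iterates $f^n$ followed by an appeal to closure under pointwise unions for $f^\omega$.
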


With these results it is easy to prove the soundness of the discussed
algorithms.

\subsection*{Language Equivalence for Weighted Automata}

\subsection*{Proof of \cref{prop:coinduction}}
\begin{proof}
  The key step consists in characterising $\sim$ as $\nu \beq$. This
  follows easily from abstract results (see e.g. \cite{BonchiBBRS12,HermidaJ98}), but a
  concrete proof would proceed as follows:
  \begin{enumerate}
  \item prove that $\beq$ is a co-continuous  function, i.e.,
    $\beq(\bigcap_{i=1}^\infty S_i) = \bigcap_{i=1}^\infty \beq(S_i)$
    whenever $S_{i+1}\subseteq S_{i}$:
		
  \item therefore, by the Kleene fixed-point theorem \cite{dp:lattices-order},
    $\nu \beq = \bigcap_n \beq^n(\top) $ where $\beq^0=Id$ and
    $\beq^{n+1}= \beq\circ \beq^n$;
  \item prove, using induction, that for all $n$,
    $\beq^n(\top) = \{(v_1,v_2) \mid \llbracket v_1 \rrbracket(w) =
    \llbracket v_2 \rrbracket(w) \text{ for all }\linebreak
    \text{words } w\in A^*\ \text{ up to length }n-1 \}$:
  \item conclude by 2 and 3 that
    $\nu \beq = \{(v_1,v_2) \mid \llbracket v_1 \rrbracket =
    \llbracket v_2 \rrbracket \text{ for all word } w\in A^* \}$.
  \end{enumerate}
  By coinduction, the first statement follows.

  For the second statement we have to use coinduction up-to and prove
  $\beq$-compatibility of $c$. The latter follows from abstract
  results \cite{BonchiPPR14}. For a concrete
  proof, one has first to show that the following monotone maps are
  $\beq$-compatible.
  \begin{itemize}
  \item the constant reflexive function: $r(R)=\{(x,x)\mid
       x\in S\}$;
  \item the converse function: $s(R)=\{(y,x)\mid (x,y) \in R \}$;
  \item the squaring function: $t(R)=\{(x,z)\mid \exists y,
      (x,y)\in R \text{ and } (y,z)\in R \}$.
  \item the sum function: $+(R)=\{(v_1+v_2,v_1'+v_2')\mid 
      (v_1,v_2)\in R \text{ and } (v_1',v_2')\in R \}$.
  \item the scalar function: $\cdot (R)=\{(v\cdot s, w\cdot s)\mid 
      (v,w)\in R \text{ and } s\in \mathbb S \}$.
 \end{itemize}
 Then, one observe that
 $c = (\mathit{Id}\, \cup r \cup s \cup\, t\cup + \cup \cdot )^\omega$
 and conclude that $c$ is $b$-compatible by
 \cref{lemmacompositionality}.
\end{proof}

\subsection*{Language Inclusion for Weighted Automata}

\begin{lem}
  \label{lemma:greteastbin} 
  $\nu \bin = \ \precsimu$.
\end{lem}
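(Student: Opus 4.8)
The plan is to prove the two inclusions $\precsimu \subseteq \nu\bin$ and $\nu\bin \subseteq \precsimu$ separately, in direct analogy with the first statement of \cref{prop:coinduction}, which identifies $\sim$ with $\nu\beq$; the only change is that the equality of output/languages is everywhere replaced by the order $\sqsubseteq$. Before starting I would note that the displayed definition of $\bin$ contains a harmless typo: the transition condition should read $(t_a(v_1), t_a(v_2)) \in R$, and this is the reading used throughout.

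For the inclusion $\precsimu \subseteq \nu\bin$, I would invoke the coinduction proof principle: it suffices to show that $\precsimu$ is itself a $\bin$-simulation, i.e. $\precsimu \subseteq \bin(\precsimu)$. So let $v_1 \precsimu v_2$. Instantiating the defining inequality at $w = \epsilon$ gives $o(v_1) = \llbracket v_1\rrbracket(\epsilon) \sqsubseteq \llbracket v_2\rrbracket(\epsilon) = o(v_2)$. For each $a \in A$ and each $w \in A^*$ we have $\llbracket t_a(v_1)\rrbracket(w) = \llbracket v_1\rrbracket(aw) \sqsubseteq \llbracket v_2\rrbracket(aw) = \llbracket t_a(v_2)\rrbracket(w)$, hence $t_a(v_1) \precsimu t_a(v_2)$. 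Thus $(v_1,v_2) \in \bin(\precsimu)$.

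For the converse $\nu\bin \subseteq \precsimu$, I would use that, by Knaster--Tarski, $\nu\bin$ is the union of all post-fixed points of $\bin$; it therefore suffices to show that every $\bin$-simulation $R$ (i.e. $R \subseteq \bin(R)$) is contained in $\precsimu$. Fixing such an $R$ and a pair $(v_1,v_2) \in R$, I would prove $\llbracket v_1\rrbracket(w) \sqsubseteq \llbracket v_2\rrbracket(w)$ by induction on $|w|$. For $w = \epsilon$, $R \subseteq \bin(R)$ yields $o(v_1) \sqsubseteq o(v_2)$, which is exactly $\llbracket v_1\rrbracket(\epsilon) \sqsubseteq \llbracket v_2\rrbracket(\epsilon)$. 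For $w = aw'$, $R \subseteq \bin(R)$ gives $(t_a(v_1), t_a(v_2)) \in R$, so the induction hypothesis yields $\llbracket t_a(v_1)\rrbracket(w') \sqsubseteq \llbracket t_a(v_2)\rrbracket(w')$, i.e. $\llbracket v_1\rrbracket(aw') \sqsubseteq \llbracket v_2\rrbracket(aw')$.

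An alternative that stays even closer to the proof of \cref{prop:coinduction} is to argue via Kleene's theorem: $\bin$ preserves arbitrary intersections (co-continuity being immediate by swapping the quantifier over $a \in A$ with the quantifier over the index set), so $\nu\bin = \bigcap_n \bin^n(\top)$, and a routine induction shows $\bin^n(\top) = \{(v_1,v_2) \mid \llbracket v_1\rrbracket(w) \sqsubseteq \llbracket v_2\rrbracket(w) \text{ for all } w \in A^* \text{ of length at most } n-1\}$ (the case $n=0$ being vacuously the full relation $\top$), whence the intersection over $n$ equals $\precsimu$. There is no genuine obstacle here: the argument is entirely routine, and the only points that call for mild care are the bookkeeping of word lengths in the inductive step and noticing the typo in the definition of $\bin$.
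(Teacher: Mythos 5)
Your proposal is correct. The paper disposes of this lemma in one line: ``proceed as for the first part of \cref{prop:coinduction}, replacing $=$ by $\sqsubseteq$ and $\beq$ by $\bin$'', and that referenced proof is exactly your \emph{alternative} route --- co-continuity of the functional, Kleene's theorem giving $\nu\bin=\bigcap_n \bin^n(\top)$, and an induction showing $\bin^n(\top)$ is inclusion up to words of length $n-1$. Your \emph{primary} argument is a slightly different packaging of the same content: you split into two inclusions, show $\precsimu$ is itself a $\bin$-simulation (Knaster--Tarski then gives $\precsimu\subseteq\nu\bin$), and show every $\bin$-simulation is contained in $\precsimu$ by induction on word length. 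This buys you a marginally more elementary proof, since you never need to verify that $\bin$ preserves decreasing intersections; the cost is nil, because the word-length induction is the heart of both versions. Two small points of hygiene: you are right that the displayed definition of $\bin$ has a typo ($(t_a(v_1),t_a(v_1))$ for $(t_a(v_1),t_a(v_2))$), and the correction is the intended reading; and in your second inclusion the induction hypothesis must be quantified over \emph{all} pairs in $R$ simultaneously (not just the fixed pair $(v_1,v_2)$), since the inductive step applies it to the new pair $(t_a(v_1),t_a(v_2))$ --- you flag this as bookkeeping, and indeed it is, but it is worth stating the induction in that form.
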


\begin{proof} 
  The proof proceeds as for the first part of \cref{prop:coinduction}
  by using $\sqsubseteq$ in place of $=$ and $\bin$ in place of $\beq$.
\end{proof}

\begin{lem} \label{compatibilitybin}
  If $\sqsubseteq$ is a precongruence, the following monotone maps are
  $\bin$-compatible:
  \begin{itemize}
  \item the constant ord function: $\sqsubseteq(R)=\{(v_1,v_2)\mid 
    v_1 \sqsubseteq v_2 \}$;
  \item the constant inclusion function:
    $\precsimu(R)=\{(v_1,v_2)\mid v_1 \precsimu v_2 \}$;
  \item the squaring function: $t(R)=\{(v_1,v_3)\mid \exists v_2,
      (v_1,v_2)\in R \text{ and } (v_2,v_3)\in R \}$.
  \item the sum function: $+(R)=\{(v_1+v_2,v_1'+v_2')\mid 
      (v_1,v_2)\in R \text{ and } (v_1',v_2')\in R \}$.
  \item the scalar function: $\cdot\,(R)=\{(v\cdot s, w\cdot s)\mid 
      (v,w)\in R \text{ and } s\in \mathbb S \}$.
 \end{itemize}
\end{lem}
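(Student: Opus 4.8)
The plan is to check, for each of the five maps $f$ in the list, the defining condition of $\bin$-compatibility, namely $f(\bin(R))\subseteq \bin(f(R))$ for every relation $R$ (monotonicity of each $f$ is immediate from the shape of its definition). Two elementary facts are used throughout. First, $o\colon\mathbb S^X\to\mathbb S$ and each $t_a\colon\mathbb S^X\to\mathbb S^X$ are linear: $o(v+w)=o(v)+o(w)$, $o(v\cdot s)=o(v)\cdot s$, and similarly for $t_a$, which follows from associativity and distributivity of the semiring. Second, since $\sqsubseteq$ is a precongruence it is closed under linear combinations, and applying this to vectors differing in a single coordinate shows that addition and multiplication on $\mathbb S$ are monotone w.r.t.\ $\sqsubseteq$; hence $o$ and every $t_a$ are monotone componentwise.

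For the two constant maps I would argue abstractly: a constant function with value $C$ is $\bin$-compatible iff $C\subseteq\bin(C)$, i.e.\ iff $C$ is a $\bin$-simulation. For the constant inclusion function this holds since $\precsimu=\nu\bin$ by \cref{lemma:greteastbin} and greatest fixed points satisfy $\nu\bin=\bin(\nu\bin)$. For the constant ord function one checks that $\sqsubseteq$ is a $\bin$-simulation: if $v_1\sqsubseteq v_2$ then $o(v_1)\sqsubseteq o(v_2)$ and $t_a(v_1)\sqsubseteq t_a(v_2)$ for all $a$ by the monotonicity of $o$ and $t_a$ noted above, so $(t_a(v_1),t_a(v_2))\in\ \sqsubseteq$; this is exactly $\sqsubseteq\ \subseteq\bin(\sqsubseteq)$ (and as a by-product $\sqsubseteq\ \subseteq\ \precsimu$ follows by coinduction).

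For the squaring map $t$, given $(v_1,v_3)\in t(\bin(R))$ pick $v_2$ with $(v_1,v_2),(v_2,v_3)\in\bin(R)$; then $o(v_1)\sqsubseteq o(v_2)\sqsubseteq o(v_3)$ by transitivity of $\sqsubseteq$, and for each $a$ both $(t_a(v_1),t_a(v_2))$ and $(t_a(v_2),t_a(v_3))$ lie in $R$, hence $(t_a(v_1),t_a(v_3))\in t(R)$; so $(v_1,v_3)\in\bin(t(R))$. For the sum map, given $(v_1+v_2,v_1'+v_2')$ with $(v_1,v_1'),(v_2,v_2')\in\bin(R)$, linearity gives $o(v_1+v_2)=o(v_1)+o(v_2)$, which by monotonicity of $+$ is $\sqsubseteq o(v_1')+o(v_2')=o(v_1'+v_2')$; likewise $t_a(v_1+v_2)=t_a(v_1)+t_a(v_2)$ with $(t_a(v_1),t_a(v_1')),(t_a(v_2),t_a(v_2'))\in R$, so $(t_a(v_1+v_2),t_a(v_1'+v_2'))\in\ +(R)$. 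The scalar map is handled identically, with $v\cdot s$ in place of sums and monotonicity of $\cdot$ in place of that of $+$. The only genuine use of the precongruence hypothesis is the monotonicity of $o$ and the $t_a$ (i.e.\ of $+$ and $\cdot$) invoked in the ord, sum and scalar cases; everything else is a routine unfolding of $\bin$, and the one thing to be careful with is tracking which vector sits on which side of each pair — in particular the sum map combines the pairs $(v_1,v_1')$ and $(v_2,v_2')$, not the two components of a single pair.
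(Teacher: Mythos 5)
Your proposal is correct and follows essentially the same route as the paper: derive monotonicity of $o$ and the $t_a$ from the precongruence hypothesis, observe that this makes $\sqsubseteq$ a $\bin$-simulation (hence the constant map compatible), and verify the remaining cases by directly unfolding $f(\bin(R))\subseteq\bin(f(R))$ — the paper only writes out the first case and declares the others "similar", so your version is simply more explicit. Your reading of the sum map as combining the pairs $(v_1,v_1')$ and $(v_2,v_2')$ is the intended one (it matches rule (\textsc{Plus}) in \cref{tab:proof-rules}).
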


\begin{proof}
  Since $\sqsubseteq$ is a precongruence, if $v_1\sqsubseteq v_2$, then
  $o(v_1)\sqsubseteq o(v_2)$ and for all $a\in A$, $t_a(v_1) \sqsubseteq t_a(v_2)$.
  Which means that $\sqsubseteq \subseteq \bin(\sqsubseteq)$, that is, for all
  relations $R$, $\sqsubseteq(\bin (R)) \subseteq \bin (\sqsubseteq (R))$. This
  proves the first statement. The others are similar.
\end{proof}

\begin{lem}
  \label{lemma:c'compatible} 
  $p$ is compatible.
\end{lem}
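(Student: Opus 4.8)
The plan is to mirror, step for step, the proof of $\beq$-compatibility of $c$ given inside the proof of \cref{prop:coinduction}. First I would unfold the inductive definition of the precongruence closure: $p(R)$ is generated from $R$ by the rules (\textsc{Rel}), (\textsc{Ord}), (\textsc{Trans}), (\textsc{Sca}) and (\textsc{Plus}) (i.e.\ the rules of \cref{tab:proof-rules} with (\textsc{Sym}) removed and (\textsc{Refl}) strengthened to (\textsc{Ord})). Each of these rules is finitary, so the closure is reached after $\omega$ many iterations of the pointwise union of the corresponding one-step maps, which gives the identity
\[
  p \;=\; \bigl(\mathit{Id}\,\cup\, {\sqsubseteq}\,\cup\, t\,\cup\, +\,\cup\, \cdot\,\bigr)^{\omega},
\]
where $\sqsubseteq$ is the constant ord function, $t$ the squaring function, and $+$, $\cdot$ the sum and scalar functions of \cref{compatibilitybin}; here (\textsc{Rel}) corresponds to $\mathit{Id}$, (\textsc{Ord}) to $\sqsubseteq$, (\textsc{Trans}) to $t$, (\textsc{Sca}) to $\cdot$ and (\textsc{Plus}) to $+$. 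Compared with the analogous expression $c=(\mathit{Id}\cup r\cup s\cup t\cup +\cup \cdot)^{\omega}$ for the congruence closure, the symmetry function $s$ drops out and the reflexivity function $r$ is absorbed into the larger ord function $\sqsubseteq$.

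The second step is to quote the compatibility facts that are already available. By \cref{lemmacompositionality} the identity $\mathit{Id}$ is $\bin$-compatible, and by \cref{compatibilitybin} — which is precisely where the standing assumption that $\sqsubseteq$ is a precongruence gets used — the maps $\sqsubseteq$, $t$, $+$ and $\cdot$ are $\bin$-compatible. Then I would apply \cref{lemmacompositionality} twice: once to conclude that the pointwise union $\mathit{Id}\cup {\sqsubseteq}\cup t\cup +\cup \cdot$ of $\bin$-compatible functions is $\bin$-compatible, and once more to conclude that its $\omega$-iteration is $\bin$-compatible. Together with the identity from the first step, this yields that $p$ is $\bin$-compatible, which is the claim.

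The only slightly delicate point — the ``main obstacle'', such as it is — is justifying the equality $p=(\mathit{Id}\cup {\sqsubseteq}\cup t\cup +\cup \cdot)^{\omega}$, i.e.\ that iterating the union of the generating maps $\omega$ times already produces a fixed point and hence the whole precongruence closure. This is true because each generating rule has only finitely many premises, so the induced closure operator is finitary and $\omega$ iterations suffice; the reasoning is identical to the one used (implicitly) for $c$ in the proof of \cref{prop:coinduction}, and I would simply refer to it. Everything else is a direct invocation of \cref{compatibilitybin} and \cref{lemmacompositionality}, with no genuine calculation involved.
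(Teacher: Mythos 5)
Your proposal is correct and follows exactly the paper's argument: the paper's proof likewise observes that $p= (\mathit{Id}\, \cup \sqsubseteq \cup\, t \cup + \cup \cdot)^\omega$ by definition and then concludes immediately from \cref{lemmacompositionality} and \cref{compatibilitybin}. The only difference is that you spell out the correspondence between the proof rules and the generating maps, and the finitarity justification for the $\omega$-iteration, which the paper leaves implicit.
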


\begin{proof}
  Observe that by definition
  $p= (\mathit{Id}\, \cup \sqsubseteq \cup\, t \cup + \cup \cdot)^\omega$. The
  statement follows immediately by \cref{lemmacompositionality} and
  \cref{compatibilitybin}.
\end{proof}

\subsection*{Proof of \cref{thm:langeincl}}

\begin{proof}  
  For soundness, observe that the following is an invariant for the
  while loop at step~\texttt{(3)}.
  \begin{equation}
    R\subseteq \bin (p(R)\cup todo)
  \end{equation}
  If \texttt{HKP} returns $true$ then $todo$ is empty and thus
  $R\subseteq \bin (p(R))$, i.e., $R$ is a $\bin$-simulation up-to
  $p$.  By \cref{lemma:c'compatible}, \cref{thm:coinupto} and
  \cref{lemma:greteastbin}, $v_1\precsimu v_2$.
  
  For completeness, we proceed in the same way as in
  \cref{thm:langequicong}.
\end{proof}

\subsection*{Proof of \cref{prop:Preconalgocorrect}}

\begin{proof}
  This proof is very close in structure to the proofs for
  \cref{theorem-correctness} and \cref{thm:RewritingCorrectness}. We
  will use \cref{lem:ChurchRosser} and \cref{lem:prepcorrectness}
  because the claims and proofs for these lemmas can be copied
  verbatim for the asymmetric case, so we do not prove these claims
  again. This does not hold true for \cref{theorem-correctness} and
  \cref{thm:RewritingCorrectness}, though, where symmetry is relevant.
  We will now prove the two claims, adjusted to the non-symmetric
  case.
  \begin{itemize} 
  \item \emph{Whenever there exists a vector $v_2'\geq v_1$ such that
      $v_2$ rewrites to $v_2'$ via $\mathcal R$, i.e.,
      $v_2\leadsto^*_{\mathcal{R}} v_2'$, then $(v_1,v_2)\in p(R)$.}
    This is the analogue to \cref{theorem-correctness}.

    \smallskip
		
    We will show that if $v_2\leadsto^*_{\mathcal R}v_2'$, then
    $(v_2',v_2)\in p(R)$. Furthermore $v_1\le v_2'$ implies
    $v_1\ p(R)\ v'_2$ due to rule~(\textsc{Ord}). Transitivity then
    yields $(v_1,v_2)\in p(R)$.

    Assume $v_2\leadsto v'$ via a rule $l\mapsto r$, then $l=w'$,
    $r=w\sqcup w'$ where $(w,w')\in R$, according to definition of
    $\mathcal R$. Due to closure under linear combinations and
    idempotency of $\sqcup$, we have
    $w\sqcup w' \ p(R)\ w'\sqcup w'=w'$, i.e. $r \ p(R)\ l$.
    Therefore, due to closure under scalar multiplication,
    $r\cdot (l\rightarrow v_2) \ p(R)\ l\cdot (l\rightarrow v_2)$,
    due to closure under linear combination
    $v_2\sqcup r\cdot(l\rightarrow v_2) \ p(R)\ v_2\sqcup
    l\cdot(l\rightarrow v_2)$.
    Keeping in mind the definition of $\rightarrow$, we can observe
    that $l\cdot(l\rightarrow v_2)\leq v_2$, and applying this we
    obtain $v_2\sqcup l\cdot(l\rightarrow v_2)=v_2$, therefore
    $v' \ p(R)\ v_2$. Transitivity yields the claim for $\leadsto^*$.

			
  \item \emph{If $v \ p(R)\ v'$, then $\Downarrow\! v'\geq v$ .} This
    is the analogue to \cref{thm:RewritingCorrectness}. We perform a
    proof by structural induction on the modified derivation rules of
    \cref{tab:proof-rules} (where (\textsc{Sym}) is removed and
    (\textsc{Refl}) is replaced by rule~(\textsc{Ord})).
    \begin{description}
    \item[(\textsc{Rel})] If $v \ p(R)\ v'$ because $v R v'$ then
      there exists a rewriting rule
      $(v'\mapsto v\sqcup v')\in\mathcal R$. Applying this rule shows
      that $\Downarrow\! v'\geq v'\sqcup v\geq v$.
    \item[(\textsc{Ord})] If $v \ p(R)\ v'$ because of the ordering
      rule, i.e.  $v\le v'$, then
      $\Downarrow\!v'\geq v'=v$.
    \item[(\textsc{Trans})] If $v_1 \ p(R)\ v_3$ because of
      $v_1 \ p(R)\ v_2$ and $v_2 \ p(R)\ v_3$, then
      $\Downarrow\!v_3\geq v_2$ implies
      $\Downarrow\!v_3\geq \Downarrow\! v_2$.
      Furthermore $v_1 \ p(R)\ v_2$ inductively yields
      $\Downarrow\!v_2\geq v_1$ and transitivity therefore
      yields $\Downarrow\!v_3\geq v_1$.
    \item[(\textsc{Sca})] If $v\cdot \ell \ p(R)\ v'\cdot \ell$ because
      $v \ p(R)\ p$, then $v\leq\Downarrow\!v'$ and
      \cref{lem:prepcorrectness} yields
      $\Downarrow\!(v'\cdot \ell)\ge (\Downarrow\!
      v')\cdot \ell \geq v\cdot \ell$.
    \item[(\textsc{Plus})] If
      $\overline v\sqcup v \ p(R)\ \overline v'\sqcup v'$ because
      $\overline v \ p(R)\ \overline v'$ and $v \ p(R)\ v'$, then
      $v\sqcup \overline v\leq(\Downarrow_{\mathcal
        R}v')\sqcup(\Downarrow\!\overline
      v')\leq\Downarrow\!(v'\sqcup\overline v')$,
      due to the monotonicity of $\Downarrow$.
    \end{description}
  \end{itemize}
\end{proof}

\subsection*{Threshold Problem for Automata over the Tropical
  Semiring}

\subsection*{Proof of \cref{lems:Atacompat}}

\begin{proof}~
  \begin{lemlist}
  \item Observe that $\mathcal A$ is increasing and thus, if
    $\mathcal A(t_a(v))[x]=\infty$,
    $\mathcal A(t_a(\mathcal A(v)))[x]=\infty$ necessarily holds, too.
    Also note that $\mathcal A(t_a(v))[x]>T$ implies
    $\mathcal A(t_a(v))[x]=\infty$.  Thus we can prove this lemma by
    showing the following: Let $I=\{x\in X\mid u[x]>T\}$, $x\in X$ and
    $a\in A$ such that $t_a(v)[x]\leq T$ be given, then
    $t_a(\mathcal A(v))[x]=t_a(v)[x]$.
    
    All entries greater than $T$ necessarily get mapped to $\infty$.
    
    We first compute:
    $$t_a(\mathcal A(v))[x]=\bigsqcup\{\mathcal A(v)[y]\sqcap
    t_a[y,x]\mid y\in X\}=\min\{\mathcal A(v)[y]+t_a[y,v]\mid
    y\in X\}$$
    and analogously
    $$t_a(v)[x]=\min\{v[y]+t_a[y,x]\mid y\in X\}$$ Since we know
    that $t_a(v)[x]\leq T$, there must exist a $y\in X$ such that
    $v[y]+t_a[y,x]\leq T$. Observe that $v[x]>T$ and
    $t_a[y',x]\in\mathbb N_0$ for all $y'\in I$, therefore $y\notin I$.
    Thus$$t_a(v)[x]=\min\{v[y]+t_a[y,x]\mid y\in X\setminus
    I\}$$ Now we can compute:
    \begin{equation*}
      \begin{aligned}
        &t_a(\mathcal A(v))[x]=\min\{\mathcal A(v)[y]+t_a[y,x]\mid y\in X\}\\
        =&\min\{\min\{\mathcal A(v)[y]+t_a[y,x]\mid y\in X\setminus I\},\min\{\mathcal A(v)[y']+t_a[y',x]\mid y'\in I\}\\
        =&\min\{t_a(v)[x],\min\{\infty+t_a[y',x]\mid y'\in I\}\}=\min\{t_a(v)[x],\infty\}=t_a(v)[x]
      \end{aligned}
    \end{equation*}
  \item First we show that if $o(v)\leq T$ it also holds that
    $o(\mathcal A(v))=o(v)$. If $o(v)\leq T$ then
    $\min\{o[i]+v[i]\mid 1\leq i\leq |X|\}\leq T$, so there is
    an index $i$ where the minimum is reached and smaller than or equal
    $T$, i.e. $o[i]+v[i]\leq T$. Thus, since $+$ is increasing,
    $v[i]\leq T$ and therefore $\mathcal A(v[i])=v[i]$. Since also
    $\mathcal A$ is only increasing, we can conclude
    $o(\mathcal A(v))=o(v)$.
    
    It remains to be shown that
    $o(v)>T\Leftrightarrow o(\mathcal A(v))> T$. Certainly, if $o(v)>T$
    it must follow that $o(\mathcal A(v))> T$, since
    $\mathcal A(v)\geq v$. So assume now, for the converse direction,
    that $ o(\mathcal A(v))> T$ holds. This means
    $\min\{o[i]+\mathcal A(v)[i]\mid 1\leq i\leq |X|\}>T$ and
    assume $o(\mathcal A(v))\neq o(v)$. Then there exists an index $i$
    where the minimum is reached, i.e an index such that
    $o[i]+v[i]\leq o[j]+v[j]$ for all $j$. Since $\mathcal A$ is only
    increasing the entries of a vector, it follows
    $o[i]+v[i]<o[j]+\mathcal A(v[j])$ for all $j$. Thus,
    $o[i]+v[i]<o[i]+\mathcal A(v[i])$, i.e. $v[i]<\mathcal A(v[i])$. It
    follows that $\mathcal A(v[i])=\infty$. Then, $v[i]>T$ and since
    $o[i]\geq0$ it follows that $o(v)>T$.
  \end{lemlist}
\end{proof}

\begin{lem}\label{lemma:c'bullet}
  $p\,\bullet\sqsubseteq$ is $b_2$-compatible, where
  $p\colon Rel_{\mathbb S^X} \to Rel_{\mathbb S^X}$ is the monotone
  function assigning to each relation $R$ its pre-congruence closure.
\end{lem}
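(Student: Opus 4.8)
The plan is to obtain the statement directly from the compositionality machinery for compatible up-to techniques recalled earlier. Concretely, $p$ is already known to be $\bin$-compatible (\cref{lemma:c'compatible}), the constant map $\sqsubseteq$ is $\bin$-compatible when $\sqsubseteq$ is a precongruence (\cref{compatibilitybin}), and the last clause of \cref{lemmacompositionality} tells us that the relation composition $f\bullet g$ of two $\bin$-compatible maps is again $\bin$-compatible, \emph{provided} the sub-distributivity law $\bin(R)\bullet \bin(S)\subseteq \bin(R\bullet S)$ holds for all relations $R,S$. Since $p\bullet\sqsubseteq$ is by construction the relation composition of the two $\bin$-compatible maps $p$ and (the constant map) $\sqsubseteq$, the only thing that remains to be checked is this sub-distributivity law for $\bin$.

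To verify it, I would take a pair $(v_1,v_3)\in \bin(R)\bullet \bin(S)$ and pick, by the definition of relational composition, a vector $v_2$ with $(v_1,v_2)\in \bin(R)$ and $(v_2,v_3)\in \bin(S)$. Unfolding the definition of $\bin$ yields $o(v_1)\sqsubseteq o(v_2)$ and $o(v_2)\sqsubseteq o(v_3)$, hence $o(v_1)\sqsubseteq o(v_3)$ by transitivity of $\sqsubseteq$; moreover, for every $a\in A$ we get $(t_a(v_1),t_a(v_2))\in R$ and $(t_a(v_2),t_a(v_3))\in S$, so $(t_a(v_1),t_a(v_3))\in R\bullet S$. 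Therefore $(v_1,v_3)\in \bin(R\bullet S)$, which is exactly the required inclusion.

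With the sub-distributivity law established, \cref{lemmacompositionality} immediately gives that $p\bullet\sqsubseteq$ is $\bin$-compatible, completing the proof. I do not expect any genuine obstacle here: the argument is mechanical once the compositionality results are available, and the sub-distributivity check uses nothing beyond transitivity of $\sqsubseteq$ and the shape of $\bin$. The only point worth keeping in mind is that, as in \cref{compatibilitybin}, this lemma relies on $\sqsubseteq$ being a precongruence (which is automatic in the $l$-monoid setting of interest, where $\sqsubseteq$ is the lattice order and multiplication is monotone).
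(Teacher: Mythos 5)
Your proposal is correct and follows exactly the paper's own argument: cite \cref{lemma:c'compatible} and \cref{compatibilitybin} for the compatibility of $p$ and the constant map $\sqsubseteq$, check the sub-distributivity law $\bin(R)\bullet\bin(S)\subseteq\bin(R\bullet S)$, and conclude via the last clause of \cref{lemmacompositionality}. The only difference is that you spell out the sub-distributivity check (which the paper dismisses as ``easy to check''), and your verification of it is correct.
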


\begin{proof}
  By \cref{compatibilitybin} and \cref{lemma:c'compatible},
  $\sqsubseteq$ and $p$ are $b_2$-compatible. It is easy to check that
  for all relations $R, S$, it holds that
  $b_2(R)\bullet b_2(S)\subseteq b_2(R\bullet S)$. Therefore, by
  \cref{lemmacompositionality}, $p\,\bullet\geq$ is $b_2$-compatible.
\end{proof}

\subsection*{Proof of \cref{thm:modifalgcorrect}}
\begin{proof}
Termination is obvious as there are only finitely many vectors with entries from the set $\{0,1,2,.., T, \infty\}$.
  We now prove soundness. Observe that the following is an invariant
  for the while loop at step
  (3).$$R\subseteq b_2((p(R)\cup\mathit{todo})\bullet \sqsubseteq)$$since
  $\mathcal A(t_a(v_1')) \sqsubseteq t_a(v_1') $, i.e.
  $\mathcal A(t_a(v_1')) \geq t_a(v_1')$. If $\HKPA$ returns
  $\mathit{true}$ then $\mathit{todo}$ is empty and thus
  $R\subseteq b_2(p(R)\bullet\sqsubseteq)$. By \cref{lemma:c'bullet},   \cref{thm:coinupto} and
  \cref{lemma:greteastbin},
  $e_t\precsimu v_1$, i.e. $T\geq \llbracket v_1\rrbracket(w)$ for all
  $w\in A^*$.
	
	The converse implication is more elaborated than its analogous in \cref{thm:langequicong}. Assume that $\HKPA$ yields false, then a vector $v_1'$ was found such that $o(v_1')>T$. This means there exists a word $w=a_1a_2...a_n$ such that
\begin{equation*}\begin{aligned}
	v_1'=&t_{a_n}(\mathcal A(t_{a_{n-1}}(\mathcal A(t_{a_{n-2}}(...\mathcal A(t_{a_1}(v_1))...)))))\\
	\leq&\mathcal A(t_{a_n}(\mathcal A(t_{a_{n-1}}(\mathcal A(t_{a_{n-2}}(...\mathcal A(t_{a_1}(v_1))...))))))
\end{aligned}\end{equation*}
Now we can apply \cref{lems:Atacompat} and eliminate all inner $\mathcal A$-applications, yielding $v_1'\leq\mathcal A(t_{a_n}(t_{a_{n-1}}(t_{a_{n-2}}(...(t_{a_1}(v_1))...))))$. For easier reading we will now call \linebreak $v':=t_{a_n}(t_{a_{n-1}}(t_{a_{n-2}}(...(t_{a_1}(v_1))...)))$. Since $o(v_1')>T$, certainly $o(\mathcal A(v'))>T$ due to transitivity. Since $\mathcal A$ is increasing, $o(\mathcal A(v'))\leq\mathcal A(o(\mathcal A(v')))$, which is, according to \cref{lems:Atacompat}, $\mathcal A(o(v'))$. Due to transitivity of $>$ we can conclude $\mathcal A(o(v'))>T$, i.e. $\mathcal A(o(v'))=\infty$. According to the definition of $\mathcal A$, it follows that $o(v')>T$, therefore $w$ is indeed a witness for the automaton not to respect the threshold.
\end{proof}

\subsection*{Exploiting Similarity}


\subsection*{Proof of \cref{simulationimplieslanguageinclusion}}

\begin{proof}
	
  We will prove this inductively, by showing that
  $\llbracket v \rrbracket(w)\sqsubseteq\llbracket v' \rrbracket(w)$ for all
  $w\in\Sigma^*$.
  \begin{itemize}
  \item \textbf{Induction start ($|w|=0$):} In this case,
    $\llbracket v \rrbracket(w)=\llbracket v
    \rrbracket(\epsilon)=o(v)\sqsubseteq o(v')=\llbracket v'
    \rrbracket(\epsilon) = \llbracket v' \rrbracket(w)$.
  \item \textbf{Induction hypothesis:} For all words $w$ where
    $|w|\leq n$ it holds that
    $\llbracket v \rrbracket(w)\sqsubseteq \llbracket v' \rrbracket(w)$.
  \item \textbf{Induction step ($n\rightarrow n+1$):} Let $w$ be given
    such that $|w|=n+1$. Then $w$ can be written as $w=aw'$,
    $a\in\Sigma$, $w'\in\Sigma^*$. Since $(v,v')\in R$, there must
    exist $(u,u')$, $(v_1,v_1'),(v_2, v_2'),\dots,(v_m,v_m')\in R$,
    $s_1, s_2, \dots, s_m\in\mathbb L$ such that
    $u=\bigsqcup\{v_i\cdot s_i\mid 1\leq i\leq m\}$,
    $u'=\bigsqcup\{v_i'\cdot s_i\mid 1\leq i\leq m\}$ and
    $t_a(v)\sqsubseteq u$, $u'\sqsubseteq t_a(v')$. Using monotonicity of $\cdot$
    and $\sqcup$ (wrt.\ $\sqsubseteq$), we obtain the following two
    inequalities:
    $$\llbracket v \rrbracket(w) = \llbracket v \rrbracket(aw') = 
    \llbracket t_a(v) \rrbracket(w')\sqsubseteq \llbracket u \rrbracket(w')$$
    $$\llbracket u' \rrbracket(w')\sqsubseteq \llbracket t_a(v')
    \rrbracket(w') = \llbracket v' \rrbracket(aw') = \llbracket v'
    \rrbracket(w)$$
    Now we can apply the induction hypothesis, keeping in mind that
    $|w'|=n$: Since
    $\llbracket v_i \rrbracket(w') \sqsubseteq \llbracket v'_i
    \rrbracket(w')$
    for all $1\leq i\leq m$. Applying again monotonicity of $\cdot$
    and $\sqcup$, as well as the definition of $u$ and $u'$, we get
    $\llbracket u \rrbracket(w')\sqsubseteq\llbracket u' \rrbracket(w')$.
    Finally, transitivity yields
    $\llbracket v \rrbracket(w)\sqsubseteq\llbracket v' \rrbracket(w)$.
	\end{itemize}
\end{proof}

\subsection*{Proof of \cref{lem:greatestsim}}

\begin{proof}
  First observe that since simulations are closed under union, there
  exists a greatest simulation relation on unit vectors.
  \begin{itemize}
  \item Due to the nature of the algorithm, it necessarily terminates
    after finitely many steps: In the beginning, $R$ contains only
    finitely many pairs of vectors and in each iteration, either some
    pairs are removed from $R$, or $R$ does not change, but in the
    latter case the algorithm terminates.
	
  \item The result is always a simulation relation, this can be seen
    as follows: Let $R$ be the result of a run of the algorithm in
    \cref{fig:sim} and $(v,v')\in R$. Then
		
    \begin{itemize}
    \item It holds that $o(v)\sqsubseteq o(v')$, because otherwise,
      the pair $(v,v')$ would have been removed from $R$ in the first
      foreach loop.
    \item Furthermore 
      $$t_a(v)\sqsubseteq \bigsqcup\{v_1\cdot(v_2\rightarrow
      t_a(v'))\mid (v_1, v_2)\in R\}=:u$$
      Moreover, for all $(v_1, v_2)\in R$,
      $v_2\cdot(v_2\rightarrow t_a(v'))\sqsubseteq t_a(v')$ holds per
      definition of residuation. Therefore,
      $$u':=\bigsqcup\{v_2\cdot(v_2\rightarrow t_a(v'))\mid (v_1, v_2)\in R\}\sqsubseteq t_a(v')$$
      holds, as well. This means that $(u,u')$ is a linear
      combination of $R$-vectors and $t_a(v)\sqsubseteq u$,
      $u'\sqsubseteq t_a(v')$.
    \end{itemize}
  \item Now we will show inductively, that there cannot exist any
    greater simulation relation. The algorithm starts with the full
    cross-product of unit vectors and removes all pairs $(v,v')$ where
    $o(v)\sqsubseteq o(v')$ does not hold -- meaning that these pairs cannot
    be in a simulation relation at all. Thus, before we enter the
    nested foreach loops, $R$ is a superset of (or equal to) the
    greatest simulation relation on $\alpha$. We will now show that,
    whenever a pair of vectors is removed in the nested foreach loops,
    it cannot be contained in a simulation relation, therefore proving
    that after each execution of the nested foreach loops, $R$ retains
    the property to be a superset of (or equal to) the greatest
    simulation relation on $\alpha$.
	
    Assume that $(v,v')$ is an element of the greatest simulation
    relation
    $R'=\{(v_1,v_1'), \linebreak (v_2,v_2'), \dots, (v_n, v_n')\}$.
    Then there must exist multiplicands $s_1, s_2, \dots, s_n$ such
    that $t_a(v)\sqsubseteq \bigsqcup\{v_i\cdot s_i\mid 1\leq i\leq n\}$ and
    $\bigsqcup\{v_i'\cdot s_i\mid 1\leq i\leq n\}\sqsubseteq t_a(v')$. From
    this it follows that for any given $1\leq i\leq n$, it holds that
    $v_i'\cdot s_i\sqsubseteq t_a(v')$ and therefore
    $s_i\sqsubseteq \bigsqcup\{\ell\mid v_i'\cdot\ell\sqsubseteq
    t_a(v')\}=v_i'\rightarrow t_a(v')$,
    since it is included in the set. We can therefore compute, using
    $R'\sqsubseteq R$, which is the induction hypothesis:
    \begin{equation*}
      \begin{aligned}
        u &= \bigsqcup\{v_1\cdot(v_2\rightarrow t_a(v'))\mid (v_1,
        v_2)\in R\}\\&\geq\bigsqcup\{v_1\cdot(v_2\rightarrow
        t_a(v'))\mid (v_1, v_2)\in
        R'\}\\&=\bigsqcup\{v_i\cdot(v_i'\rightarrow t_a(v'))\mid 1\leq
        i\leq n\}\\&\geq\bigsqcup\{v_i\cdot s_i\mid 1\leq i\leq
        n\}\geq t_a(v)
      \end{aligned}
    \end{equation*} Thus, the
    if-condition $t_a(v)\not\sqsubseteq u$ in the second-to-last line does not
    evaluate to true, meaning that $(v,v')$ will not be removed from $R$
    in the current iteration.
	\end{itemize}
\end{proof}

\subsection*{Proof of \cref{lem:sim-complexity}}

\begin{proof}
  Assuming all semiring operations (supremum, multiplication,
  residuation) consume constant time, the runtime of the algorithm
  can be analysed as follows: the for-loop in line~\texttt{(3)} is
  executed $|X|^2$ many times, once for each element of $R$, which is
  initialised as all pairs of unit vectors of dimension $|X|$, of
  which there are exactly $|X|$ many. The while-loop in
  line~\texttt{(4)} is executed until $R$ remains constant for one
  iteration. Since $R$ contains at most $|X|^2$ many elements in the
  beginning (if no pair was thrown out in the preceding for-loop), and
  in each step of the inner for all-loop, \texttt{(4.2.1)}, $R$
  remains either constant or a pair of vectors gets taken out of $R$.
  The for all-loop in line 4.2.1 is executed $|A|$-times each time,
  and the inner for all loop is executed $|R|$-times. While line
  \texttt{(4.2.1.2)} only takes constant time, line \texttt{(4.2.1.1)}
  takes $|X|\cdot|R|$ many steps. At worst, the time taken by the
  whole while-loop in line~\texttt{(4)} therefore is
  $|A|\cdot\sum_{i=1}^{|X|^2}((|X|^2-i)^2\cdot|X|)\in\mathcal
  O(|A|\cdot|X|^7)$,
  if in each iteration exactly one pair of vectors gets removed from
  $R$. Obviously, the latter loop dominates the former, so the run
  time is in $\mathcal O(|A|\cdot|X|^7)$.
\end{proof}

\subsection*{Proof of \cref{lemmaHKP'}}

\begin{proof}
  For soundness, we use the invariant
  $R\subseteq \bin (p'(R)\cup todo)$, which allows to conclude that
  $R\subseteq \bin (p'(R))$. Now $p'$ is not guaranteed to be
  compatible, but $p''$ defined for all relations $R$ as
  $p''(R)=p(R\, \cup \precsimu)$ is compatible by
  \cref{compatibilitybin} and \cref{lemmacompositionality}. By
  \cref{simulationimplieslanguageinclusion},
  $\preceq \subseteq \precsimu$. By monotonicity of $p$, we have that
  $p'(R)=p(R\cup \preceq) \subseteq p(R\cup \precsimu) = p''(R)$.
  By monotonicity of $\bin$, $R\subseteq \bin (p''(R))$. By
  \cref{thm:coinupto} and \cref{lemma:greteastbin}, $v_1\precsimu v_2$.
  For completeness, we proceed in the same way as in
  \cref{thm:langequicong}.
\end{proof}

\subsection*{Proof of \cref{thm:HKPA'sound}}

\begin{proof}
  For termination and completeness we reuse the same argument as in
  \cref{thm:modifalgcorrect}.  For soundness we need to combine the
  proof of \cref{lemmaHKP'} and of \cref{thm:modifalgcorrect}: we use
  the invariant
  $R\subseteq \bin ((p'(R)\cup todo)\bullet \sqsubseteq)$, which
  allows to conclude that
  $R\subseteq \bin (p'(R)\bullet \sqsubseteq)$. Now $p'$ is not
  guaranteed to be $b_2$-compatible, but $p''$, as defined in the
  proof of \cref{lemmaHKP'} is.  By \cref{lemmacompositionality},
  $p'' \bullet \sqsubseteq$ is compatible, since $ \sqsubseteq$ is
  compatible.  By monotonicity of $p$, we have that
  $p'(R) \bullet \sqsubseteq\ =\ p(R\ \cup \preceq) \bullet
  \sqsubseteq \ \subseteq\ p(R\ \cup \precsimu)\bullet \sqsubseteq\ =\
  p''(R)\bullet \sqsubseteq$.
  By monotonicity of $\bin$,
  $R\subseteq \bin (p''(R)\bullet \sqsubseteq)$. By
  \cref{thm:coinupto} and \cref{lemma:greteastbin}, $v_1\precsimu v_2$.
\end{proof}

\section{Shortest Path Problem in Directed Weighted Graphs}
\label{sec:dijkstra}

The rewriting algorithm to compute the congruence closure over the tropical semiring
is closely related to the Dijkstra algorithm to find the shortest paths in directed weighted
graphs.

A \emph{weighted directed graph} is a couple $G=(V, \mathsf{weight})$ where $V=\{ 1, 2, \ldots, n\}$ is the set of vertices and 
 $\mathsf{weight}:E\rightarrow\mathbb R_0^+ \cup \{\infty\}$ is a function assigning to each pair of vertexes $v,w$ the weight to move from $v$ to $w$. Intuitively, the weight is $\infty$ if there is no way (no edges) to go from $v$ to $w$.

\begin{defi}[Rewriting System of a Graph]
  Let $G=(V,E, \mathsf{weight})$ be a weighted, directed
  graph and $e_i$ be the $i$-th unit vector of dimension $|V|$ in
  $\mathbb T$ and fix
  $v_i $ to be the vector representing the outgoing arrows
  from the $i$-th vertex, i.e.
	$v_i[j]=\mathsf{weight}((i, j))$. The rewriting system associated to $G$ is $\mathcal{R}_G=\{e_i \mapsto v_i \mid 1\leq i \leq |V|\}$.
\end{defi}

\begin{prop} 
  For a graph $G$ and vertex $i$, let $\Downarrow\! e_i$ be the normal
  form reached with $\mathcal{R}_G$. Then for all vertexes $j$,
  $\Downarrow\! e_i[j]$ is weight of the shortest path from $i$ to
  $j$.
\end{prop}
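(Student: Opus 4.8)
The plan is to first make sure $\Downarrow\! e_i$ is well-defined and then to identify a single rewriting step of $\mathcal R_G$ with edge relaxation, so that the statement reduces to the standard correctness of the Bellman--Ford/Dijkstra fixpoint equations. Since $\mathcal R_G$ is a finite set of rewriting rules over $\mathbb T$, rewriting with it terminates: the argument used to prove \cref{thm:rewritingterminates01maxtimes} only exploits finiteness of the rule set and the order-theoretic structure of $\mathbb T$, not the particular shape of the rules arising from \cref{def:RewritingNF}. Together with \cref{lem:ChurchRosser} this yields confluence, hence a unique normal form $\Downarrow\! e_i$ for each $i$.

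Next I would compute what a step with a rule $e_k\mapsto v_k$ does, keeping in mind that in $\mathbb T$ the order $\sqsubseteq$ is $\ge$, $\sqcup$ is $\min$, $\cdot$ is $+$, the bottom element $0$ of the semiring is $\infty$ and its unit $1$ is $0$. By \cref{lem:alternatedefofvectorresiduation} and the description of the residuum in $\mathbb T$, for any vector $w$ one obtains $e_k\to w=\bigsqcap\{\,e_k[p]\to w[p]\mid p\in V\,\}=w[k]$, using $0\to w[k]=w[k]$, $\infty\to w[p]=0$ for $p\neq k$, $\bigsqcap=\max$ and $w[k]\ge 0$. Hence $w$ rewrites to $w\sqcup v_k\cdot w[k]$, whose $j$-th entry equals $\min\!\bigl(w[j],\,w[k]+\mathsf{weight}((k,j))\bigr)$; that is, a rewriting step is exactly the relaxation of all edges leaving $k$, and it is applicable iff some such relaxation strictly decreases an entry. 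Consequently, a vector $d$ is in normal form iff $d[j]\le d[k]+\mathsf{weight}((k,j))$ for all $k,j\in V$.

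Now write $\delta_i[j]$ for the weight of a shortest path from $i$ to $j$ (with $\delta_i[j]=\infty$ if there is none); I would prove $\Downarrow\! e_i[j]=\delta_i[j]$ by two inequalities. For $\Downarrow\! e_i\sqsubseteq\delta_i$, i.e.\ $\Downarrow\! e_i[j]\ge\delta_i[j]$, I would establish the invariant ``$w[j]\ge\delta_i[j]$ for all $j$'' for every $w$ with $e_i\leadsto^* w$: it holds for $e_i$, since $e_i[i]=0=\delta_i[i]$ and $e_i[j]=\infty\ge\delta_i[j]$, and it is preserved by a relaxation step because, using the triangle inequality $\delta_i[j]\le\delta_i[k]+\mathsf{weight}((k,j))$, one has $\min(w[j],w[k]+\mathsf{weight}((k,j)))\ge\min(\delta_i[j],\delta_i[k]+\mathsf{weight}((k,j)))=\delta_i[j]$. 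For $\Downarrow\! e_i\sqsupseteq\delta_i$, i.e.\ $\Downarrow\! e_i[j]\le\delta_i[j]$, I would take any path $i=u_0\to u_1\to\cdots\to u_m=j$ and show by induction on $m$ that $\Downarrow\! e_i[u_m]$ is at most the weight of this path: for $m=0$ one has $\Downarrow\! e_i[i]\le e_i[i]=0$ because a rewriting step $w\leadsto w'$ satisfies $w\sqsubseteq w'$, i.e.\ $w'\le w$ numerically; for the step, $\Downarrow\! e_i$ being a normal form gives $\Downarrow\! e_i[u_m]\le\Downarrow\! e_i[u_{m-1}]+\mathsf{weight}((u_{m-1},u_m))$, while the induction hypothesis bounds $\Downarrow\! e_i[u_{m-1}]$ by the weight of the prefix path. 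Minimising over all paths yields $\Downarrow\! e_i[j]\le\delta_i[j]$, and together with the first inequality this proves the claim.

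The only genuinely delicate part is the bookkeeping imposed by the reversed lattice order of $\mathbb T$: one must consistently rewrite $\sqsubseteq,\sqcup,\sqcap,\cdot,\bot,\top$ as $\ge,\min,\max,+,\infty,0$ in order to see the abstract step of \cref{def:RewritingNF} as ordinary edge relaxation (and, relatedly, to double-check that the termination argument of \cref{thm:rewritingterminates01maxtimes} really covers the rule system $\mathcal R_G$, which is not literally of the form produced from a relation). Once that translation is in place, the two inequalities above are precisely the classical ``never underestimate'' and ``path witness'' invariants, and no further ideas are needed.
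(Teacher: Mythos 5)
Your proof is correct, and it is worth noting that the paper itself offers \emph{no} formal proof of this proposition: after the statement it only remarks informally that choosing a rule corresponds to choosing a vertex and that applying a rule corresponds to updating distances of adjacent vertices, followed by a worked example. Your argument supplies exactly the missing content. The translation of a rewriting step is right: with $\sqsubseteq\,=\,\ge$, $\sqcup=\min$, $\bigsqcap=\max$ and $\cdot=+$, one indeed gets $e_k\to w=w[k]$ via \cref{lem:alternatedefofvectorresiduation} (since $\infty\to w[p]=0$ for $p\neq k$ and $0\to w[k]=w[k]$), so a step with $e_k\mapsto v_k$ is precisely simultaneous relaxation of all edges out of $k$, and normal forms are exactly the vectors satisfying all triangle inequalities. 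From there the two inequalities are the standard ``never underestimate'' invariant (preserved by relaxation via $\delta_i[j]\le\delta_i[k]+\mathsf{weight}((k,j))$) and the ``path witness'' induction, and both are carried out correctly; note that these two bounds already force every normal form reachable from $e_i$ to equal $\delta_i$, so confluence is a corollary rather than a prerequisite here. The one point you rightly flag -- that $\mathcal R_G$ is not literally of the shape produced by \cref{def:RewritingNF} -- is handled adequately by your observation that the termination proof of \cref{thm:rewritingterminates01maxtimes} never uses $r=l\sqcup l'$; alternatively, one can observe that the rule $e_i\mapsto v_i$ induces the same rewriting steps as $e_i\mapsto e_i\sqcup v_i$ (because $w\sqcup e_i\cdot w[i]=w$), so $\mathcal R_G$ is equivalent to a subsystem of the system generated from the relation $\{(e_i,v_i)\mid i\in V\}$, to which the theorem applies verbatim.
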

%

In order for rewriting to behave exactly like Dijkstra's algorithm, we
need to always choose the rule with the smallest multiplicand that is
applicable -- i.e. the greatest one according to the order of the
lattice. Choosing a rule corresponds to choosing a vertex to
explore, determining the multiplicand corresponds to finding the
weight accumulated from the starting vertex to the vertex to be
explored next and applying the rule corresponds to updating the
distances of all adjacent vertices that can be reached via a shorter
path than the currently known shortest path.
%
%
The following example shows the rewriting at work.

\begin{ex}
	Consider the following directed graph:
	\begin{center}
    \begin{tikzpicture}[x=2cm,y=-2cm,double distance=2pt]
      \node[state] (b) at (1,0) {$x_1$} ;
      \node[state] (d) at (3,0) {$x_2$} ;
      \node[state] (e) at (2,1.33) {$x_3$} ;
      \begin{scope}[->]
        \path 
          (b) edge[bend right=10] node[arlab,below left] {$2$} (e) 
              edge[bend left=10] node[arlab,above] {$3$} (d) 
              edge[loop above] node[arlab] {$0$} (b) ;
        \path[shorten <=1pt] 
          (d) edge[bend right=10,shorten >=1pt] node[arlab,above left] {$5$} (e) 
              edge[loop above] node[arlab] {$0$} (d) ;
        \path
          (e) edge[bend right=10,shorten >=1pt] node[arlab,below right] {$7$} (d) 
              edge[bend right=10,shorten >=1pt] node[arlab,above right] {$1$} (b) 
              edge[loop below] node[arlab] {$0$} (e) ;
      \end{scope}
    \end{tikzpicture}
  \end{center}
  This graph corresponds to the following rule system:
  $$\mathcal R=\left\{ \begin{pmatrix}0\\\infty\\\infty
    \end{pmatrix}\mapsto \begin{pmatrix}0\\3\\2
    \end{pmatrix},\begin{pmatrix}\infty\\0\\\infty \end{pmatrix}
    \mapsto \begin{pmatrix}\infty\\0\\5
    \end{pmatrix},\begin{pmatrix}\infty\\\infty\\0
    \end{pmatrix}\mapsto \begin{pmatrix}1\\7\\0 \end{pmatrix}
  \right\}$$
  Now we can, for instance, determine the weights of the shortest paths to all vertices starting in the vertex $x_3$ as follows:
  $$\begin{pmatrix}\infty\\\infty\\0 \end{pmatrix}\leadsto
  1+\begin{pmatrix}0\\\infty\\\infty
  \end{pmatrix}\min\begin{pmatrix}1\\7\\0\end{pmatrix} = \begin{pmatrix}1\\7\\0\end{pmatrix} \leadsto 1+\begin{pmatrix}0\\3\\2 \end{pmatrix}\min\begin{pmatrix}1\\7\\0\end{pmatrix}=\begin{pmatrix}1\\4\\0\end{pmatrix}$$
  Afterwards, no more rewriting rule is applicable.

  Note that rewriting is non-deterministic and we could have chosen
  another path.  If we choose to apply the rule that has smallest
  coefficient, we effectively simulate Dijkstra's algorithm.

\end{ex}

}

\end{document}